\newcommand{\email}[1]{{\tt #1}}
\not \isundefined{\Section}}        
\not \isundefined{\ead}}        
  \not \isundefined{\volume} \and 
  \not \isundefined{\issue} \and 
  \not \isundefined{\copyrightowner} \and 
  \not \isundefined{\articletitle} \and 
  \not \isundefined{\pubyear}}
  \not \isundefined{\subjclass} \and
  \not \isundefined{\titlecomment} \and
  \not \isundefined{\revisionname} \and
  \not \isundefined{\lmcsheading}}
  \not \isundefined{\IEEEtransversionminor} \and 
  \not \isundefined{\IEEEitemize} \and 
  \not \isundefined{\IEEEenumerate} \and 
  \not \isundefined{\IEEEdescription}}
  \not \isundefined{\email} \and 
  \not \isundefined{\keywords} \and 
  \not \isundefined{\sanhao} \and 
  \not \isundefined{\wuhao}}
  \not \isundefined{\institutename} \and 
  \not \isundefined{\email} \and 
  \not \isundefined{\fnmsep}}
  \not \isundefined{\acmNumber} \and 
  \not \isundefined{\acmArticle} \and 
  \not \isundefined{\acmYear} \and 
  \not \isundefined{\acmMonth}}
\not \isundefined{\disputationsdatum} 
\not \isundefined{\disputationslokal}}   
  \or \boolean{maybeSTOC} \or \boolean{maybeFOCS}
  \or \boolean{maybeSIAM} \or \boolean{maybeIEEE}
  \or \boolean{maybeICS} \or \boolean{maybePoster}}
\or \boolean{maybeSIAM} \or \boolean{maybeLMCS}
 \or \boolean{maybeNOW} \or \boolean{maybeLNCS} \or \boolean{maybeACM}}
\or \boolean{maybeFOCS} \or \boolean{maybeSTOC}
  \or \boolean{maybePoster} \or \boolean{maybeSIAM} \or \boolean{maybeLMCS}
  \or \boolean{maybeIEEE}   \or \boolean{maybeNOW}  \or \boolean{maybeICS}
  \or \boolean{maybeThesis} \or \boolean{maybeLNCS} \or \boolean{maybeACM}} 
\DeclareMathAlphabet{\mathsfsl}{OT1}{cmss}{m}{sl}
\newcommand{\eqperiod}{\enspace .}
\newcommand{\eqcomma}{\enspace ,}
\newcommand{\introduceterm}[1]{{\emph{#1}}}
\newcommand{\ie}{i.e.,\ }
\newcommand{\etal}{et al.\@\xspace}
\newcommand{\Bigoh}[1]{\mathrm{O} \bigl( #1 \bigr)}
\newcommand{\bigoh}[1]{\mathrm{O} ( #1 )}
\newcommand{\littleoh}[1]{\mathrm{o} ( #1 )}
\newcommand{\Bigomega}[1]{\Omega \bigl( #1 \bigr)}
\newcommand{\bigomega}[1]{\Omega ( #1 )}
\newcommand{\problemlanguageformat}[1]{\textsc{#1}\xspace}
\newcommand{\MINIMALUNSATISFIABILITY}%
  {\problemlanguageformat{minimal unsatisfiability}}
\newcommand{\complclassformat}[1]{\textrm{\upshape{\textsf{#1}}}\xspace}
\newcommand{\cocomplclass}[1]%
        {\mbox{\complclassformat{co}-\complclassformat{#1}}\xspace}
\newcommand{\Pclass}{\complclassformat{P}}
\newcommand{\NP}{\complclassformat{NP}}
\newcommand{\NPclass}{\NP}
\newcommand{\coNP}{\cocomplclass{NP}}
\newcommand{\EXPTIME}{\complclassformat{EXPTIME}}
\newcommand{\refsec}[1]{Section~\ref{#1}}
\newcommand{\refth}[1]{Theorem~\ref{#1}}
\newcommand{\reflem}[1]{Lemma~\ref{#1}}
\newcommand{\reftwolems}[2]{Lemmas~\ref{#1} and~\ref{#2}}
\newcommand{\refrem}[1]{Remark~\ref{#1}}
\newcommand{\Refth}[1]{Theorem~\ref{#1}}
\newcommand{\refeq}[1]{\eqref{#1}}}
\renewcommand{\refeq}[1]{\eqref{#1}}}
\newcommand{\ceiling}[1]{\lceil #1 \rceil}
\newcommand{\Rplus}     {\mathbb{R}^{+}}
\newcommand{\F}{\mathbb{F}}
\DeclareMathOperator{\Expop}{E}
\newcommand{\Prob}[2][]{\Pr_{#1} \bigl[ #2 \bigr]}
\newcommand{\twincommandJN}[6]%
    {#1#2#3\vphantom{#2#5}\mspace{-2.25mu}#4.#5#6}
\newcommand{\CondExp}[2]%
    {\Expop\twincommandJN{\bigl[}{#1}{\bigl|}{\bigr}{\,#2}{\bigr]}}
\newcommand{\CONDEXP}[2]%
     {\Expop\twincommandJN{\left[}{#1}{\left|}{\right}{\,#2}{\right]}}
\newcommand{\CondProb}[3][]%
    {\Pr_{#1}\twincommandJN{\bigl[}{#2}{\bigl|}{\bigr}{\,#3}{\bigr]}}
\newcommand{\CONDPROB}[3][]%
    {\Pr_{#1}\twincommandJN{\left[}{#2}{\left|}{\right}{\,#3}{\right]}}
\newcommand{\isdistras}[2]{\ensuremath{#1} \sim \ensuremath{#2}}
\newcommand{\funcdescr}[3]{\ensuremath{ #1 : #2 \to #3}}
\newcommand{\setcompact}[1]{{\ensuremath{\bigl\{ #1 \bigr\}}}}
\newcommand{\setdescrcompact}[3][\mid]{{\setcompact{ #2 #1 #3 }}}
\newcommand{\set}[1]{\{ #1 \}}
\newcommand{\Setdescr}[3][|]%
     {\twincommandJN{\bigl\{}{#2}{\bigl#1}{\bigr}{\,#3}{\bigr\}}}
\newcommand{\SETDESCR}[3][|]%
     {\twincommandJN{\left\{}{#2}{\left#1}{\right}{\,#3}{\right\}}}
\newcommand{\Setdescrbrackets}[3][|]%
     {\twincommandJN{\bigl[}{#2}{\bigl#1}{\bigr}{\,#3}{\bigr]}}
\newcommand{\SETDESCRBRACKETS}[3][|]%
     {\twincommandJN{\left[}{#2}{\left#1}{\right}{\,#3}{\right]}}
\newcommand{\setsize}[1]{\lvert#1\rvert}
\newcommand{\intersection}{\cap}
\newcommand{\DisjointunionInText}%
    {{\smash{\overset{\mbox{\boldmath{.}}}{\bigcup}}}\vphantom{\bigcup}}
\newcommand{\intnfirst}[1]{[{#1}]}
\newcommand{\Lor}{\bigvee}
\newcommand{\olnot}[1]{\overline{#1}}
\newcommand{\stdnot}[1]{\olnot{#1}}
\newcommand{\cnfform}{\cnfshort for\-mu\-la\xspace}
\newcommand{\cnfshort}{CNF\xspace}
\newcommand{\xcnfform}[1]{\mbox{\ensuremath{#1}-}\cnfform}
\newcommand{\kcnfform}{\xcnfform{\clwidth}}
\newcommand{\xclause}[1]{\mbox{\ensuremath{#1}-clause}\xspace}
\newcommand{\nvar}{n}
\newcommand{\nclause}{m}
\newcommand{\clwidth}{k}
\newcommand{\randkcnfnclwrepl}[3][\clwidth]%
        {\ensuremath{\mathcal{F}^{#2, #3}_{#1}}}
\newcommand{\randkcnfnclwreplstd}%
        {\randkcnfnclwrepl{\clwidth}{\nvar}{\nclause}}
\newcommand{\israndkcnfnclwrepl}[4]%
  {\isdistras{#1}{\randkcnfnclwrepl[#2]{#3}{#4}}}
\newcommand{\randkcnfprobcl}[3]%
        {\ensuremath{\mathcal{F}^{#2}_{#1} \bigl(#3 \bigr)}}
\newcommand{\pcfor}[4][to]{for #2 := #3 #1 #4 do}
\newcommand{\pcformath}[4][to]%
    {\pcfor[#1]{\ensuremath{#2}}{\ensuremath{#3}}{\ensuremath{#4}}}
\newcommand{\pcassigncompact}[2]{#1 := #2}
\newcommand{\pcassignmathcompact}[2]%
        {\pcassigncompact{\ensuremath{#1}}{\ensuremath{#2}}}
\newcommand{\inductionformat}[1]{\textit{#1}}
\newcommand{\BASE}[1][]
        {\inductionformat
                {%
                        \ifthenelse{\equal{#1}{}}%
                                {Base case: }%
                                {Base case (#1):}%
                }%
        }
\not \boolean{maybeSTOC}     \and \not \boolean{maybeFOCS}
\not \boolean{maybeElsevier} \and \not \boolean{maybePoster} 
\not \boolean{maybeSIAM}     \and \not \boolean{maybeACM}
\not \boolean{maybeIEEE}     \and \not \boolean{maybeNOW}
\not \boolean{maybeICS}      \and \not \boolean{maybeThesis}
\not \boolean{maybeLNCS}}
\newtheorem{standardlocalcounter}{Dummy}[section]
\newtheorem{standardglobalcounter}{Dummy}
\theoremstyle{plain}    
\newtheorem{theorem}[standardglobalcounter]{Theorem}
\newtheorem{lemma}[standardglobalcounter]{Lemma}
\newtheorem{proposition}[standardglobalcounter]{Proposition}
\newtheorem{corollary}[standardglobalcounter]{Corollary}
\newtheorem{observation}[standardglobalcounter]{Observation}
\newtheorem{fact}[standardglobalcounter]{Fact}
\newtheorem{conjecture}[standardglobalcounter]{Conjecture}
\newtheorem{openquestion}{Open Question}
\newtheorem{openproblem}{Open Problem}
\newtheorem{problem}{Problem}
\theoremstyle{definition}
\newtheorem{property}[standardglobalcounter]{Property}
\newtheorem{definition}[standardglobalcounter]{Definition}
\newtheorem{claim}[standardglobalcounter]{Claim}
\theoremstyle{remark}
\newtheorem{remark}[standardglobalcounter]{Remark}
\newtheorem{example}[standardglobalcounter]{Example}
\newtheoremstyle{meta}
  {3pt}
  {3pt}
  {\scshape \small }
  {}
  {\scshape \small }
  {:}
  { }
  {}
\theoremstyle{meta}
\newtheorem{meta}{Meta comment}
\newtheoremstyle{questions}
  {3pt}
  {3pt}
  {\sffamily \slshape}
  {}
  {\bfseries \sffamily \slshape}
  {:}
  { }
  {}
\theoremstyle{questions}
\newtheorem{questions}{Open questions}
\spnewtheorem*{proofsketch}{Proof sketch}{\itshape}{\rmfamily}
\spnewtheorem{observation}{Observation}{\bfseries}{\itshape}
\spnewtheorem{fact}{Fact}{\bfseries}{\itshape}
\newtheorem{standardlocalcounter}{Dummy}[section]
\newtheorem{standardglobalcounter}{Dummy}
\theoremstyle{plain}    
\newtheorem{theorem}[standardlocalcounter]{Theorem}
\newtheorem{lemma}[standardlocalcounter]{Lemma}
\newtheorem{corollary}[standardlocalcounter]{Corollary}
\theoremstyle{definition}
\theoremstyle{remark}
\newtheorem{remark}[standardlocalcounter]{Remark}
\newtheoremstyle{meta}
  {3pt}
  {3pt}
  {\scshape \small }
  {}
  {\scshape \small }
  {:}
  { }
  {}
\theoremstyle{meta}
\newtheoremstyle{questions}
  {3pt}
  {3pt}
  {\sffamily \slshape}
  {}
  {\bfseries \sffamily \slshape}
  {:}
  { }
  {}
\theoremstyle{questions}
\or \boolean{maybeThesis} \or \boolean{maybeLNCS}}
\def\SetTime{\hours=\time
\global\divide\hours by 60
\minutes=\hours
\multiply\minutes by 60
\advance\minutes by-\time
\global\multiply\minutes by-1 }
\def\now{\number\hours:\ifnum\minutes<10 0\fi\number\minutes}
\newcommand{\formuladots}{\cdots}
\newcommand{\proofstd}{\ensuremath{\pi}}
\newcommand{\deriveswithall}%
        {\vdash_{\!\!\!{\scriptscriptstyle \forall}}} 
\newcommand{\notderiveswithall}%
        {\nvdash_{\!\!\!{\scriptscriptstyle \forall}}} 
\newcommand{\clcfgtransitioncrammed}[2]%
        {\ensuremath{#1 \!\rightsquigarrow\! #2}}
\newcommand{\fstd}{{\ensuremath{F}}}
\newcommand{\emptycl}{0}
\renewcommand{\emptycl}{\bot}
\newcommand{\varx}{\ensuremath{x}}
\newcommand{\lita}{\ensuremath{a}}
\newcommand{\cla}{\ensuremath{A}}
\newcommand{\clb}{\ensuremath{B}}
\newcommand{\clc}{\ensuremath{C}}
\newcommand{\setsofvarsorlitlarge}[2]%
        {\mathit{#1}\left({#2}\right)}
\newcommand{\setsofvarsorlit}[2]%
        {\mathit{#1}({#2})}
\newcommand{\setsofvarsorlitcompact}[2]%
        {\mathit{#1}\bigl({#2}\bigr)}
\newcommand{\setsofvarsorlitsup}[3]%
        {\mathit{#1}^{#2}({#3})}
\newcommand{\setsofvarsorlitsuplarge}[3]%
        {\mathit{#1}^{#2}\left({#3}\right)}
\newcommand{\setsofvarsorlitsupcompact}[3]%
        {\mathit{#1}^{#2}\bigl({#3}\bigr)}
\newcommand{\restr}{\ensuremath{\rho}}
\newcommand{\rstd}{\restr}
\newcommand{\restrict}[2]{{\ensuremath{{#1\!\!\upharpoonright}_{#2}}}}
\newcommand{\derivabbrev}[2]{\bigl( #1 \vdash #2 \bigr)}
\newcommand{\derivabbrevsmall}[2]{( #1 \vdash #2 )}
\newcommand{\derivabbrevcompact}[2]{\bigl( #1 \vdash #2 \bigr)}
\newcommand{\refutabbrevsmall}[1]{\derivabbrevsmall{#1}{\falsenum}}
\newcommand{\refutabbrevcompact}[1]{\derivabbrevcompact{#1}{\falsenum}}
\renewcommand{\refutabbrevsmall}[1]{\derivabbrevsmall{#1}{\emptycl}}
\renewcommand{\refutabbrevcompact}[1]{\derivabbrevcompact{#1}{\emptycl}}
\newcommand{\genericmeasure}[2]{{\mathit{#1}}_{#2}}
\newcommand{\genericrefsmall}[3]%
    {{\mathit{#1}}_{#2}\refutabbrevsmall{#3}}
\newcommand{\genericrefcompact}[3]%
    {{\mathit{#1}}_{#2}\refutabbrevcompact{#3}}
\newcommand{\genericderiv}[4]%
    {{\mathit{#1}}_{#2}\derivabbrev{#3}{#4}}
\newcommand{\genericderivsmall}[4]%
    {{\mathit{#1}}_{#2}\derivabbrevsmall{#3}{#4}}
\newcommand{\genericderivcompact}[4]%
    {{\mathit{#1}}_{#2}\derivabbrevcompact{#3}{#4}}
\newcommand{\generictaut}[3]%
    {{\mathit{#1}}_{#2}\derivabbrev{}{#3}}
\newcommand{\generictautcompact}[3]%
    {{\mathit{#1}}_{#2}\derivabbrevcompact{}{#3}}
\newcommand{\generictautsmall}[3]%
    {{\mathit{#1}}_{#2}\derivabbrevsmall{}{#3}}
\newcommand{\length}[1][]{\genericmeasure{L}{#1}}
\newcommand{\lengthstd}{\length}
\newcommand{\formulaformat}[1]{\ensuremath{\mathit{#1}}}
\renewcommand{\formulaformat}[1]{\mathit{#1}}
\newcommand{\transitionarrow}{\rightsquigarrow}
\newcommand{\pebcfgtransition}[2]%
    {\ensuremath{#1 \transitionarrow #2}}
\newcommand{\pebcfgtransitionsqueeze}[2]%
    {#1 \! \transitionarrow \! #2}
\newcommand{\formatpebblingprice}[1]{\text{\textsl{\textsf{#1}}}}
\newcommand{\Pebblingprice}[1]%
    {\formatpebblingprice{Peb}\bigl(#1\bigr)}
\newcommand{\pebblingpricecompact}[1]
    {\formatpebblingprice{Peb}\bigl(#1\bigr)}
\newcommand{\Bwpebblingprice}[1]%
    {\formatpebblingprice{BW-Peb}\bigl(#1\bigr)}
\newcommand{\bwpebblingpricecompact}[1]
    {\formatpebblingprice{BW-Peb}\bigl(#1\bigr)}
\newcommand{\pebpersistentsymbol}{\bullet}
\newcommand{\pebvisitingsymbol}{\emptyset}
\newcommand{\bwpebpricepersistent}[1]%
    {\formatpebblingprice{BW-Peb}^{\pebpersistentsymbol}(#1)}
\newcommand{\Bwpebpricepersistent}[1]%
    {\formatpebblingprice{BW-Peb}^{\pebpersistentsymbol}\bigl(#1\bigr)}
\newcommand{\bwpebpricevisiting}[1]%
    {\formatpebblingprice{BW-Peb}^{\pebvisitingsymbol}(#1)}
\newcommand{\Bwpebpricevisiting}[1]%
    {\formatpebblingprice{BW-Peb}^{\pebvisitingsymbol}\bigl(#1\bigr)}
\newcommand{\pebpricepersistent}[1]%
    {\formatpebblingprice{Peb}^{\pebpersistentsymbol}(#1)}
\newcommand{\Pebpricepersistent}[1]%
    {\formatpebblingprice{Peb}^{\pebpersistentsymbol}\bigl(#1\bigr)}
\newcommand{\pebpricevisiting}[1]%
    {\formatpebblingprice{Peb}^{\pebvisitingsymbol}(#1)}
\newcommand{\Pebpricevisiting}[1]%
    {\formatpebblingprice{Peb}^{\pebvisitingsymbol}\bigl(#1\bigr)}
\newcommand{\bwpebblingpriceempty}[1]%
    {\formatpebblingprice{BW-Peb}^{\pebvisitingsymbol}(#1)}
\newcommand{\bwpebblingpriceemptycompact}[1]%
    {\formatpebblingprice{BW-Peb}^{\pebvisitingsymbol}\bigl(#1\bigr)}
\newcommand{\stoptime}{\tau}
\newcommand{\pebdeg}{\ensuremath{d}}
\newcommand{\pebaxcompact}[2]%
        [\pebdeg]{\ensuremath{\formulaformat{Ax}^{#1} \bigl(#2 \bigr)}}
\newcommand{\pqrxvar}[6]%
    {\ensuremath{\stdnot{\varx({#1})}_{#2} \lor \stdnot{\varx({#3})}_{#4} \lor %
    \sourceclausexvar[#6]{#5}}}
\newcommand{\pqr}[6]%
    {\ensuremath{\stdnot{#1}_{#2} \lor \stdnot{#3}_{#4} \lor %
    \sourceclausenodisplay[#6]{#5}}}
\newcommand{\pqrstd}{\pqr{p}{i}{q}{j}{r}{l}}
\newcommand{\pqrall}[6]%
        {\setdescrcompact
        {\pqr{#1}{#2}{#3}{#4}{#5}{#6}}{#2,#4 \in \intnfirst{\pebdeg}}}
\newcommand{\pqrallstd}%
        {\setdescrcompact{\pqrstd}{i,j \in \intnfirst{\pebdeg}}}
\newcommand{\sourceclausexvar}[2][n]%
        {\Lor_{#1 = 1}^{\pebdeg} \varx({#2})_{#1}}
\newcommand{\subsourceclausexvar}[3][n]%
        {\Lor_{#1 = {#2}}^{\pebdeg} \varx({#3})_{#1}}
\newcommand{\sourceclausexvarnodisplay}[2][n]%
        {\textstyle \Lor_{#1 = 1}^{\pebdeg} \varx({#2})_{#1}}
\newcommand{\sourceclausenodisplay}[2][n]%
        {\textstyle \Lor_{#1 = 1}^{\pebdeg} #2_{#1}}
\newcommand{\relativisation}[1]%
    {\ensuremath{\formulaformat{Rel}\bigl(#1 \bigr)}}
\newcommand{\extendedversion}[1]{\widetilde{#1}}
\renewcommand{\stoptime}{\tau}
\newcommand{\timet}{t}
\DeclareMathOperator{\dom}{dom}
\newcommand{\function}[3]{\ensuremath{#1\colon\!\signature{#2}{#3}}}
\renewcommand{\function}[3]{\funcdescr{#1}{#2}{#3}}
\newcommand{\cardinality}[1]{\lvert#1\rvert}
\newcommand{\PADistribution}{\mathcal{D}}
\newcommand{\chosenpset}{\mathcal{S}}
\newcommand{\RPHPnot}[3]{\formulaformat{RPHP}^{{#1},{#2}}_{{#3}}}
\newcommand{\rphpknk}{\RPHPnot{k}{n}{k-1}}
\newcommand{\ERPHPnot}[3]%
    {{\vphantom{\formulaformat{RPHP}}%
      \smash{\extendedversion{\formulaformat{RPHP}}}}^{{#1},{#2}}_{#3}}
\newcommand{\erphpknk}{\ERPHPnot{k}{n}{k-1}}
\newcommand{\ERPHP}{\formulaformat{ERPHP}}
\renewcommand{\ERPHPnot}[3]{\ERPHP^{{#1},{#2}}_{#3}}
\renewcommand{\erphpknk}{\ERPHPnot{k}{n}{k-1}}
\newcommand{\PHP}{\formulaformat{PHP}}
\newcommand{\EPHP}{\formulaformat{EPHP}}
\newcommand{\PHPk}{\PHP^{k}_{k-1}}
\newcommand{\TPHP} %
    {\ensuremath{{\vphantom{\PHP}\smash{\extendedversion{\PHP}}}^{k}_{k-1}}\xspace}
\renewcommand{\TPHP}{\ensuremath{\EPHP^{k}_{k-1}}\xspace}
\newcommand{\phpvariableremap}{\delta}
\newcommand{\TrueValue}{\ensuremath{\top}\xspace}
\newcommand{\FalseValue}{\ensuremath{\bot}\xspace}
\renewcommand{\TrueValue}{\top}
\renewcommand{\FalseValue}{\bot}
\newcommand{\PCR}{PCR\xspace}
\newcommand{\SA}{SA\xspace}
\newcommand{\SAR}{SAR\xspace}
\newcommand{\Lasserre}{Lasserre\xspace}
\newcommand{\dualvar}[1]{\ensuremath{\olnot{#1}}}
\newcommand{\dvarx}{\ensuremath{\dualvar{x}}}
\newcommand{\poly}[1]{\ensuremath{\uppercase{#1}}}
\newcommand{\mono}[1]{\ensuremath{\uppercase{#1}}}
\newcommand{\polyp}{\poly{p}}
\newcommand{\polyq}{\poly{q}}
\newcommand{\polyr}{\poly{r}}
\newcommand{\monom}{\mono{m}}
\newcommand{\polyset}[1]{\ensuremath{\mathcal{\uppercase{#1}}}}
\newcommand{\monoset}[1]{\ensuremath{\mathcal{\uppercase{#1}}}}
\newcommand{\nLit}{\ensuremath{\mathcal{J}}}
\newcommand{\pLit}{\ensuremath{\mathcal{I}}}
\newcommand{\altPHP}{\ensuremath{\formulaformat{APHP}^{k}_{k-1}}\xspace}
\newcommand{\hrank}{$H$\nobreakdash-rank\xspace}
\renewcommand{\hrank}{$H$\nobreakdash-bounded rank\xspace}
\newcommand{\hbounded}{$H$\nobreakdash-bounded\xspace}
\newcommand{\hconsistent}{$H$\nobreakdash-consistent\xspace}
\newcommand{\hdistr}[1]{\Pi_{#1}}
\newcommand{\linform}[1]{\widehat{#1}}
\newcommand{\smallassmntSAR}{\mu}
\newcommand{\largeassmntSAR}{\eta}
\newcommand{\smallmapSAR}{\varphi}
\newcommand{\largemapSAR}{\psi}
\newcommand{\YAExpectationNotation}[1][]{\ensuremath{\mathbb{E}_{#1}}}
\newcommand{\YAExpectation}[2][]{\ensuremath{\mathbb{E}_{#1}\!\left[#2\right]}}
\newcommand{\RestrictionBoundDisplay}[1]%
    {\frac{{(4k\log n)}^{k}}{n^{#1}}}
\newcommand{\RestrictionBoundInline}[1]%
    {{(4k\log n)}^{k}/{n^{#1}}}
\newcommand{\InverseRestrictionBoundDisplay}[1]%
    {\frac{n^{#1}}{{(4k\log n)}^{k}}}
\newcommand{\InverseRestrictionBoundInline}[1]%
    {n^{#1}/{(4k\log n)}^{k}}
\newcommand{\MultPoly}[1]{M(#1)}
\newcommand{\MultPolyBig}[1]{M\bigl(#1\bigr)}
\newcommand{\SumPoly}[1]{S(#1)}
\newcommand{\SumPolyBig}[1]{S\bigl( #1 \bigr)}
\newcommand{\LS}{\mathrm{LS}}
\newcommand{\LSPLUS}{\LS^+}
\newcommand{\reals}{\mathbb{R}}
\numberwithin{equation}{section}
\begin{document}

%
%

\title{\textbf{Narrow Proofs May Be Maximally Long}%
  \thanks{This is the full-length version of the paper~%
    \cite{ALN14NarrowProofs},
    which appeared in \emph{Proceedings of the 29th Annual {IEEE} 
      Conference on Computational Complexity ({CCC}~'14)}.}}

\author{%
  Albert Atserias \\
  Universitat Polit\`ecnica de Catalunya\\
  \email{atserias@cs.upc.edu}
  \and
  Massimo Lauria \\
  KTH Royal Institute of Technology  \\
  \email{lauria@kth.se}
  \and
  Jakob Nordstr\"om \\
  KTH Royal Institute of Technology  \\
  \email{jakobn@kth.se}
}

\date{\today}

\maketitle

%
%



%
%

\begin{abstract}
  We prove that there are $3$-CNF formulas over $n$~variables that can
  be refuted in resolution in width~$w$ but require  resolution proofs
  of size $n^{\bigomega{w}}$.  This shows that the simple counting
  argument that any formula refutable in width~$w$ must have a proof
  in size $n^{\bigoh{w}}$ is essentially tight.  Moreover, our lower
  bound generalizes
  to polynomial calculus resolution (PCR) and
  Sherali-Adams, implying that the corresponding size upper bounds in
  terms of degree and rank are tight as well.  
  Our results do not extend all the way to Lasserre, however, where
  the formulas we study have proofs of constant rank and size
  polynomial in both $n$ and~$w$.
%
%
\end{abstract}

%
%

\thispagestyle{empty}

%

\pagestyle{fancy}     
\fancyhead{}
\fancyfoot{}
\renewcommand{\headrulewidth}{0pt}
\renewcommand{\footrulewidth}{0pt}

%
%
\fancyhead[CE]{\slshape NARROW PROOFS MAY BE MAXIMALLY LONG}
\fancyhead[CO]{\slshape \nouppercase{\leftmark}\/}
\fancyfoot[C]{\thepage}

\setlength{\headheight}{13.6pt}

%
%

\section{Introduction}
\label{sec:introduction}

%
%

Proof complexity studies how hard it is to prove that propositional
logic formulas are tautologies. While the original motivation for this line
of research, as discussed in~%
\cite{CR79Relative},
was to prove superpolynomial lower bounds on proof size for
increasingly stronger proof systems as a way towards establishing 
$\NPclass \neq \coNP$
(and hence $\Pclass \neq \NPclass$),
it is probably fair to say that most current research in proof
complexity is driven by other concerns.

One such concern is the connection to SAT solving.  By a standard
transformation any propositional logic formula can be converted to
another formula in conjunctive normal form (CNF) that has the same
size up to constant factors and is unsatisfiable if and only if the
original formula is a tautology.  Any algorithm for solving SAT
defines a proof system in the sense that the execution trace of the
algorithm constitutes a polynomial-time verifiable witness of
unsatisfiability.%
\footnote{Such a witness is often referred to as a
  \introduceterm{refutation} rather than a \introduceterm{proof}, and
  these two terms are sometimes used interchangeably.} 
In fact, most modern-day SAT solvers can be seen to search for proofs
in systems at fairly low levels in the proof complexity hierarchy, and
upper and lower bounds for these proof systems hence give information
about the potential and limitations of the corresponding SAT solvers.
In this work, we focus on such proof systems.

\subsection{Background}

The dominant strategy in applied SAT solving today is so-called
\introduceterm{conflict-driven clause learning (CDCL)}~%
\cite{BS97UsingCSP,SS99Grasp,MMZZM01Engineering}, 
which is ultimately based on the 
\introduceterm{resolution} proof system~\cite{B37Canonical}.
The most studied complexity measure for resolution is
\introduceterm{size}
(also referred to as
\introduceterm{length}),
which gives lower bounds on the running
time on CDCL solvers and for which (optimal) exponential lower bounds
are known~%
\cite{H85Intractability,U87HardExamples,CS88ManyHard}.
Another more recently studied measure is \introduceterm{space}, 
which corresponds to memory usage, and for which (again optimal)
linear lower bounds have been proven~%
\cite{ABRW02SpaceComplexity,BG03SpaceComplexity,ET01SpaceBounds}.
For all of these results, the concept of \introduceterm{width},
measured as the size of a largest clause in a resolution proof, has
turned out to play a key role. Width was identified as a crucial
resource already in~\cite{Galil77Resolution}, and strong lower bounds on 
proof width
have been shown to imply lower bounds on
proof size~\cite{BW01ShortProofs}
and space~\cite{AD08CombinatoricalCharacterization}.

Interestingly, although the relationships and trade-offs between width
and space 
in resolution
are by now fairly well-understood~%
\cite{Ben-Sasson09SizeSpaceTradeoffs,BN08ShortProofs},
as are those between 
size 
and space~%
\cite{BN08ShortProofs,BN11UnderstandingSpace,BBI12TimeSpace,BNT12SomeTradeoffs},
very basic questions about the connections between 
size 
and width
have remained open. 
For instance, the argument in~\cite{BW01ShortProofs}
that width gives a lower bound on size works by transforming a short
resolution proof into a narrow one, but this transformation causes an
exponential increase
in the size. 
It is not
known whether such a blow-up is necessary, \ie if there are
trade-offs between 
size   
and width, or whether the analysis in~%
\cite{BW01ShortProofs}
can be sharpened to show that short
proofs can be made simultaneously narrow.
Also, as noted in the same paper, an upper bound $w$ on 
the refutation width for a
formula over~$n$ variables implies a proof 
size  
of at most
$n^{\bigoh{w}}$ simply by counting the number of possible distinct
%
clauses of width~$w$.
Again, it is not clear how tight this argument is---for all 
standard formula families
in the literature known to be refutable in 
small enough width~$w$
there are refutations in size~$n^{\bigoh{1}}$ 
independent of the width complexity
(in fact, even in size
\emph{linear} in the formula size).
To the best of our knowledge, it has been open
whether there exist formulas refutable in 
width $w = \bigoh{\sqrt{n}}$ 
that require size~$n^{\bigomega{w}}$,
\ie with the width complexity appearing in the exponent.

From a theoretical point of view, the ubiquity of CDCL in SAT solving
is somewhat puzzling since resolution is a quite weak proof system. 
A different approach is to translate CNF formulas to multilinear
polynomials and do Gröbner basis computations, which corresponds to 
\introduceterm{polynomial calculus resolution (PCR)}
as defined 
\ifthenelse{\boolean{conferenceversion}}
{in~\cite{CEI96Groebner,ABRW02SpaceComplexity}.}
{in~\cite{CEI96Groebner,ABRW02SpaceComplexity}.%
\footnote{The resolution 'R' in PCR stands for the fact that
  negated literals get their own formal variables when translating CNF
  formulas to polynomials. Such variables were missing in the original
  definition in~\cite{CEI96Groebner} but adding them makes for a more
  natural and well-behaved proof system.}}
Intriguingly, although PCR is known to be exponentially stronger than
resolution, implementations of search methods for this proof system
such as PolyBoRi~%
\cite{BD09Polybori,BDGWW09NewDevelopments}
have a hard time competing with CDCL solvers.

Proof size and space in PCR is defined in analogy with resolution, and
the measure corresponding to width of clauses is (total)
\introduceterm{degree} of polynomials. It is straightforward to show that
PCR can simulate resolution efficiently with respect to all of these
measures, meaning that
the same worst case upper bounds as in resolution apply to PCR\@.
It was proven in
\cite{IPS99LowerBounds}
that strong degree lower bounds imply strong size lower bounds,
which is a close parallel to the 
size-width 
relation for resolution in~%
\cite{BW01ShortProofs},
and this size-degree relation has been employed to prove exponential lower
bounds on  size in a number of papers, with
\cite{AR03LowerBounds}
perhaps providing the most general setting.
%
%
Optimal (linear) lower bounds on space were obtained in
\cite{BG13Pseudopartitions}
building on
\cite{ABRW02SpaceComplexity,FLNTZ12SpaceCplx},
but it is worth noting that
these bounds are \emph{not} derived from degree lower bounds---it
remains unknown whether an analogue of~%
\cite{AD08CombinatoricalCharacterization}
holds for PCR (although
\cite{FLMNV13TowardsUnderstandingPC}
recently reported some progress on this and related open questions).
Strong trade-offs between size and space as well as between degree and
space have been shown in~%
\cite{BNT12SomeTradeoffs},
but---again in analogy with resolution---the exact relations between
size and degree remains unclear. The same blow-up as in
\cite{BW01ShortProofs}
occurs in
\cite{IPS99LowerBounds}
when small size is converted to small degree, but it is not known
whether this is necessary or just an artifact of the proof.
Also, 
it was shown in~\cite{CEI96Groebner} 
that a degree upper bound of~$d$ implies proof size at
most~$n^{\bigoh{d}}$, 
but it has been open whether this is tight or not.

Yet another way to achieve greater expressivity than in resolution is
to translate clauses into linear inequalities and manipulate them
using 0-1 linear programming. Perhaps the simplest and most well-known
example of this approach is the
\introduceterm{cutting planes}
proof system introduced in
\cite{CCT87ComplexityCP}
based on ideas in
\cite{Chvatal73EdmondPolytopes,Gomory63AlgorithmIntegerSolutions}.
In this paper, however, we will be interested in somewhat related but
different \introduceterm{semialgebraic}
methods  operating on
linear programming relaxations of 
the CNF translations, such as the
\introduceterm{Sherali-Adams},
\introduceterm{Lov{\'a}sz-Schrijver},
and
\introduceterm{Lasserre}
hierarchies used for attacking \mbox{\NP-hard} optimization problems.
We discuss this next.

%
%

The  
\introduceterm{Sherali-Adams (SA)} method~%
\cite{SheraliAdams1990Hierarchy}
provides a hierarchy of linear programming relaxations of any given
$0$-$1$ integer program. The $n$th level of the hierarchy, where
$n$ is the number of $0$-$1$ integer variables, wipes out the integrality
gap and is thus exact, but also leads to an exponential blow-up in
problem size.
The main point of the method, however,  is that any linear
function of the variables can be optimized over the $k$th level of the
hierarchy in time~$n^{\bigoh{k}}$,
%
%
and in particular feasibility of the $k$th level relaxation can be
checked in that time.
In the context of proof complexity, what this means is that if the
$k$th level relaxation of the 
integer programming formulation of a CNF formula in infeasible
(the minimal such $k$ is known as the 
\introduceterm{SA~rank} of the integer program),
then there is an $n^{\bigoh{k}}$-time algorithm that can
detect this.
%
%
Furthermore, since the $k$th level of the hierarchy is an explicitly
defined linear program, its infeasibility can be certified as a
positive linear combination of its defining inequalities.  Such a
certificate is a rank-$k$ Sherali-Adams refutation of 
the corresponding CNF formula.

The 
\introduceterm{Lov{\'a}sz-Schrijver} approach~%
\cite{LovaszSchrijver1991Cones}
can be thought of as (and indeed it is formally equivalent to)
an iterated version of the level-$2$ SA~relaxation.
%
%
The point is again that any linear function can be optimized over the
linear program after $k$~iterations in time $n^{\bigoh{k}}$.
%
%
%
Lov{\'a}sz and Schrijver also introduced a method $\LSPLUS$, which uses
semidefinite programming instead of linear programming,
and which is significantly stronger in some notable cases of interest
in combinatorial optimization.

The
\introduceterm{Lasserre} method~%
\cite{Lasserre2001Explicit},
finally, 
is basically the Sherali-Adams method with semidefinite programming
conditions at all levels of the hierarchy. Again it stratifies into
levels and the $k$th level can be solved in time $n^{\bigoh{k}}$.
Moreover, Lasserre's method is the strongest of all three in the sense
that, level by level, it provides the tightest of all three
approximations of the integer linear program.  We refer 
to~\cite{Laurent2001Comparison,ChlamtacTulsiani2012Convex} for a more
detailed discussion of Sherali-Adams, Lov{\'a}sz-Schrijver and
Lasserre and 
a comparison of their relative strength.

In view of the important algorithmic applications that these methods
have (see, e.g., \cite{Parrilo00Thesis} and subsequent work), it is a
natural question whether the upper bounds $n^{\bigoh{k}}$ for
rank~$k$ are tight, just as for resolution and polynomial calculus
resolution.

%

%
From the proof complexity side, some notable early papers
investigating semialgebraic proof systems were published 
around the turn of the millennium
\cite{Pud99Complexity,GV01Complexity,GHP02Complexity},
%
%
but then this area of research 
seems 
to have gone dormant. In the last
few years,
these proof systems
have made an exciting reemergence in the context of
hardness of approximation, revealing unexpected and intriguing
connections between approximation and proof complexity.
Some examples of this is the paper
\cite{Sch08LinearLevel}
essentially rediscovering results from
\cite{Gri01LinearLowerBound},
and more recent papers such as
\cite{BBHKSZ12Hypercontractivity,OZ13Approximability}.
There have also been papers such as
\cite{BPS07LS}
and (the very recent)
\cite{GP13CommunicationLowerBounds}
focusing on \emph{semantic} versions of these proof systems, with less
attention to the actual syntactic derivation rules used.
%
%

\subsection{Our results}

The main contribution of this paper is showing that
the upper bounds on proof size in terms of width for resolution, 
degree for PCR, and rank for Sherali-Adams are essentially tight
(up to constant factors in the exponent).
Moreover, an interesting feature of our result is that we can actually
use the 
same formula family
to prove  tightness simultaneously
for all the proof systems. 
What this means is that we obtain upper bounds on size in resolution
that tightly match lower bounds in the much stronger systems PCR and
Sherali-Adams (which are in turn tight for these systems since
resolution width is an upper bound on both
PCR degree and Sherali-Adams rank).
The formal statement of this result is as follows.

\begin{theorem}
  \label{th:main-thm-informal}
  Let $w = w(n)$ be such that $w = O(n^{c})$ for 
  some positive constant
  $c < 1/2$.  
  Then there are 3-CNF formulas 
  $F_{n,w}$ 
  with 
  $\bigoh{w n}$ clauses over
  $\bigoh{n}$
  variables such that
  the following holds:
  \begin{enumerate} \itemsep=0pt
  \item 
    $F_{n,w}$ has a 
    resolution refutation
    in simultaneous size~$n^{\bigoh{w}}$, width~$\bigoh{w}$ and space~$\bigoh{w}$.
  \item 
    Any refutation of
    $F_{n,w}$ 
    in resolution, PCR, or Sherali-Adams
    must have size $n^{\bigomega{w}}$.
  \end{enumerate}
\end{theorem}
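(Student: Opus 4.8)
The plan is to let $F_{n,w}$ be a $3$-CNF rendering of a \emph{relativized pigeonhole principle} $\RPHPnot{w+1}{N}{w}$ with a buffer of $N = \Theta(n/w)$ elements: there are $w+1$ pigeons, each mapped (through a one-hot block of variables) into the buffer, the buffer elements carry a partial map into $w$ holes, and the axioms state that the pigeon-to-buffer and buffer-to-hole maps are total and injective. This formula is unsatisfiable with ``core'' $\PHPnot{w+1}{w}$, but --- this is the whole point --- its \emph{size} complexity is governed by how much a proof is forced to guess about the buffer. A standard conversion of wide clauses into $3$-clauses via auxiliary variables keeps the number of variables $\BIGOH{n}$ and the number of clauses $\BIGOH{wn}$; and since $w = \BIGOH{n^{c}}$ with $c<1/2$ we have $N^{w} = n^{\Lowerboundsmall{w}}$, so a lower bound polynomial in $N^{w}$ is of the required form $n^{\Lowerboundsmall{w}}$.

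For the upper bound in Part~1, I would write down explicitly the refutation that mimics the obvious backtracking search: branch on the buffer image $a_{1}$ of pigeon~$1$, then on the hole assigned to $a_{1}$, then on the image $a_{2}$ of pigeon~$2$, then on the hole of $a_{2}$, and so on, closing a branch in $\BIGOH{1}$ steps as soon as a collision is forced (two pigeons in one buffer element, or two buffer elements in one hole); after committing holes for all $w+1$ pigeons some hole is used twice and the branch closes. Because the pointers are one-hot, a node of this decision DAG records $\BIGOH{w}$ single-literal commitments, so every derived clause has width $\BIGOH{w}$; running the DAG depth-first keeps one root-to-leaf path plus $\BIGOH{1}$ scratch clauses in memory, so clause space is $\BIGOH{w}$; and the number of leaves, hence the total size, is $n^{\BIGOH{w}}$. (The size bound also follows from the width bound by the counting argument, but the explicit refutation is what gives size, width, and space \emph{simultaneously}.)

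The lower bound in Part~2 is where the real work is. Since $w = \BIGOH{n^{c}}$ with $c<1/2$ we are below the range in which the size--width relation of~\cite{BW01ShortProofs} or the size--degree relation of~\cite{IPS99LowerBounds} say anything: they yield only $2^{\Lowerboundsmall{w^{2}/n}} = 2^{\littleoh{1}}$. The argument must therefore exploit the relativization directly, and the scheme I would use is a restriction and recursion that peels off one pigeon per round. Fixing pigeon $w+1$ to a uniformly random buffer element $a$ and $a$ to a uniformly random hole $c$ turns $F_{n,w}$ into an instance of the same family with one fewer pigeon, one fewer hole and a slightly smaller buffer, and it turns any refutation into a refutation of that smaller instance of no greater size. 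The point is to show that these restricted refutations cannot be shared: for an $\Lowerboundsmall{1}$ fraction of the $N$ choices of $a$, the original refutation must contain a ``private'' fragment ruling out that branch, giving $S(n,w) \ge N^{\Lowerboundsmall{1}}\cdot S(n',w-1)$; iterating $w$ rounds (and keeping the buffer size $\ge N/2$, which holds since $w^{2} = \BIGOH{n}$) yields $S(n,w) \ge N^{\Lowerboundsmall{w}} = n^{\Lowerboundsmall{w}}$.

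The main obstacle is the non-sharing step --- upgrading $S(n,w)\ge S(n',w-1)$ to a genuine $N^{\Lowerboundsmall{1}}$ factor --- and, more specifically, carrying it out uniformly for resolution, for polynomial calculus resolution, and for Sherali--Adams. In resolution one argues about clauses, in PCR about monomials (a restriction acts on monomials and preserves the defining polynomial identities, so cancellation offers no shortcut), and in Sherali--Adams about the monomials of the nonnegative combination of axioms that forms the certificate (``size'' being the total number of such monomials); in each case one must show that no single such object can serve many of the branches for pigeon~$w+1$, which is exactly the place where the slack in the buffer of $\RPHP$, as opposed to bare $\PHPnot{w+1}{w}$, is essential. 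Verifying this in the two algebraic systems, where certificates may use cancellation or arbitrary nonnegative combinations, is the hardest part. Finally, one should note that the construction cannot be pushed further: $F_{n,w}$ has Lasserre refutations of constant rank and size polynomial in $n$ and $w$, which is why the theorem stops at Sherali--Adams.
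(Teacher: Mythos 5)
You've chosen the right formula family (a relativized PHP with a buffer) and your upper-bound sketch is essentially the paper's: a depth-first decision DAG on the one-hot blocks gives size $n^{\bigoh{w}}$, width $\bigoh{w}$ and clause space $\bigoh{w}$ simultaneously. The gap is in the lower bound. You propose to peel off one pigeon per round and assert a ``non-sharing'' inequality $S(n,w)\ge N^{\Lowerboundsmall{1}}\cdot S(n',w-1)$, but you never establish it and you yourself flag it as the hardest step. That is not a harmless deferral: the statement that different branch values for pigeon~$w+1$ cannot reuse the same clauses, monomials, or SA-certificate terms is a genuine bottleneck claim, and a one-round restriction offers no obvious obstruction to a large sub-DAG of the original refutation being useful under all of those restrictions simultaneously. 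As stated, your argument does not go through.

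The paper's mechanism is different and does not need the non-sharing property. It applies a \emph{single} random restriction $\rho$ that fixes all $k$ pigeons' buffer images at once and proves a \emph{pointwise} estimate (\reflem{lmm:StrongRestriction}): for any fixed clause or monomial, with probability at least $1-\RestrictionBoundInline{\ell}$ its image under $\rho$ mentions fewer than $\ell$ surviving pigeons. The key trick --- the Furst--Saxe--Sipser twist that is absent from your sketch --- is a two-case analysis: very wide clauses are satisfied by $\rho$ outright, while moderately wide ones survive but come out narrow. A union bound over the clauses of an assumed-small refutation then produces a restriction under which the whole refutation has low pigeon-width, pigeon-degree, or pigeon-rank, contradicting separate lower bounds on these quantities for the core $\TPHP$ obtained after restriction: a prover--delayer game for resolution (\reflem{lmm:HardRestrictedFormula}), a reduction to the degree lower bound of~\cite{IPS99LowerBounds} for PCR (\reflem{lmm:pcr-pigeon-degree}), and an \hconsistent family of distributions for Sherali--Adams rank (\reflem{lmm:sar-pigeon-rank}). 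Those ingredients constitute the substance of the proof, and your proposal supplies none of them.
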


For resolution this actually shows something slightly stronger than
that the counting upper bound on size in terms of width is tight.
Namely, since the formulas in 
\refth{th:main-thm-informal}
have the same asymptotic upper bound on space as on width, it follows that
even for formulas of space complexity $\bigoh{w}$---which is a more
stringent requirement than width complexity~$\bigoh{w}$---it is still
impossible to obtain any size upper bound better than~$n^{\bigoh{w}}$ 
in general. 

\Refth{th:main-thm-informal}
has an interesting consequence for the analysis of CDCL
solver performance, which we state as a formal corollary.
By way of background, it was shown in~\cite{AFT11ClauseLearning}
that if a CNF formula           
$F$ over $n$~variables has a resolution refutation in width~$w$, 
then with high probability any CDCL solver%
\footnote{This result holds for a fairly general mathematical model of
  what a CDCL solver is, which agrees reasonably well with how
  state-of-the-art solvers are actually implemented in practice.}
will only need time~$n^{\bigoh{w}}$ to decide that $F$ is indeed
unsatisfiable.%
\footnote{Perhaps this might not seem so impressive at first sight---after all,
  exhaustive search in bounded width runs within this time bound
  deterministically---but the point is that a CDCL solver is very far
  from doing exhaustive width search and does not care at all about the
  existence or non-existence of narrow refutations.} 
An obvious question is whether this result is tight.
\Refth{th:main-thm-informal}
shows that the answer is ``yes,'' since no CDCL solver can run faster
than the shortest resolution proof it can possibly find.%
\footnote{%
  This is of course assuming that the solver does not implement
  features such as, e.g., cardinality reasoning or extended
  resolution, since these fall outside of the standard CDCL framework
  and go beyond resolution-based reasoning.}    

\begin{corollary}
  There are formulas $F$ over $n$~variables
  refutable in resolution in width~$w$
  for which any 
  resolution-based
  CDCL solver cannot run faster than
  $n^{\bigomega{w}}$, 
  and hence the result in~%
  \cite{AFT11ClauseLearning}
  is optimal up to constants in the exponent.  
\end{corollary}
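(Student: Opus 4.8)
The plan is to obtain the corollary as an essentially immediate consequence of \refth{th:main-thm-informal}, together with the well-known fact that a terminating run of a resolution-based CDCL solver can be unwound into a resolution refutation whose size is polynomially bounded by the running time. Concretely, I would take $F$ to be the formula family $F_{n,w}$ furnished by \refth{th:main-thm-informal}, after the harmless rescaling of $n$ and $w$ by constant factors needed to pass from the $\bigoh{n}$ variables and width bound $\bigoh{w}$ of item~1 to a formula over $n$ variables refutable in width exactly~$w$; being refutable in width~$w$ is precisely the hypothesis required by the result of~\cite{AFT11ClauseLearning}.

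Next I would invoke the standard CDCL-to-resolution simulation. In the mathematical model of CDCL underlying~\cite{AFT11ClauseLearning} --- which allows restarts, arbitrary decision and unit-propagation orders, and clause learning by conflict analysis, but only resolution-style reasoning (in particular no extended resolution, no cardinality constraints, and no Gaussian reasoning) --- whenever the solver halts and reports that $F$ is unsatisfiable, the clauses it has learned, listed in the order they were derived and ending with the empty clause, constitute a legal resolution refutation of~$F$. Moreover, each learned clause is the output of a single conflict analysis, so the number of clauses in this refutation is at most the number of conflicts, which is in turn at most the total running time~$T$. Hence every terminating run of such a solver on~$F$ yields a resolution refutation of~$F$ of size $\bigoh{T}$.

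Finally I would apply item~2 of \refth{th:main-thm-informal}, which states that every resolution refutation of $F = F_{n,w}$ has size $n^{\bigomega{w}}$. Chaining this with the previous step, \emph{every} terminating run of \emph{every} resolution-based CDCL solver on~$F$ must take time $T \geq n^{\bigomega{w}}$. This worst-case lower bound matches, up to the constant in the exponent, the high-probability upper bound of $n^{\bigoh{w}}$ established in~\cite{AFT11ClauseLearning}, so the exponent $\Theta(w)$ there cannot be improved --- which is the assertion of the corollary.

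The only step calling for care --- and the nearest thing to an obstacle in an otherwise routine argument --- is the bookkeeping in the second paragraph: one must fix a model of ``resolution-based CDCL solver'' that is faithful both to the statement of the corollary and to the model for which~\cite{AFT11ClauseLearning} proves its upper bound, and then verify that unwinding a run into a resolution refutation costs at most a polynomial factor in the running time, so that a superpolynomial lower bound of the form $n^{\bigomega{w}}$ on refutation size transfers to a lower bound of the same form on running time. This is precisely where the caveat excluding extended resolution, cardinality reasoning, and similar features (flagged in the footnote to the corollary) is used.
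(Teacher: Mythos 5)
Your proposal is correct and follows essentially the same route as the paper: observe that a terminating run of a resolution-based CDCL solver yields a resolution refutation of size bounded by the running time, and then apply the size lower bound from \refth{th:main-thm-informal}. The paper states this in one sentence (``no CDCL solver can run faster than the shortest resolution proof it can possibly find''); your more careful bookkeeping about the simulation and the rescaling of parameters is sound but not a different argument.
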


Another interesting aspect of our lower bound for resolution is in
the context of Berkholz's
\EXPTIME-completeness result for deciding resolution width~%
\cite{Berkholz12ComplexityNarrowProofs}.
What Berkholz showed is that given a formula $F$ over $n$~variables and
a parameter~$w$, it cannot be decided in time less than
$n^{(w-3)/12}$
whether 
$F$ has a resolution refutation in width~$w$ or not.
Optimizing the constants in 
\refth{th:main-thm-informal},
we can show that there are
$4$-CNF formulas refutable in width~$w$
for which no resolution refutation can be shorter than
$n^{w/2 - \littleoh{1}}$. It is worth noting that this bound is
stronger than that in
\cite{Berkholz12ComplexityNarrowProofs},
although it of course applies only for the more restricted setting
where the algorithm has to output a \mbox{width-$w$} resolution refutation
rather than for the general decision problem. Still, we believe this
sheds interesting light on Berkholz's result.

\subsection{Discussion of proof techniques}
\label{sec:intro-techniques}

We conclude the overview by outlining the proof of  the lower bound in
\refth{th:main-thm-informal}
for resolution and how it differs from previously used methods.
At a high level, our proof is a standard restriction argument, but it turns
out to have some twists which we believe might be of interest and
could be useful elsewhere.%
\footnote{In fact, in a sense this has already happened in that our
  paper heavily draws on ideas from~\cite{AMO13LowerBounds}, which
  used a similar approach in a very different context.}

Before going into the details of our new restriction argument, let us
revisit previous lower bounds on size in terms of width and see how
they fall short of proving what we are after. 
On the one hand,  the result in \cite{BW01ShortProofs} states that
if a 3-CNF formula on $n$~variables
requires width $w$ to refute in resolution, then it also requires size
$2^{\bigomega{w^2/n}}$. 
This
lower bound is vacuous for
$w$ smaller than $\sqrt{n}$ and, in any case, 
can never be larger than
$2^{\bigomega{w}}$ 
since $w$ is bounded by~$n$. 
%
%
On the other hand, 
for formulas refutable in width~$w$ smaller than~$\sqrt{n}$, 
a direct random restriction argument can sometimes still
be applied to get 
meaningful lower bounds. 
The idea is that setting a random literal to true will kill off a
$\frac{w}{2n}$-fraction of the wide clauses on average. After
$r$~rounds of such restrictions, the expected number of surviving wide
clauses is at most
$\bigl(1 - \frac{w}{2n} \bigr)^r S$,
where $S$ is the size of the refutation, 
and choosing
$r = (2n/w) \log S$ brings the number of wide clauses down to zero.
A contradiction is then derived by showing that
the residual formula still requires width~$w$ to refute. Note,
however, that we cannot apply the restriction for  more than
$n$~rounds (or else there will be no residual formula to argue about), 
and so the best size lower bound this method can achieve
is again 
$2^{\bigomega{w}}$,
which is smaller than the $n^{\Omega(w)}$ bound that we are after.

In some sense, the problem is that using restrictions in the style of 
H\aa{}stad's  switching lemma~\cite{Hastad87Thesis} 
does not work in our setting. Instead, it turns out that a
seemingly weaker argument inspired by 
Furst-Saxe-Sipser~\cite{Furst1984ParityCircuits} is just what we need.
%
%
Let us now describe this modified restriction argument and how it
overcomes the problems discussed above.

We start with a carefully chosen family of formulas~$F_{n,w}$ 
and an associated distribution over random restrictions~$\rstd_n$.
Then we assume that we have a resolution refutation 
$\pi$
of~$F_{n,w}$ 
in size~$n^{\littleoh{w}}$
and analyze how a randomly chosen restriction $\rstd_n$ affects~$\pi$. 
We get two cases:
\begin{enumerate} \itemsep=0pt
\item 
  For clauses $C$ in the refutation $\pi$ that are noticeably wide, 
  $\rstd_n$ is very likely to satisfy a literal in $C$
  and so the clause disappears.
\item 
  Clauses that are not so wide will not be satisfied by~$\rstd_n$,
  but since they are reasonably small they are very likely to be
  shortened by~$\rstd$ to width strictly less than~$w$.
\end{enumerate}
Admittedly, the first case looks no different from the standard
restriction argument, and the second case 
seems 
quite weak.  But the
point is that by considering also the second case, we can afford a
significantly
bigger bound for ``wide'' 
than before,
thus getting a bigger probability of success. This is 
the key to 
our
argument. The rest is now standard: $F_{n,w}$ and $\rstd_n$ 
are
chosen so
that $F_{n,w}$ restricted by $\rstd_n$ is a 
bounded-width version of a 
pigeonhole principle (PHP)
formula with $w$ pigeons that are supposed to fit into $w-1$ holes.
Since $\pi$ is short enough, by a counting argument there is some
restriction $\rstd_n$ that eliminates all wide clauses to give a
resolution refutation of the 
PHP~formula
in width less than~$w$.  It is a straightforward separate argument
that such a narrow 
refutation cannot exist, and the lower bound on
size follows.

The lower bounds for PCR and Sherali-Adams are quite similar. The
restriction part of the argument is basically the same, but one has to
work a bit harder to prove the final punchline that the restricted
refutations have impossibly low degree and rank, respectively.

It should perhaps be stressed that while the final argument is quite
straightforward and natural
(at least for resolution), a crucial component in the proof is to
find the right formulas~$F_{n,w}$ 
and associated restrictions~$\rstd_n$
to plug into the argument, and to make a case analysis of the action
of $\rstd_n$ as above.
Both of these aspects use the techniques developed in
\cite{AMO13LowerBounds}
in an essential way.

\subsection{Outline of this paper}

The rest of this paper is organized as follows.
After having given the necessary preliminaries in
\refsec{sec:preliminaries},
we state the main theorem for resolution and give a full proof in
\refsec{sec:proofs-const-width}.
\ifthenelse{\boolean{conferenceversion}}           
{In \refsec{sec:algebraic-semialgebraic},
  we discuss how the theorem can be strengthened to 
  polynomial calculus resolution and Sherali-Adams and why our
  approach does not work for Lasserre. We omit most of the details due
  to space constraints, however, and refer the reader to the upcoming
  full-length version for full proofs.}
{We believe this can serve as a 
  useful warm-up to the more complicated
  proofs for stronger proof systems
  that follow in 
  \refsec{sec:algebraic-semialgebraic}.
  In 
  \refsec{sec:lasserre}
  we show that
  our lower bounds do \emph{not} extend all the way to Lasserre.}
We conclude in
\refsec{sec:concluding-remarks}
with some final remarks and a discussion of open problems.

\section{Preliminaries}
\label{sec:preliminaries}

A \introduceterm{literal} over a Boolean variable $\varx$ is either
the variable $\varx$ itself (a \introduceterm{positive literal}) or
its negation~
$\olnot{\varx}$ (a
\introduceterm{negative literal}). A \introduceterm{clause} $\clc =
\lita_1 \lor \formuladots \lor \lita_{\clwidth}$ is a disjunction of
literals. 
A \introduceterm{$\clwidth$\nobreakdash-clause} is a clause that
contains at most $\clwidth$~literals. A \introduceterm{CNF formula}
$\fstd = \clc_1 \land \formuladots \land \clc_m$ is a conjunction of
clauses.  \mbox{A \introduceterm{\kcnfform{}}} is a CNF formula consisting of
\xclause{\clwidth}{}s. We think of clauses and CNF formulas as sets:
the order of elements is irrelevant and there are no repetitions.
We denote the logical true value as $\TrueValue$ and the logical false
value as $\FalseValue$. 
The empty clause (containing no literals) is also denoted~$\emptycl$, 
since it is always false.
For integers $m$ and $n$, $m<n$, we use the standard notation 
$[n] = \{1,2,\ldots,n\}$
and $[m,n] = \{m, m+1,\ldots,n\}$.


A \introduceterm{resolution derivation} of a clause $ \clc $ from a CNF formula
$ \fstd $ is a sequence of clauses $(\clc_{1}, \ldots, \clc_\stoptime)$
such that $\clc_{\stoptime}=\clc$ and for $1
\leq \timet \leq \stoptime$ the clause $\clc_{\timet}$ is obtained by
one of the following derivation rules:
\begin{itemize} \itemsep=0pt
\item \textbf{Axiom:} $\clc_\timet$ is a clause in $\fstd$ 
  (an \introduceterm{axiom clause});
\item \textbf{Inference:} $\clc_\timet = A \lor B$, where $ \clc_{i} = A \lor
x$ and $ \clc_{j} = B \lor \olnot{x}$ for $ 1 \leq i,j < \timet $;
\item \textbf{Weakening:} $ \clc_{\timet} \supseteq \clc_{i}$ for some
$ 1 \leq i <\timet $.
\end{itemize}
%
%
A \introduceterm{resolution refutation}
of $\fstd$ is a derivation of the empty clause $\emptycl$ from $ \fstd$. 

Every resolution derivation 
$\proofstd = (C_1,\ldots,C_\stoptime)$ 
can be
associated with a directed acyclic graph 
(DAG)~$G_\proofstd$ with vertices labelled by clauses
$C_t$ in~$\proofstd$ and edges $(C_i, C_j)$
if $C_j$ is obtained by an inference or a weakening step
and $C_i$ is used as a premise in that step. 
The derivation $\proofstd$ is said to be
\introduceterm{tree-like} if 
$G_\proofstd$
is a tree.  
%
The \introduceterm{(clause) space} of $\proofstd$ at time~$t$ 
is the number of clauses derived before or at  time~$t$ that will be
used after or at  time~$t$, \ie
all clauses $C_{i}$, $i \leq t$, in~$G_\proofstd$ having an outgoing edge to
clauses $C_{j}$, $ j \geq t $
(plus the clause~$C_t$ itself).
The space of $\proofstd$ is the maximum space at any time $t$ in the
derivation. The \introduceterm{width} of $\proofstd$ is the maximum number
of literals in any clause $C_t$ in~$\proofstd$, and the 
\introduceterm{size} (or
\introduceterm{length})
of
$\proofstd = (C_1,\ldots,C_\stoptime)$ 
is~$\stoptime$. 
\ifthenelse{\boolean{conferenceversion}}{}
{We remark that 
  it is straightforward to show that all 
  applications of the weakening rule
  can be eliminated from a resolution refutation without
  any increase in size, width, or space, and while
  maintaining tree-likeness.}



In
\introduceterm{polynomial calculus resolution (\PCR)} one instead
refutes an unsatisfiable formula~$F$ over variables
$\varx_{1}, \ldots,\varx_{n}$ 
by reasoning in terms of 
polynomials in the ring
$\F[\varx_{1},\ldots,\varx_{n},\dvarx_{1},\ldots,\dvarx_{n}]$,
where $\F$ is some fixed 
field and
$\varx_{i},\, \dvarx_{i}$ are formally independent variables.
It is natural to think of polynomials as being satisfied by an
assignment when they evaluate to~$0$,  so in \PCR the truth values
$\TrueValue$ and~$\FalseValue$ are represented by $0$ and~$1$,
respectively, and a clause 
$\Lor_{i\in \pLit} \varx_{i} \lor
\Lor_{j\in \nLit}\olnot{\varx}_{j}$ 
is translated into
the 
one-term polynomial
$\prod_{i\in \pLit} \varx_{i} \, \cdot\,  \prod_{j\in \nLit}\dvarx_{j}$.
A \introduceterm{\PCR derivation} of a polynomial~$\polyr$ from a set of
polynomials $\polyset{S}=\{\polyq_1,\ldots,\polyq_m\}$ 
is a sequence 
$(\polyp_{1}, \ldots,
\polyp_\stoptime)$ 
such that $\polyp_{\stoptime}=\polyr$ and for $1 \leq \timet \leq
\stoptime$ the polynomial 
$\polyp_{\timet}$ is obtained by one of the
following 
derivation
rules:
\begin{itemize} \itemsep=0pt
\item \textbf{Boolean axiom:} 
  $\polyp_{\timet}$ is $\varx^{2}-\varx$ 
  for some variable $\varx$ (or~$\dvarx$);
\item \textbf{Complementarity axiom:} $\polyp_{\timet}$ is
  $1-\varx-\dvarx$
  for some variable $\varx$;
\item \textbf{Initial axiom:} $\polyp_{\timet}$ is one of the
  polynomials $ \polyq_{j} \in \polyset{S}$; 
\item 
  \textbf{Linear combination:} 
  $\polyp_\timet = \alpha \polyp_{i} + \beta \polyp_{j} $ for
  $ 1 \leq i,j < \timet $ and some $\alpha, \beta \in \mathbb{F}$;
\item \textbf{Multiplication:} $ \polyp_{\timet} = \varx \polyp_{i}$ 
  for $ 1 \leq i <\timet $ and some  variable $\varx$.
\end{itemize}
A \introduceterm{\PCR refutation} of $F$ is a \PCR derivation of~$1$
from the set of polynomials representing the clauses  of~$F$ as
explained above. Note that the Boolean axioms make sure that
variables can only take values
$\TrueValue=0$ and $\FalseValue=1$,
and the complementarity axioms enforce that
$\varx$
and
$\dvarx$
take opposite values.

The \introduceterm{degree} of a \PCR derivation~$\proofstd$
is the maximum of the (total) degrees of the
polynomials in~$\proofstd$. 
The \introduceterm{size} of~$\proofstd$
is the sum of the sizes of the polynomials in~$\proofstd$, 
where the size of a polynomial is defined as its number of terms.%
\footnote{Just to make terminology precise, in this paper a
  \introduceterm{monomial}
  is a product of variables, a
  \introduceterm{term}
  is a monomial multiplied by a non-zero coefficient from the field~$\F$,
  and a \introduceterm{polynomial} is a sum of terms with distinct 
  monomials.}
The \introduceterm{space} measure can also be generalized from
resolution, counting terms instead of clauses, but we will not really
need it in this paper.

    

Let us next discuss \introduceterm{semialgebraic} proof systems. 
All such proof systems encode a CNF formula as a set of polynomial 
inequalities over the reals.
A clause 
$\bigvee_{i \in \pLit} x_i \vee \bigvee_{j \in \nLit} \olnot{x}_j$
is represented by the inequality 
\mbox{$\sum_{i\in \pLit} x_{i} + \sum_{j\in \nLit} (1-x_{j}) - 1 \geq 0$,}
where we identify
\mbox{$\TrueValue=1$} and \mbox{$\FalseValue=0$}---note that this is the opposite of
the convention for \PCR.
A CNF formula 
$\fstd$
is represented by 
the inequalities corresponding to its clauses.
%
A \introduceterm{Sherali-Adams (\SA) derivation} of an inequality 
$ \polyr \geq 0 $ 
from a set of polynomial inequalities 
$\set{\polyq_{1}\geq 0, \ldots, \polyq_{m}\geq 0}$ 
is a formula of the form
\begin{equation}
  \label{eqn:SAdefinition}
  \ifthenelse{\boolean{conferenceversion}}{\textstyle}{}
  \sum^{\stoptime}_{\timet=1} \alpha_{\timet}  \cdot \prod_{i\in \pLit_{\timet}} 
  \varx_{i}\cdot \prod_{i\in \nLit_{\timet}} (1-\varx_{i}) \cdot
  \polyp_{\timet}
  \eqcomma
\end{equation}
that when expanded into a sum of terms gives
the polynomial $\polyr$,
where
$\alpha_{\timet}\in\Rplus$
and
$ \polyp_{\timet} $ is one of the 
original polynomials~%
$ \polyq_{j} $, or an
\introduceterm{axiom} of the form $ \varx^{2}_{i}-\varx_{i} $ or 
$\varx_{i}-\varx^{2}_{i} $, or the constant~$1$.
%
%
A \introduceterm{\Lasserre derivation} of 
$ \polyr \geq 0 $ 
is a formula of the form~\eqref{eqn:SAdefinition} that expands
to $\polyr$ where 
in addition
$ \polyp_{\timet} $ can be a
square
$\polyq^2$ for 
any arbitrary polynomial~$\polyq$.
%
%
Note that Sherali-Adams and \Lasserre are
\introduceterm{static} proof systems in that they have
``one-shot'' derivations, in contrast to resolution and \PCR that
construct derivations dynamically step by step.

We can augment Sherali-Adams by 
\introduceterm{twin variables} $\dvarx_{i}$ whose
intended meaning is the negation of $ \varx_{i} $, \ie
\ifthenelse{\boolean{conferenceversion}}
{$1-\varx_i$.%
\footnote{In fact, this is how \PCR was extended
  in~\cite{ABRW02SpaceComplexity} from the original definition of
  polynomial calculus (PC) in~\cite{CEI96Groebner}.}} 
{$1-\varx_i$.%
\footnote{As briefly discussed above, this is how \PCR was extended
  in~\cite{ABRW02SpaceComplexity} from the original definition of
  polynomial calculus (PC) in~\cite{CEI96Groebner}.}} 
We define a 
\introduceterm{Sherali-Adams resolution (\SAR) derivation} 
to be an \SA derivation
as in~\eqref{eqn:SAdefinition}  
except that the set of variables is
$\{\varx_1,\ldots,\varx_n,\dvarx_1,\ldots,\dvarx_n\}$ 
and  that
$ \polyp_{\timet} $ can also be a
complementarity axiom 
$1 - \varx_i - \dvarx_i$ 
or
$-1 + \varx_i + \dvarx_i$.

A 
\introduceterm{Sherali-Adams (\SA), \SAR, or \Lasserre refutation} 
of 
$ \fstd $ is a derivation in the respective system
of the
inequality $ -1 \geq 0 $ from the inequalities $ \polyq_{1} \geq 0, \ldots,
\polyq_{m}\geq 0$ that encode
the clauses of~$F$. 
The
\introduceterm{rank} of the derivation is the maximum of the degrees
among the 
polynomials to which the formulas
$\prod_{i \in \pLit_{\timet}} \varx_i \cdot \prod_{i \in
  \nLit_{\timet}} (1-\varx_i) \cdot \polyp_{\timet}$ in~\eqref{eqn:SAdefinition}
expand,
and the \introduceterm{size} of the derivation is 
the sum of the sizes of those polynomials, where again the size
of a polynomial is defined as its number of terms.

A \introduceterm{restriction}
(or \introduceterm{partial assignment})
$\rstd$ is a partial mapping from 
variables to 
$\set{\FalseValue, \TrueValue}$. 
We identify 
$\rstd$ with the set of literals it sets to true. The 
\introduceterm{domain} of \rstd\ is denoted $ \dom(\rstd) $
and   the size
of $ \rstd $ is 
$ \cardinality{\rstd} = \cardinality{\dom(\rstd)} $.
The restriction 
$\restrict{C}{\rstd}$
of a clause $C$ by~$\rstd$
is the trivial clause $\TrueValue$ if $\rstd$ sets some literal of~$C$
to true---such a clause can just be removed from any formula
or derivation---and otherwise it is the clause resulting from
deleting all literals in $C$ set to false by~$\rstd$.
The restriction~$\restrict{\fstd}{\rstd}$ 
of a CNF formula $\fstd$ is the conjunction of its restricted
clauses,
and a restricted resolution derivation
$\restrict{\proofstd}{\rstd}$
is the sequence of the restrictions of the clauses in~$\proofstd$.
It is a basic fact that 
if $ \proofstd $ is a refutation of $ \fstd $, then $
\restrict{\proofstd}{\rstd} $ is a refutation of $ \restrict{\fstd}{\rstd} $.

For
PCR derivations and the polynomials therein, restrictions are
defined similarly:
%
a restricted term vanishes if one of its variables is set to~$\TrueValue=0$ and
is otherwise  obtained by
deleting all variables set to $\FalseValue=1$,
and a restricted polynomial is the sum of its restricted terms.
Again, restrictions preserve \PCR refutations.
For \SA and \SAR, the definition is analogous except the roles of $0$
and~$1$ are reversed.

\section{Upper and lower bounds in resolution}
\label{sec:proofs-const-width}


\ifthenelse{\boolean{conferenceversion}}
{In this section, we state the special case of our main result for the
  resolution proof system and give a full proof. The idea is to convey the main
  ideas of the argument while avoiding the additional technical details 
  that are needed for the stronger proof systems.}  
{In this section, we establish the special case of our main result for 
  the resolution proof system.
  Although the lower bound part follows from the stronger results that
  we will prove in later sections, we believe it is instructive to
  develop the argument for resolution first.}
Let us start by 
stating a
slightly more detailed version of
\refth{th:main-thm-informal}, 
but restricted to resolution, which is what we will prove.

\begin{theorem} 
  \label{thm:largenarrow} 
  Let $k = k(n)$ be any integer-valued function
  such that $k(n) \leq n/4\log n$. 
  Then there is a family of $3$-CNF formulas $\{F_{n,k}\}_{n \geq 1}$,
  where $F_{n,k}$ has $\bigoh{n^2}$ variables and $\bigoh{kn^2}$
  clauses, such that:
  \begin{enumerate} \itemsep=0pt
  \item $F_{n,k}$ has a tree-like resolution refutation in size
    $\bigoh{k^k n^k}$, width $2k+1$, and space $2k+3$;
  \item any resolution refutation of $F_{n,k}$ has size
    $\Bigomega{\InverseRestrictionBoundInline{k-1}}$.        
  \end{enumerate}
\end{theorem}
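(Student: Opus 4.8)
The plan is to take for $F_{n,k}$ an ``addressed'' (relativized) version of the pigeonhole principle $\mathrm{PHP}^{k}_{k-1}$ with $k$ pigeons and $k-1$ holes, following the template of~\cite{AMO13LowerBounds}. The design goals for the construction are: (a) $F_{n,k}$ is a $3$-CNF over $\bigoh{n^{2}}$ variables with $\bigoh{kn^{2}}$ clauses; (b) it is unsatisfiable because a satisfying assignment would encode an injection of $[k]$ into $[k-1]$; and (c) there is a distribution $\rstd_{n}$ over restrictions — morally, each pigeon independently chooses a random ``address'' in $[n]$ from an address layer, while an auxiliary routing layer over pairs of addresses is fixed consistently — such that $\restrict{F_{n,k}}{\rstd_{n}}$ is always a $3$-CNF encoding of plain $\mathrm{PHP}^{k}_{k-1}$. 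The address and routing layers are exactly what make $F_{n,k}$ large; stripped of them it would just be $\mathrm{PHP}^{k}_{k-1}$.

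For part~1 I would write down the natural exhaustive refutation and check it is tree-like. Process pigeons $1, 2, \ldots, k$ in order, case-splitting on the hole (at most $k-1$ options) and then on the address (at most $n$ options) of each; this is a depth-$k$ decision tree with $\bigoh{(kn)^{k}}=\bigoh{k^{k}n^{k}}$ leaves, and at every leaf two of the $k$ pigeons have been assigned the same hole, so a clause of $F_{n,k}$ is falsified. Reading the tree bottom up yields a tree-like resolution refutation of size $\bigoh{k^{k}n^{k}}$; since each node clause records only the hole and address of the already-processed pigeons, the width stays at most $2k+1$, and a careful bottom-up construction keeps clause space at most $2k+3$.

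For part~2 I would run the Furst--Saxe--Sipser-flavoured restriction argument sketched in \refsec{sec:intro-techniques}. Fix a resolution refutation $\proofstd$ of $F_{n,k}$ of size $S$ and the distribution $\rstd_{n}$ above, so that $\restrict{F_{n,k}}{\rstd_{n}}$ is always a $3$-CNF encoding of $\mathrm{PHP}^{k}_{k-1}$. Pick a width threshold $W$ of order $k\log n$ and split the clauses of $\proofstd$ into \emph{wide} ones ($w(C)\geq W$) and \emph{narrow} ones ($w(C)<W$). The parameters of $\rstd_{n}$ and the choice of $W$ should be arranged so that a wide clause is satisfied by $\rstd_{n}$ except with probability $\varepsilon$, and a narrow clause is shrunk to width below $k$ by $\rstd_{n}$ except with probability $\varepsilon$, where $\varepsilon=\bigoh{\RestrictionBoundInline{k-1}}$. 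A union bound over $\proofstd$ then shows that if $S<\Bigomega{\InverseRestrictionBoundInline{k-1}}$ there is an outcome of $\rstd_{n}$ for which every clause of $\restrict{\proofstd}{\rstd_{n}}$ has width $<k$, \ie a resolution refutation of the $\mathrm{PHP}^{k}_{k-1}$ encoding in width $<k$ — impossible, since that encoding requires resolution width $\Omega(k)$ (I would include this as a short self-contained lemma, or invoke~\cite{BW01ShortProofs}). Hence $S=\Bigomega{\InverseRestrictionBoundInline{k-1}}$, which is part~2.

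The main obstacle is the joint calibration in the last paragraph: $F_{n,k}$ and $\rstd_{n}$ must be engineered so that \emph{at the same time} the residual formula is a genuinely hard $\mathrm{PHP}$ instance, a clause of width $\geq W$ dies (is satisfied) except with probability $\bigoh{\RestrictionBoundInline{k-1}}$, and a clause of width $<W$ is shrunk below width $k$ with the same failure probability. It is precisely the second case — handling clauses that are ``narrow but not narrow enough'' by shrinkage rather than by death — that lets $W$ be taken well above $k$ and so pushes the bound past the $2^{\Omega(k)}$ ceiling of a naive random restriction; but it forces the address layer and routing layer to interlock rather rigidly, and pinning down the exponent of $n$ to exactly $k-1$ and the overhead to exactly $(4k\log n)^{k}$ requires tracking both probability estimates carefully. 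This bookkeeping, together with the case analysis of how $\rstd_{n}$ acts on an arbitrary clause, is where~\cite{AMO13LowerBounds} gets used in an essential way. The later PCR and Sherali-Adams strengthenings keep this restriction skeleton and only replace the final width lower bound for $\mathrm{PHP}^{k}_{k-1}$ by the matching degree and rank lower bounds.
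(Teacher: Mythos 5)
Your overall plan --- a relativized PHP formula plus a Furst--Saxe--Sipser two-case restriction argument --- is exactly the route the paper takes, and your upper-bound sketch matches the paper's explicit tree-like refutation. But the lower-bound argument as you describe it has a genuine gap: you phrase both the case split and the final contradiction in terms of \emph{ordinary clause width}, and that does not work here. The right parameter is a coarser one that groups variables by the pigeon they describe --- the paper calls it \emph{pigeon-width}, the number of distinct pigeons mentioned by a clause. In the residual formula $\TPHP$, each of the $k$ surviving pigeons contributes about $2k$ variables $q_{v,1},\ldots,q_{v,k-1},z_{v,1},\ldots,z_{v,k-1}$, all of which the restriction leaves unassigned. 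So a clause such as $q_{v,1}\vee\cdots\vee q_{v,k-1}\vee z_{v,1}\vee\cdots\vee z_{v,k-1}$ has ordinary width about $2k$ --- comfortably below your threshold $W\approx k\log n$ --- yet survives with all its literals intact whenever pigeon $v$ is chosen, an event of probability $k/n$, far larger than your target failure probability $\varepsilon=\RestrictionBoundInline{k-1}$. So the claim ``a narrow clause is shrunk to width below $k$ except with probability $\varepsilon$'' is false as stated. What the restriction actually controls is pigeon-width: with probability at least $1-\RestrictionBoundInline{k-1}$ the restricted clause mentions fewer than $k-1$ pigeons, and this is what the paper's restriction lemma proves.

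Correspondingly, the punchline also needs strengthening. A restricted refutation with pigeon-width $<k-1$ can still have ordinary width $\Theta(k^2)$, so the usual ordinary-width lower bound $\Omega(k)$ for a PHP encoding (say via~\cite{BW01ShortProofs}) does not exclude it. The paper instead proves the stronger statement that every resolution refutation of $\TPHP$ requires pigeon-width at least $k-1$ (\reflem{lmm:HardRestrictedFormula}), via a prosecutor--defendant game adapted to count pigeons rather than literals, and that is what closes the argument. Your restriction skeleton and choice of formula family are right; what is missing is the pigeon-width abstraction and the corresponding pigeon-width lower bound for the residual formula, without which the argument cannot reach the stated $n^{\Omega(k)}$ bound.
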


%

Straightforward calculations show that if $k(n) =
\bigoh{n^c}$ for $c < 1$, then the upper bound is~$n^{\bigoh{k}}$ and
the lower bound is~$n^{\bigomega{k}}$.

%
%

\subsection{Definition of the formula}
\label{sec:definition-formula}


The CNF formulas we use to establish~\refth{thm:largenarrow} formalize a
\introduceterm{relativized}  
version of the 
pigeonhole principle 
which says
that there is a way to choose $k$ out
of~$n$ pigeons and send them to \mbox{$k-1$}~pigeonholes so that every
pigeon gets its own hole. More formally, the formula claims that 
there are (partial) functions $\function{p}{[k]}{[n]}$ and
\mbox{$\function{q}{[n]}{[k-1]}$} such that $p$ is one-to-one and
defined on~$[k]$, and $q$ is one-to-one and defined on the range
of~$p$.
%
Let us first describe a straightforward CNF encoding of this claim
with wide clauses that we denote $\RPHPnot{k}{n}{k-1}$.
Once the general idea is clear, we transform this into a slightly more
involved \mbox{$3$-CNF} formula which is the formula we will work
with.

The formula $\RPHPnot{k}{n}{k-1}$ 
is over
variables $p_{u,v}$ that encode the function~$p$,
$q_{v,w}$ that encode the function~$q$, 
and 
$r_{v}$
that encode a superset of the range of $p$.
It consists of the following
collection of 
\ifthenelse{\boolean{conferenceversion}}{%
  clauses, where $ u , u' $ range over~$ [k] $, $ v , v' $ range
  over $ [n] $, and $ w $ ranges over~$ [k-1] $:}
{clauses:}
\begin{subequations}
  \begin{align}
    & 
    p_{u,1} \lor p_{u,2} \lor \cdots \lor p_{u,n} &   &
    \ifthenelse{\boolean{conferenceversion}}
    {
      \text{for all $ u $,}
    }           
    {
      \text{$u\in[k]$,}
    }
    \label{eq:PdefinedWide}\\
    & \olnot{p}_{u,v} \lor \olnot{p}_{u',v} &  &  
    \ifthenelse{\boolean{conferenceversion}}
    {
      \text{for all $u \neq u'$ and $ v $,}
    }
    {
      \text{$u,u'\in[k]$, 
        $u\neq u'$, 
        $v\in [n] $,}
    }
    \label{eq:PinjectiveWide}\\
    & \olnot{p}_{u,v} \lor r_{v} &  &  
    \ifthenelse{\boolean{conferenceversion}}
    {
      \text{for all $ u $ and $ v $,}
    }
    {
      \text{$u\in[k]$,   $v\in[n]$,}
    }
    \label{eq:PimageWide}\\
    & \olnot{r}_{v} \lor q_{v,1} \lor \cdots \lor q_{v,k-1} &   &
    \ifthenelse{\boolean{conferenceversion}}
    {
      \text{for all $ v $,}
    }
    {
      \text{$ v\in[n]$,}
    }
    \label{eq:QdefinedWide}\\
    & \olnot{r}_{v} \lor \olnot{r}_{v'} \lor \olnot{q}_{v,w} 
    \lor \olnot{q}_{v',w} 
    &  &
    \ifthenelse{\boolean{conferenceversion}}
    {
      \text{for all $v \neq v'$ and $ w $.}
    }
    {
      \text{$v,v'\in[n]$, 
        $v \neq v'$,
        $ w\in [k-1]$.}
    }
    \label{eq:QinjectiveWide}
  \end{align} 
\end{subequations}
The clauses in~\eqref{eq:PdefinedWide}--\eqref{eq:PinjectiveWide} 
say that $p$ maps $[k]$ injectively into~$[n]$;
clauses~\eqref{eq:PimageWide} 
encode the range of~$p$;
and
clauses~\eqref{eq:QdefinedWide}--\eqref{eq:QinjectiveWide} 
force $q$ to be defined and injective on this range.

Next, we convert $\RPHPnot{k}{n}{k-1}$ to a $3$-CNF formula.
This is done in the standard way by using extension variables to break
up the wide clauses in \eqref{eq:PdefinedWide} and
\eqref{eq:QdefinedWide} and the \mbox{$4$-clauses}
in~\eqref{eq:QinjectiveWide}.  
%
For~\eqref{eq:PdefinedWide} we obtain the clauses
\begin{subequations}
  \begin{align}
    \label{eq:PdefinedNarrowA}
    & p_{u,1} \lor p_{u,2} \lor y_{u,2} 
    \ifthenelse{\boolean{conferenceversion}}{\eqcomma}{}
    &&  
    \ifthenelse{\boolean{conferenceversion}}
    {}
    {\text{$u \in [k]$,}} 
    \\
    \label{eq:PdefinedNarrowB}
    & \olnot{y}_{u,v} \lor p_{u,v+1} \lor y_{u,v+1}
    && 
    \ifthenelse{\boolean{conferenceversion}}
    {
      \text{for all $v \in [2,n-3]$,}
    }
    {\text{$ u\in[k]$, $v \in [2,n-3]$,}}
    \\
    \label{eq:PdefinedNarrowC}
    & \olnot{y}_{u,n-2} \lor p_{u,n-1} \lor p_{u,n}
    \ifthenelse{\boolean{conferenceversion}}{\eqcomma}{}
    && \ifthenelse{\boolean{conferenceversion}}{}{\text{$u \in [k]$,}}
  \end{align}
  \ifthenelse{\boolean{conferenceversion}}{for all $ u \in [k]$,}{}
  splitting up~\eqref{eq:QdefinedWide} 
  yields
  \begin{align}
    \label{eq:QdefinedNarrowA}
    & \olnot{r}_{v} \lor q_{v,1} \lor z_{v,1} 
    \ifthenelse{\boolean{conferenceversion}}{\eqcomma}{}
    && 
    \ifthenelse{\boolean{conferenceversion}}
    {}
    {\text{$v\in[n]$,}}
    \\
    \label{eq:QdefinedNarrowB}
    & \olnot{z}_{v,w} \lor q_{v,w+1} \lor z_{v,w+1}
    && 
    \ifthenelse{\boolean{conferenceversion}}
    {
      \text{for all $w \in [k-4]$,\;\;\;\;}
    }
    {
      \text{$ v\in[n]$, $w \in [k-4]$,}
    } 
    \\
    \label{eq:QdefinedNarrowC}
    & \olnot{z}_{v,k-3} \lor q_{v,k-2} \lor q_{v,k-1}
    \ifthenelse{\boolean{conferenceversion}}{\eqcomma}{}
    && 
    \ifthenelse{\boolean{conferenceversion}}{}{\text{$ v\in[n]$,}}
  \end{align}
  \ifthenelse{\boolean{conferenceversion}}{for all $ v \in [n]$,}{}
  and the rest of the clauses
  are
\begin{align}
  &
  \olnot{p}_{u,v} \lor \olnot{p}_{u',v}
  &  & 
  \ifthenelse{\boolean{conferenceversion}}
  {\text{for all $ u\neq u'$ and $v$,}}
  {\text{$u,u'\in[k]$, $u\neq u'$, $ v\in [n] $,}}%
  \label{eq:PinjectiveNarrow}
  \\
  & 
  \olnot{p}_{u,v} \lor r_{v}
  &  & 
  \ifthenelse{\boolean{conferenceversion}}
  {\text{for all $ u $ and $ v $,}}
  {\text{$u\in[k]$, $v\in[n]$,}}%
  \label{eq:PimageNarrow}
  \\
  &
  \olnot{r}_{v} \lor \olnot{r}_{v'}  \lor r_{v,v'} 
  &  & 
  \ifthenelse{\boolean{conferenceversion}}           
  {\text{for all $ v \neq v'$,}}
  {\text{$v,v'\in[n]$, $ v\neq v' $,}}
  \label{eq:RpairNarrow}
  \\
  &
  \olnot{r}_{v,v'} \lor \olnot{q}_{v,w} \lor \olnot{q}_{v',w} 
  &  &
  \ifthenelse{\boolean{conferenceversion}}           
  {\text{for all $v \neq v'$ and all $ w $,}}
  {\text{$v,v'\in[n]$, 
      $v \neq v'$,
      $ w\in [k-1]$.}}
  \label{eq:QinjectiveNarrow}
\end{align}
\end{subequations}
\ifthenelse{\boolean{conferenceversion}}
{where as before $ u , u' $ range over $ [k] $, $ v , v' $ range
  over $ [n] $, and $w$~ranges over~$ [k-1] $.}
{}
The \mbox{$3$-CNF} formula consisting of the clauses in
\eqref{eq:PdefinedNarrowA}--\eqref{eq:QinjectiveNarrow}, 
which we will denote~$\ERPHPnot{k}{n}{k-1}$, 
is the formula for which we will prove \refth{thm:largenarrow}.
It is easy to verify that 
this formula
has 
$\bigoh{kn^2}$~clauses over
$\bigoh{n^2}$~variables. 
We note that if we did not insist on bringing
the clause size all the way down to~$3$, then we could get a
$4$-CNF formula with
$\bigoh{kn^2}$~clauses over
$\bigoh{kn}$~variables
by not converting the $4$\nobreakdash-clauses
in~\eqref{eq:QinjectiveWide}
into the $3$\nobreakdash-clauses
\refeq{eq:RpairNarrow} 
and~%
\refeq{eq:QinjectiveNarrow}.
Our proof of
\refth{thm:largenarrow} 
works for this formula as well after straightforward adjustments and
gives a slightly better lower bound expressed in terms of the number
of variables. 

\subsection{Proof of the upper bound}
\label{sec:proof-upper-bound}

Let us 
first
describe how we can refute
the 
formula $\ERPHPnot{k}{n}{k-1}$
in resolution. 
In order to do so, we consider all sequences of the form 
$(v_{1},v_{2},\ldots,v_{k},w_{1},w_{2},\ldots,w_{k}) $, 
where 
$ v_{u}\in
[n]$ 
and 
$ w_{u}\in[k-1]$,
and the corresponding clauses
\begin{equation}
  \label{eq:mapSequenceClause}
  \Lor_{u\in[k]} \olnot{p}_{u,v_{u}} \lor 
  \Lor_{u\in[k]} \olnot{q}_{v_{u},w_{u}}.
\end{equation}
We derive all such clauses from  the axiom clauses of
$\ERPHPnot{k}{n}{k-1}$, 
and from these clauses
it is then straightforward to
obtain a
contradiction.
All of these 
derivations
are efficient, so the size of
the whole refutation is dominated by the number of clauses in~%
\eqref{eq:mapSequenceClause}.

For each clause in   \refeq{eq:mapSequenceClause} we are in one of
two cases:
either $ v_{u} =v_{u'}$ 
holds
for some \mbox{$ u \neq u'$}, 
or there must exist a pair $ v_{u} \neq v_{u'}$ with 
$ w_{u} = w_{u'} $ by the pigeonhole principle. 
In the former case, the clause~\eqref{eq:mapSequenceClause} is just a
weakening of the axiom~\eqref{eq:PinjectiveNarrow}, namely  
$ \olnot{p}_{u,v} \lor \olnot{p}_{u',v}$
with $ v=v_{u}=v_{u'} $.
In the latter case, we combine axioms 
$ \olnot{p}_{u,v_{u}} \lor r_{v_{u}}$ 
and
\mbox{$ \olnot{p}_{u',v_{u'}} \lor r_{v_{u'}}$} 
from \eqref{eq:PimageNarrow},
$ \olnot{r}_{v_{u}}\lor\olnot{r}_{v_{u'}} \lor r_{v_{u},v_{u'}}$
from \eqref{eq:RpairNarrow}, and 
$ \olnot{r}_{v_{u},v_{u'}} \lor \olnot{q}_{v_{u},w} \lor \olnot{q}_{v_{u'},w}$
from \eqref{eq:QinjectiveNarrow}, where $ w=w_{u}=w_{u'} $, 
to obtain the 
clause 
$ \olnot{p}_{u,v_{u}} \lor \olnot{p}_{u',v_{u'}} \lor \olnot{q}_{v_{u},w}
\lor \olnot{q}_{v_{u'},w}$.
It is easy to see that \eqref{eq:mapSequenceClause} can be derived
from this clause by weakening.
Since a constant number of clauses is involved in this derivation it
requires only constant space, and it is straightforward to verify
that it can in fact be carried out by a tree-like derivation in
space~$3$
(\ie keeping one clause in memory and resolving it with a
sequence of axioms). 

The rest of the refutation consists of derivations of all prefixes 
of clauses of the form \eqref{eq:mapSequenceClause} by
backward induction. For the inductive step we assume that we are able
to derive any prefix clause of  
size~$t$ in clause space $(2k-t)+3$ 
and show how to derive
any prefix of \mbox{size $t-1$} in clause space~$(2k-t+1)+3$. The
refutation ends when we reach the prefix clause of
size~$0$ (\ie the empty clause) in clause space~$2k+3$.
%
%

Suppose first that we can derive each clause of the form 
\begin{equation}
  \label{eq:prefix-clause-inductive-i}
  \Lor_{u\in[k]} \olnot{p}_{u,v_{u}} \lor 
  \Lor_{u\in[k^* - 1]} \olnot{q}_{v_{u},w_{u}}
  \lor \,
  \olnot{q}_{v^*,w^*}
  = 
  A \lor 
  \olnot{q}_{v^*,w^*}
\end{equation}
for some $k^* < k$
in clause space $s$
(writing
$v^* = v_{k^*}$
and
$w^*  = w_{k^*}$
as a shorthand). 
We want to use the existence of such derivations to derive the clause
$A$ in space~$s+1$. 
To this end,  start with the axiom
$\olnot{p}_{k^*,v^*} \lor r_{v^*}$ 
and note that the literal~$\olnot{p}_{k^*,v^*}$
also appears in the left-hand part of $A$ 
in~\refeq{eq:prefix-clause-inductive-i}.
We resolve this clause with the axiom
$\olnot{r}_{v^*} \lor q_{v^*,1} \lor z_{v^*,1}$
to get
$\olnot{p}_{u,v^*} \lor q_{v^*,1} \lor z_{v^*,1}$.
Keeping the latter clause in memory, we invoke a subderivation in
space~$s$ of the clause  
$A \lor \olnot{q}_{v^*,1}$ 
and resolve to obtain
$A \lor z_{v^*,1}$.
Continuing, assume that we have derived
$A \lor z_{v^*,w}$
for some~$w \geq 1$.
Then we resolve this clause with the axiom 
$\olnot{z}_{v^*,w} \lor q_{v^*,w+1} \lor z_{v^*,w+1}$ 
to obtain
$A \lor q_{v^*,w+1} \lor z_{v^*,w+1}$.
Keeping the latter clause in memory, 
we derive
$A \lor \olnot{q}_{v^*,w+1}$
using no more space than \mbox{$s+1$} all in all,
and then resolve to get $A \lor z_{v^*,w+1}$.
When we reach the clause
$ A \lor z_{v^*,k-3}$ we resolve it with the axiom
$\olnot{z}_{v^*,k-3} \lor q_{v^*,k-2} \lor q_{v^*,k-1}$
and then with the inductively derived clauses
$A \lor \olnot{q}_{v^*,k-2}$ 
and 
$A \lor \olnot{q}_{v^*,k-1}$
to obtain~$A$.
We point out again that the clause space of this derivation \mbox{is $s+1$.}

%

After $ k $ steps of this backward induction we get to
clauses of the form $ \olnot{p}_{1,v_{1}} \lor \olnot{p}_{2,v_{2}}
\lor \ldots \lor \olnot{p}_{k,v_{k}} $.
%
To derive the empty clause we do 
$k$~more
steps of backward
induction, 
mimicking
the procedure in the previous paragraph.
Suppose that we have shown how to derive all clauses
\begin{equation}
  \label{eq:prefix-clause-inductive-ii}
  \Lor_{u\in[k^* - 1]} \olnot{p}_{u,v_{u}}  
  \lor \, 
  \olnot{p}_{k^*,v^*}  
  = 
  A \lor  \olnot{p}_{k^*,v^*}  
\end{equation}
for $k^* < k$
and want  to derive $A$.
To do so, first resolve the
axiom $ p_{k^*,1} \lor p_{k^*,2}\lor y_{k^*,2}$ 
with the inductively derived clause 
$A \lor \olnot{p}_{k^*,1}$ 
and then with 
$A \lor \olnot{p}_{k^*,2}$ 
to get~$A \lor y_{k^*,2}$.
Suppose that we have shown how to derive
$A \lor y_{k^*,v}$
in this way for $v \geq 2$.
In order to obtain
$ A \lor y_{k^*,v+1}$  
we resolve 
$\olnot{y}_{k^*,v} \lor p_{k^*,v+1} \lor y_{k^*,v+1}$ 
with 
$A \lor y_{k^*,v}$ 
and then with~%
$A \lor \olnot{p}_{k^*,v+1}$.
We iterate up to $ A \lor y_{k^*,n-2}$
and finally resolve the axiom 
$\olnot{y}_{k^*,n-2} \lor p_{k^*,n-1} \lor p_{k^*,n}$ 
with the clauses 
$A \lor y_{k^*,n-2}$, 
$A \lor \olnot{p}_{k^*,n-1}$, 
and 
$A \lor \olnot{p}_{k^*,n}$
to obtain~$A$. 
After $k$~steps of this second stage we reach the empty clause and
the refutation is complete. 
As before, the clause space goes up by an additive one for every
inductive step, so the clause space of the whole refutation 
\mbox{is $2k+3$.}

To analyze the size of the resolution refutation obtained in this way,
consider the prefix tree of the sequences 
$(v_{1},v_{2},\ldots,v_{k},w_{1},w_{2},\ldots,w_{k})$.
Each vertex of this tree corresponds to one of the clauses $ A $ 
derived during the backward induction, with the empty clause at the
root and clauses (\ref{eq:mapSequenceClause}) at the leaves.  The
length of the derivation of each clause is linear in the number of
children, and in addition we derived the leaves with a constant number of
steps. Therefore we can charge a constant amount of steps per vertex.
The size of the tree is $O(k^{k}n^{k})$, and it follows that this is
also the size of
the refutation. 
Furthermore, the refutation is tree-like since no intermediate
clause is used more than once.   
One can also observe that 
the width of the refutation is $2k+1$ and 
reaches this maximum at the induction step from sequences of 
\mbox{length $2k$} to sequences of \mbox{length $ 2k-1 $.}

\subsection{Proof of the lower bound for resolution}
\label{sec:proof-lower-bound}

As discussed in
\refsec{sec:intro-techniques},
we use a random restriction argument to prove our size lower bound for
resolution refutations of the formula
$\ERPHPnot{k}{n}{k-1} $.
We define a distribution $\PADistribution$ on partial assignments $\rstd$ by
picking a 
subset  $\chosenpset= \set{v_{1},v_{2},\ldots,v_{k}}$ of $k$~%
elements from~$[n]$ uniformly at random
and 
letting 
$\rstd$
assign values to variables as follows:
\begin{itemize} \itemsep=0pt
\item
  $ r_{v} = \top $ for all $v \in \chosenpset$; $ r_{v}=\bot $ otherwise;
\item
  $ r_{v,v'} = r_{v} \land r_{v'}$ for all $v \neq v'$;
\item
  $ p_{u,v_{u}} = \top $
  and
  $p_{u,v} = \bot $ 
  for 
  all $ u\in[k]$
  and all $v \neq v_u$;
  %
%
\item
  $ y_{u,v} $ for all $u$ and $v$
  are set arbitrarily 
  so as   to 
  satisfy the clauses~%
  \eqref{eq:PdefinedNarrowA}--\eqref{eq:PdefinedNarrowC};
\item
  $ q_{v,w} $ and $ z_{v,w} $ are left 
  unset for all
  $ v \in \chosenpset$ and all $w$;
\item
  $ q_{v,w}=b_{v}$ and $ z_{v,w}=b_{v} $ 
  for
  all $ v \in [n]\setminus \chosenpset $ and
  all $w\in[k-1]$,
  where
  $b_{v}\in\set{\bot,\top}$ 
  is chosen uniformly and independently at random for every
  $ v \in [n]\setminus \chosenpset $.

\end{itemize}
We want to argue that with high probability such restrictions
remove or at least significantly shrink wide clauses.


For $v \in [n]$,
let us say that the variables
$\set{ q_{v,1},\ldots,q_{v,k-1},z_{v,1},\ldots,z_{v,k-1} }$
\introduceterm{mention the pigeon~$v$}. 
We say that a clause (or term)
mentions $v$ if it contains some variable in this set and define the  
\introduceterm{pigeon-width} 
to be the number of pigeons mentioned.
%
%
The next lemma describes the effect of random restrictions
$\rstd$ from
$\PADistribution$  on clauses (or terms)
depending on their pigeon-width.
Namely, a sufficiently wide clause, \ie mentioning a lot of pigeons, is
satisfied by the random restriction with high probability, 
whereas a narrower clause 
may not have its truth value fixed by
the restriction but will
with high probability contain few pigeons afterwards.

\begin{lemma}\label{lmm:StrongRestriction}
  Let $k,\ell,n$ be natural numbers 
  such that 
  $ n\geq 16$ and $\ell \leq k \leq n/4\log n$.
  Let $\cla$ be either a 
  clause or term
  over
  the variables
  of $\ERPHPnot{k}{n}{k-1}$ and let $ \rstd $ be a random restriction
  sampled from
  the distribution~$\PADistribution $
  as defined above. 
  Then the pigeon-width of 
  $\restrict{\cla}{\rstd}$ 
  is less than
  $\ell$ 
  with probability at least
  $1 - \RestrictionBoundInline{\ell}$.
%
%
\end{lemma}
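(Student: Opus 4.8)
We want to show that for a clause or term $A$ over the variables of $\ERPHPnot{k}{n}{k-1}$, a random restriction $\rstd \sim \PADistribution$ brings the pigeon-width of $\restrict{A}{\rstd}$ below $\ell$ with probability at least $1 - \RestrictionBoundInline{\ell}$. The only variables that count toward pigeon-width are the $q_{v,w}$ and $z_{v,w}$ variables; all other variables ($p$, $r$, $r_{v,v'}$, $y$) are irrelevant to this notion. So we may restrict attention to which pigeons $v$ are mentioned by $A$ and whether $\rstd$ leaves them ``alive'' (i.e.\ fails to kill the literal that mentions them).

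**The plan.** Fix the set $M \subseteq [n]$ of pigeons mentioned by $A$; if $\abs{M} < \ell$ we are already done, so assume $\abs{M} \geq \ell$. For a pigeon $v \in M$, consider some fixed literal of $A$ over a variable mentioning $v$ (say $q_{v,w}$ or $\olnot{q}_{v,w}$ or the $z$-analogues). Under $\rstd$: if $v \in \chosenpset$ then that literal is left unset, so $v$ is still mentioned; but if $v \notin \chosenpset$ then $q_{v,w}$ and $z_{v,w}$ are all set to the common random value $b_v \in \set{\bot,\top}$, so the literal is satisfied with probability $1/2$ (killing the whole clause, which is even better) and falsified with probability $1/2$ (removing that literal, so $v$ is no longer mentioned via it). The key point is that for $A$ to retain pigeon-width $\geq \ell$ after restriction, we need at least $\ell$ pigeons of $M$ to survive, and a pigeon survives only if either it lies in $\chosenpset$, or \emph{all} of its mentioned literals happen to be falsified rather than satisfied. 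So I would bound: first, $\Pr[A$ has pigeon-width $\geq \ell$ after $\rstd] \leq \Pr[$at least $\ell$ pigeons of $M$ survive$]$. Then I split the surviving pigeons into those in $\chosenpset$ (at most $k$, and in fact at most $\abs{\chosenpset \cap M}$) and those outside, and use the randomness of the $b_v$'s and the random choice of $\chosenpset$ to get the bound.

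**Getting the numerical bound.** The cleanest route: to have $\ell$ surviving pigeons we need a set $T \subseteq M$ with $\abs{T} = \ell$ such that every $v \in T$ survives. Union-bound over the $\binom{\abs{M}}{\ell} \leq \binom{n}{\ell}$ choices of $T$. For a fixed $T$, the probability all of $T$ survives is at most the probability that, for each $v \in T$, either $v \in \chosenpset$ or $v \notin \chosenpset$ and the relevant literals of $v$ are all falsified. Since $\abs{\chosenpset} = k$, the probability that $v \in \chosenpset$ is $k/n$; conditioned on $v \notin \chosenpset$, the literals of $v$ are falsified (rather than satisfied) with probability at most $1/2$ (if $A$ has at least one literal mentioning $v$ — which it does, since $v \in M$). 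These events across distinct $v \in T$ are not independent because $\chosenpset$ is chosen without replacement, but one can bound the probability that a given $T$ of size $\ell$ is entirely inside $\chosenpset \cup \{$pigeons with all literals falsified$\}$; handling the non-replacement carefully, the probability that $v$ survives is at most $k/n + (1/2)$... wait, that is useless. The right split is: for $v$ to survive we need $v \in \chosenpset$ OR ($v \notin \chosenpset$ and $b_v$ chosen adversarially). Since $v \in T \subseteq M$ and $A$ mentions $v$, at least one literal over a $v$-variable is in $A$; only one of the two values of $b_v$ keeps $v$ mentioned (if $A$ has literals of both polarities on $v$-variables, $v$ dies for sure once $v \notin \chosenpset$; if only one polarity, it survives with probability $1/2$). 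So $\Pr[v \text{ survives} \mid v \notin \chosenpset] \leq 1/2$, and $\Pr[v \text{ survives}] \leq \Pr[v\in\chosenpset] + \Pr[v \notin \chosenpset]\cdot\tfrac12 \leq k/n + 1/2$ — still too weak because of the $1/2$. The fix is that we are asking for $\ell$ \emph{simultaneous} survivals and the $b_v$'s are independent, so the probability that $\ell$ specific non-chosen pigeons all survive is at most $2^{-\ell}$; combining with the count of chosen pigeons gives, roughly, $\Pr[\geq \ell \text{ survive}] \leq \sum_{j} \binom{k}{j}\binom{n}{\ell - j} 2^{-(\ell-j)}$, and since $k \leq n/4\log n$ one checks $\binom{k}{j} \leq (k)^{j}$ and $\binom{n}{\ell-j} \leq n^{\ell-j}$, so the sum is at most $(\ell+1) \cdot \max_j k^j n^{\ell-j} 2^{-(\ell-j)}$, which, using $k \leq n/(4\log n)$ and bookkeeping, is bounded by $\RestrictionBoundInline{\ell} = (4k\log n)^\ell / n^\ell$. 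I would carry out exactly this binomial estimate.

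**Main obstacle.** The conceptual content is easy; the real work is the numerical estimate tying the crude bound $\sum_j \binom{k}{j}\binom{n}{\ell-j}2^{-(\ell-j)}$ to the clean target $(4k\log n)^\ell/n^\ell$, making sure the trade-off between ``pigeons saved by landing in $\chosenpset$'' (cost a factor $\approx k/n$ each, but there are only $k$ of them, and choosing which $j$ of the $\ell$ are in $\chosenpset$ costs $\binom{\ell}{j}$) and ``pigeons saved by a lucky $b_v$'' (cost a factor $1/2$ each) works out with the stated constants. One must be slightly careful that the without-replacement sampling of $\chosenpset$ only helps (it makes $\Pr[T \subseteq \chosenpset]$ smaller than the with-replacement bound $(k/n)^{\abs T}$), and that the hypothesis $n \geq 16$, $\ell \leq k \leq n/4\log n$ is exactly what is needed to absorb the lower-order $(\ell+1)$ and $\binom{\ell}{j}$ factors into the $(4\log n)^\ell$ slack. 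I expect the proof to be a page of careful but routine inequality-chasing once the ``survival'' reformulation above is in place.
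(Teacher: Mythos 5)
Your argument hinges on a mischaracterization of which pigeons can be mentioned by $\restrict{\cla}{\rstd}$. You write that ``a pigeon survives only if either it lies in $\chosenpset$, or all of its mentioned literals happen to be falsified rather than satisfied.'' The second disjunct is false. If $v \notin \chosenpset$ then $\rstd$ assigns \emph{every} variable $q_{v,w}$ and $z_{v,w}$, so these variables cannot occur in $\restrict{\cla}{\rstd}$ no matter what $b_v$ is: either a literal is satisfied and the whole clause is killed (pigeon-width $0$), or the $v$-literals are falsified and hence deleted (so $v$ is no longer mentioned). You stated this correctly one sentence earlier (``removing that literal, so $v$ is no longer mentioned via it'') and then contradicted it. The truth is simpler than you think: when the restricted clause is not killed outright, the pigeons it mentions are exactly $\chosenpset \cap M$, where $M$ is the set of pigeons mentioned by $\cla$. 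The bit $b_v$ plays no role in keeping a pigeon alive.

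This slip makes your estimate $\sum_j \binom{k}{j}\binom{n}{\ell-j}2^{-(\ell-j)}$ both unjustified and numerically hopeless: the $j=0$ summand $\binom{n}{\ell}2^{-\ell}$ grows without bound in $n$ for every fixed $\ell \geq 1$, whereas the target $\RestrictionBoundInline{\ell}$ tends to $0$. No amount of bookkeeping with the stated hypotheses can make that sum fit under the claimed bound.

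What is missing is a case split on $r = \setsize{M}$, which is what the paper's proof does. If $r > 2k\log n$, the independent bits $b_v$ are indeed used, but in the opposite direction from your reformulation: with probability at least $1 - n^{-k}$ some literal over a non-chosen pigeon is satisfied, so $\restrict{\cla}{\rstd} = \TrueValue$ and the pigeon-width is $0$. If $r \leq 2k\log n$, the bad event is contained in $\set{\,\setsize{\chosenpset \cap M} \geq \ell\,}$, and the hypergeometric tail for $\setsize{M} \leq 2k\log n$ is bounded by $\RestrictionBoundInline{\ell}$. Neither bound works alone: the pure hypergeometric tail $\Pr\bigl[\setsize{\chosenpset\cap M}\geq\ell\bigr]$ can be close to $1$ when $M$ is large, and it is precisely in that regime that the clause-kill argument takes over.
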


\begin{proof}
%
  Let us assume that $\cla$ is a clause---the proof for terms
  (which will be used for \PCR  and Sherali-Adams)
  is completely analogous.
  Let $ v_{1},\ldots,v_{r} $ be the pigeons mentioned in $ \cla $
  sorted in some order and let 
  $ \lita_{1}, \ldots, \lita_{r}$ 
  be a sequence of literals such that $ \lita_{i}$ witnesses
  that $ \cla $ mentions~$ v_{i} $.
%
%

  If $r > 2k \log n$, then the probability that the clause~$A$ is
  not satisfied by the restriction is at most
  \ifthenelse{\boolean{conferenceversion}}{%
    \begin{multline}
      \prod^{r}_{i=1} 
      \CondProb{\rstd(\lita_{i})\neq \top}
               {\rstd(\lita_{j})\neq \top \text{ for $ j<i $}} 
      \leq 
      \\ 
      \leq 
      \prod^{r}_{i=1} 
      \CondProb{\rstd(\lita_{i})\neq \top}
               {v_j \notin \chosenpset \text{ for $ j<i $}} 
      \leq 
      \\ 
      \leq 
      \prod^{r}_{i=1}{\left( \frac{1}{2} + \frac{k}{n-i}\right)} 
      < {\left( \frac{5}{8} \right)}^{2k \log n} < \frac{1}{n^{k}}
      \eqperiod
    \end{multline}%
  }{%
    \begin{align}
      \nonumber
      \Prob{\rstd(\lita_{i})\neq \top \text{ for all } i = 1, \ldots, r}
      &\leq
      \Prob{\rstd(\lita_{i})\neq \top \text{ for all } i = 1, \ldots, \ceiling{2k \log n}}
      \\
      \nonumber
      &=
      \prod_{i=1}^{\ceiling{2k \log n}}
      \CondProb{\rstd(\lita_{i})\neq \top}
               {\rstd(\lita_{j})\neq \top \text{ for $ j<i $}} 
      \\
      &\leq 
      \prod_{i=1}^{\ceiling{2k \log n}}
      \CondProb{\rstd(\lita_{i})\neq \top}
               {v_j \notin \chosenpset \text{ for $ j<i $}} 
      \\ 
      \nonumber
      &\leq 
      \prod_{i=1}^{\ceiling{2k \log n}}
      \left( 
      \frac{1}{2} + \frac{k}{n-i}
      \right) 
      \\
      \nonumber
      &< 
      {\left( \frac{5}{8} \right)}^{2k \log n} < 
      \ \frac{1}{n^{k}}
      \eqperiod
    \end{align}}
  To see this, note that the event
  $\rstd(\lita_{i})\neq \top$
  occurs either if the pigeon $ v_{i} $ is not picked or if the 
  literal~$\lita_{i}$ is set to the wrong value. 
  %
  Assuming that no pigeon
  $v_1, \ldots, v_{i-1}$
  has been picked before~$v_i$, the conditional probability of $v_{i}$ being
  included in~$\chosenpset$ is   $k/(n-i)$, and is less otherwise.
  If $v_i \in \chosenpset$, then $\lita_{i}$ gets the wrong value with
  probability $1/2$.  
  %
  %
  The final inequalities hold because the
  ratio $ k/(n-2k \log n) $ is at most $1/(2\log n)$, and therefore it is
  at most $ 1/8 $ for $n\geq 16$.

  If instead the number of pigeons mentioned by~$A$ is
  $ r \leq 2 k \log n $, 
  we want to bound the probability that there are 
  at least $\ell$ pigeons mentioned in~$\cla$ that are chosen
  in~$\chosenpset$ 
  and hence 
  survive. 
  The choices of $\chosenpset$ with exactly $i$ pigeons mentioned
  in~$\cla$ are $\binom{r}{i}\binom{n-r}{k-i}$.
%
  Considering all possible intersections of size at
  least $\ell$ between 
  the set $\chosenpset$ and the $r$~pigeons mentioned
  in~$\cla$,
  we obtain that the probability of 
  $\ell$ surviving pigeons is at most
  \ifthenelse{\boolean{conferenceversion}}             
  {%
    \begin{multline}
      \sum^{k}_{i=\ell}
      \left.
      \binom{r}{i}\binom{n-r}{k-i}
      \right.\!\! \left/
      \binom{n}{k}
      \right. 
      \leq 
      \\ 
      \leq 
      \left.
      k \binom{\lfloor{2k \log n}\rfloor}{k} \binom{n}{k-\ell}
      \right.\!\! \left/
      \binom{n}{k} 
      \right.
      \leq 
      \frac{{(3 k^{2} \log n)}^{k}}{n^{\ell}} \eqperiod
    \end{multline}%
  }{%
    \begin{align}
      \nonumber 
      \sum^{k}_{i=\ell}
      \binom{r}{i}\binom{n-r}{k-i}{\binom{n}{k}}^{-1} 
      &\leq 
      k \binom{\lfloor{2k \log n}\rfloor}{k} \binom{n}{k-\ell}
      {\binom{n}{k}}^{-1} 
      \\
      & \leq
      \frac{k (2k \log n)^{k}}{k!} 
      \cdot
      \frac{n!}{(k-\ell)!(n-k+\ell)!}
      \cdot
      \frac{(n-k)!k!}{n!}
      \\
      \nonumber 
      & \leq
      k (2k \log n)^{k}
      \cdot
      \frac{1}{(k-\ell)!}
      \cdot
      \frac{1}{{(n-k)}^{\ell}} 
      <
      \frac{k (2k \log n)^{k}}{{(n-k)}^{\ell}} \eqperiod
    \end{align}%
    To finish
    the computation we use that $n \geq 16$ and $k \leq
    n/4\log n$ to get that $k \leq n/16$, and we observe that
    $k{(16/15)^{\ell}}\leq 2^{k}$ for every $1\leq \ell \leq k$. We
    obtain that
    \begin{equation}
      \frac{k (2k \log n)^{k}}{{(n-k)}^{\ell}}
      \leq
      \frac{k {(2k \log n)}^{k}}
           {(15n / 16 )^{\ell}} 
      =
      k(16/15)^{\ell} \cdot \frac{{(2k \log n)}^{k}}{n^{\ell}}
      \leq
      \RestrictionBoundDisplay{\ell}\eqperiod
    \end{equation}}
  This concludes the proof.
\end{proof}

We can use
\reflem{lmm:StrongRestriction} 
to show that if we hit a sufficiently short
resolution refutation of $\ERPHPnot{k}{n}{k-1}$
with a random restriction~$\rstd$, 
then in the restricted refutation all clauses 
are likely to have small pigeon-width. 
The reason this is useful is that 
the distribution~$\PADistribution$
is constructed so that 
the restricted
formula is just the standard pigeonhole principle formula, or rather,
a $3$-CNF version of it (up to renaming of variables).
To spell this out explicitly,
after renaming the $k$~pigeons in~$[n]$ 
chosen by $\rstd$ to $1,\ldots,k$, 
what remains is the following collection of
\ifthenelse{\boolean{conferenceversion}}{%
  clauses (with $ v , v' $ ranging
  over $ [k] $ and $ w $ ranging over $ [k-1] $ unless stated otherwise):}
{clauses:}
\begin{subequations}
  \begin{align}
    &  q_{v,1} \lor z_{v,1} 
    && \ifthenelse{\boolean{conferenceversion}}{\text{for all $ v$,}}{\text{$ v\in [k]$,}} 
    \label{eq:QdefinedRestrictedA} 
    \\
    & \olnot{z}_{v,w} \lor q_{v,w+1} \lor z_{v,w+1}
    && \ifthenelse{\boolean{conferenceversion}}{\text{for all $ v $ and $w \in [k\!-\!4]$,}}{\text{$ v \in [k]$, $w \in [k-4]$,}} 
    \label{eq:QdefinedRestrictedB} 
    \\
    & \olnot{z}_{v,k-3} \lor q_{v,k-2} \lor q_{v,k-1}
    && \ifthenelse{\boolean{conferenceversion}}{\text{for all $ v $,}}{\text{$ v\in [k]$,}}
    \label{eq:QdefinedRestrictedC} 
    \\
    &       
    \label{eq:QinjectiveRestricted} 
    \olnot{q}_{v,w} \lor \olnot{q}_{v',w}
    & &  \ifthenelse{\boolean{conferenceversion}}{\text{for all $
        v\neq v' $ and  $ w $.}}{\text{$v,v'\in [k]$, $v \neq v'$,  $ w\in [k-1]$.}}
  \end{align}
\end{subequations} 
But the clauses
\refeq{eq:QdefinedRestrictedA}--\refeq{eq:QinjectiveRestricted},
which we will denote $\TPHP$, 
can easily be shown to require
almost
maximal pigeon-width in resolution.
%

\begin{lemma}\label{lmm:HardRestrictedFormula}
  Every resolution refutation of $\TPHP$ has pigeon-width at least
  $k-1$.
\end{lemma}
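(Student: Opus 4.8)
The plan is to prove this by an adversary (Spoiler--Duplicator) argument in the spirit of the classical resolution width lower bound for the pigeonhole principle, but carried out over \emph{blocks} of variables, one block per pigeon, rather than over single variables. For a pigeon $v$ write $\mathrm{Vars}(v) = \{q_{v,1},\dots,q_{v,k-1},z_{v,1},\dots,z_{v,k-1}\}$ for its block; the pigeons mentioned by a clause $C$ form its \emph{block set}. Given a set $P\subseteq[k]$ of pigeons and an injection $\sigma\colon P\to[k-1]$, let $\alpha_\sigma$ be the partial assignment with domain $\bigcup_{v\in P}\mathrm{Vars}(v)$ that encodes $\sigma$ canonically: for each $v\in P$ set $q_{v,\sigma(v)}=1$, set $q_{v,w}=0$ for $w\neq\sigma(v)$, and set $z_{v,w}=1$ exactly when $w<\sigma(v)$. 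Let $\mathcal{H}$ consist of all such $\alpha_\sigma$ with $|P|\le k-1$. First I would record two easy facts. (i) No $\alpha_\sigma\in\mathcal{H}$ falsifies an axiom of $\TPHP$: for a chain clause of a pigeon $v\notin P$, or a hole axiom involving such a pigeon, some literal is simply unset; and for pigeons in $P$ one checks straight from the definition that the chain clauses \eqref{eq:QdefinedRestrictedA}--\eqref{eq:QdefinedRestrictedC} and the hole axioms \eqref{eq:QinjectiveRestricted} are all satisfied (the hole axioms because $\sigma$ is injective). (ii) $\mathcal{H}$ is closed under restricting $\sigma$ to a smaller set of pigeons, and whenever $|P|\le k-2$ there is a hole not used by $\sigma$, so for every pigeon $v_0\notin P$ there is an extension $\alpha_{\sigma'}\in\mathcal{H}$ of $\alpha_\sigma$ with $v_0$ in its domain.

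Next I would assume toward a contradiction that $\pi$ is a resolution refutation of $\TPHP$ in which every clause mentions at most $k-2$ pigeons, and walk backwards through the derivation DAG from the final empty clause, maintaining at each step a clause $C$ of $\pi$ together with an assignment $\alpha\in\mathcal{H}$ whose block set is exactly the block set of $C$ and which \emph{falsifies} $C$, i.e.\ sets every literal of $C$ to false. This holds initially with $C=\emptyset$ and $\alpha$ the empty assignment. By fact (i) the current clause is never an axiom, so it is derived by weakening or resolution. If $C\supseteq C'$ by weakening, restrict $\alpha$ to the block set of $C'$; the restriction lies in $\mathcal{H}$ and still falsifies $C'$. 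If $C=A\vee B$ is resolved from $A\vee x$ and $B\vee\overline{x}$, with $x$ a variable of some pigeon $v_0$: when $v_0$ is already in the block set of $\alpha$, the value $\alpha(x)$ is defined and $\alpha$ falsifies $A\vee x$ or $B\vee\overline{x}$ according to whether it is $0$ or $1$, so we move to that premise and restrict the block set; when $v_0$ is not in the block set, that set has size at most $k-2$, so by fact (ii) we first extend $\alpha$ to cover $v_0$ (the extension still falsifies $A$ and $B$) and then proceed as before. In every case the premise we move to is again a clause of $\pi$, hence mentions at most $k-2$ pigeons, so the invariant is restored. Since the DAG is finite and acyclic this walk must terminate at a source, i.e.\ an axiom, contradicting fact (i). Therefore every resolution refutation of $\TPHP$ contains a clause mentioning at least $k-1$ pigeons.

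The reason I would use this formulation rather than the usual Ben-Sasson--Wigderson progress measure is also where the only genuine subtlety lies. A semantic measure --- the minimum number of pigeon-blocks of $\TPHP$ whose clauses imply $C$ --- does \emph{not} agree with pigeon-width here: because the OR-chain clauses encode only one direction of the intended equivalences, a clause like $\bigvee_w \overline{q}_{u,w}$ mentions a single pigeon but needs two pigeon-blocks to be derived semantically (one other pigeon must ``occupy a hole'' to force $u$ out of it), so the measure argument would only give a bound of roughly $k/3$. The Spoiler--Duplicator walk avoids any semantic converse: it uses only that a resolution step inspects one variable, hence at most one additional pigeon-block, and that the canonical partial placements in $\mathcal{H}$ really do satisfy the chain clauses \eqref{eq:QdefinedRestrictedA}--\eqref{eq:QdefinedRestrictedC}. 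Checking this last point --- that the particular $3$-CNF encoding with the $z$-variables is well behaved under $\mathcal{H}$ --- is the small piece of real work, and it is precisely what the OR-chain construction of the formula was designed to make go through. (For the few small values of $k$ where the index ranges in the chain clauses degenerate the statement is trivial, since any refutation must use a hole axiom, which already mentions two pigeons.)
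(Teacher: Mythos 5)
Your proof is correct and is essentially the argument the paper gives: the paper phrases it as a prosecutor--defendant game in the style of Pudl\'ak and Atserias--Dalmau, where the defendant maintains a partial matching of pigeons into holes (your family $\mathcal{H}$ of canonical placement assignments) and the prosecutor walks backward through the refutation DAG maintaining a minimal falsifying assignment (your invariant). The framing is superficially different, but the content --- a backward walk preserving a partial-matching assignment whose block set matches the current clause, with one extra pigeon-block tolerated at a resolution step --- is the same.
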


\begin{proof}
%
  We use a game argument in the style
  of~\cite{Pudlak00ProofsAsGames,AD08CombinatoricalCharacterization}
  adapted to the notion of pigeon-width. The game is played between a
  \emph{prosecutor} and a \emph{defendant}. At each step of the game
  the prosecutor queries the defendant for the value of a variable of
  \TPHP and stores the answer in his record. The prosecutor is also
  allowed to erase variable assignments from his record after any
  query, but if so the defendant can answer differently next time
  she is asked
  about an erased variable.
  The goal of the prosecutor is to force the defendant to
  falsify a clause from $\TPHP$, while the goal of the defendant is
  to answer queries without falsifying any axiom clause in 
  this 
  formula.

  To establish the lemma, it is sufficient to show that the prosecutor
  cannot win unless at some point he holds a record that mentions $k$
  pigeons. The reason for this is that if there exists a   resolution
  refutation $\proofstd$ of pigeon-width 
  $\ell < k-1$, 
  then the prosecutor can use
  such a refutation to construct a strategy that never
  mentions more than 
  $\ell + 1$ pigeons. 

  To build a winning strategy from a refutation~$\proofstd$, 
  the prosecutor walks backwards through 
  the associated graph~$G_\proofstd$ 
  from the final empty clause all the way to some
  axiom clause. The invariant maintained is that at each step the 
  current assignment on record is the minimal falsifying assignment
  for the clause currently visited in~$G_\proofstd$.
  %
  %
  At the beginning of the game the empty
  record corresponds to the empty clause in the refutation.
  If the current clause was obtained by resolution,   the
  prosecutor queries the resolved variable 
  (which might temporarily increase 
  the number of pigeons on record
  by~$1$), 
  moves to the premise falsified by the answer, and then forgets all
  assignments not needed to falsify that clause. For a weakening step,
  the prosecutor just needs to forget variables.  The prosecutor wins
  when the game reaches a source vertex in~$G_\proofstd$ (if not
  earlier), since by the invariant the corresponding axiom clause is
  falsified by 
  the assignment on record at that point.

  Switching to the lower-bound perspective, 
  let us now briefly
  describe a defendant strategy that works 
  against prosecutors mentioning less than $k$~pigeons.
  The defendant privately keeps
  a partial matching of the pigeons mentioned in the 
  current record of the prosecutor into holes, making
  sure that this mapping is compatible with the partial assignment in
  his record.
  If the prosecutor asks about a variable   which mentions 
  a pigeon already in the domain of the defendant's partial matching,
  she
  answers consistently with 
  her  matching.
  If the prosecutor erases all variables mentioning a pigeon, the
  defendant removes that pigeon from the partial mapping, freeing up
  the  corresponding hole for later reuse.  
  If the prosecutor queries a
  variable that mentions a new pigeon, we are in one of two cases:
  either there is at least one free hole, or the record mentions $k-1$
  pigeons. In the first case the defendant assigns the new pigeon to some
  free hole and updates 
  her
  partial matching accordingly. In the
  second case the defendant has achieved 
  her goal---although she is now forced to falsify a clause of $\TPHP$
  and loses, the prosecutor
  was able to 
  win only by 
  compiling a record that 
  mentions $k$~pigeons.
%
\end{proof}

Putting all the pieces together we can now prove the lower bound in
\refth{thm:largenarrow}.   
Namely, let $\proofstd$ be a resolution refutation of $\ERPHPnot{k}{n}{k-1}$
of size $ S $. Hit $\proofstd$ with a random restriction
$\rstd$ distributed according to $\PADistribution$. Since resolution
refutations are preserved under restrictions,
$\restrict{\proofstd}{\rstd}$ is a refutation of
$\restrict{\ERPHPnot{k}{n}{k-1}}{\rstd}$ which, as discussed above, is
$\TPHP$ after renaming of variables.  By
\reflem{lmm:HardRestrictedFormula}, this refutation must have
pigeon-width 
at least~$k-1$
with probability~$1$. 
On the other hand, 
using
\reflem{lmm:StrongRestriction} with 
$\ell = k-1$ 
and taking a  union bound over
all clauses in~$\proofstd$, the probability that this happens is
at most 
$S \cdot \RestrictionBoundInline{k-1}$ 
for large enough~$n$. We
can hence
conclude that 
$S \geq \InverseRestrictionBoundInline{k-1}$, 
and the proof of~\refth{thm:largenarrow} is complete.

\section{Algebraic and semialgebraic proof systems}
\label{sec:algebraic-semialgebraic}

%
%

Let us now show how the  size lower bound for resolution in
\refsec{sec:proofs-const-width} can be generalized to polynomial
calculus resolution (\PCR) and Sherali-Adams resolution (\SAR).    
The overall structure of the size lower bound proof is very similar to
that for resolution in that we first establish a lower bound on a
parameter analogous to the pigeon-width in 
\refsec{sec:proofs-const-width}, which we call
\introduceterm{pigeon-degree} for \PCR and \introduceterm{pigeon-rank}
for \SAR, and then plug this bound into the random restriction argument as in
the proof of \reflem{lmm:StrongRestriction}.

In this section, we also discuss how upper bounds for 
\PCR and \SAR analogous to those for resolution in
\refth{th:main-thm-informal}
can be established.
%
%
The upper bound in resolution more or less immediately carries over to
\PCR, in the sense that it is very easy to show that a resolution
refutation can be simulated easily in \PCR in essentially the same
size and with \PCR degree matching the resolution width.
For \SA and \SAR it requires a bit more work to construct such
efficient simulations  and we discuss it in some detail below.
%
It should be noted that
while  \PCR degree and \SAR rank upper bounds
$\bigoh{k}$
are sufficient to obtain
refutation of size $n^{\bigoh{k}}$ in both proof systems, using explicit
simulations like the ones discussed 
in this section
gives better bounds.

\subsection{Lower bound on degree for polynomial calculus resolution}
\label{sec:lower-bound-degree}

%
%
%
%

In a natural generalization of the terminology in
\refsec{sec:proofs-const-width}, we say that not only the
variables~$q_{v,w}$ and~$z_{v,w}$ of $\TPHP$ but also their
twins~$\dualvar{q}_{v,w}$ and~$\dualvar{z}_{v,w}$ \emph{mention} the
pigeon~$v$.  The \emph{pigeon-degree} of a monomial is the number of
pigeons that are mentioned by its variables, the pigeon-degree of a
polynomial is the maximum pigeon-degree of its monomials, and
the pigeon-degree of a PCR refutation of $\TPHP$ is the maximum
pigeon-degree of the polynomials in the refutation. The following lower
bound for pigeon-degree of PCR refutations is the analogue of
\reflem{lmm:HardRestrictedFormula} for resolution.

\begin{lemma}
  \label{lmm:pcr-pigeon-degree}
  Every \PCR refutation of $\TPHP$ has pigeon-degree
  at least $\lceil{\frac{k-1}{2}}\rceil$.
\end{lemma}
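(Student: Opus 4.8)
The plan is to adapt, at the level of polynomial calculus, Razborov's linear-algebraic lower bound for the degree of pigeonhole refutations, keeping everything phrased in terms of pigeon-degree. Suppose towards a contradiction that $\TPHP$ had a \PCR refutation $\pi=(p_1,\dots,p_\tau)$ with every $p_t$ of pigeon-degree $<d:=\lceil\tfrac{k-1}{2}\rceil$ (we may assume $k$ is large enough that $d\ge 3$; for smaller $k$ the claimed bound is at most $2$ and can be checked by hand). I would then build a linear functional $L$ on the vector space $V$ of polynomials of pigeon-degree $<d$ such that: (i) $L(1)\neq 0$; (ii) $L$ vanishes on the polynomial translation of every clause of $\TPHP$ and on every Boolean axiom $x^2-x$ and complementarity axiom $1-x-\dualvar{x}$ (all of which have pigeon-degree at most $2<d$ and hence lie in $V$); and (iii) $L(x\cdot f)=0$ whenever $x$ is a variable, $L(f)=0$, and $xf$ has pigeon-degree $<d$. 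Given such an $L$, an immediate downward induction along $\pi$ gives $L(p_t)=0$ for all $t$ --- axioms by (ii), linear combinations by linearity, multiplications by (iii) --- and applied to $p_\tau=1$ this contradicts (i).

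The first thing I would do to build $L$ is to strip away the twin variables and the extension variables $z_{v,w}$ without changing pigeon-degrees. Using the complementarity axioms one replaces every twin $\dualvar{x}$ by $1-x$; since the twin of a pigeon-$v$ variable still mentions only pigeon $v$, pigeon-degree is unaffected. The extension variables then get the substitution $z_{v,w}\mapsto\prod_{w'=w+1}^{k-1}q_{v,w'}$: under it the chain axioms of pigeon $v$ become either the usual ``pigeon $v$ goes somewhere'' axiom $\prod_{w=1}^{k-1}q_{v,w}$ or polynomials that vanish modulo the Boolean axioms in pigeon-degree $1$, and a \PCR step ``multiply by $z_{v,w}$'' unfolds into the single-variable multiplications by $q_{v,k-1},q_{v,k-2},\dots,q_{v,w+1}$ in turn, each mentioning only pigeon $v$ and so never raising the pigeon-degree past that of the original step. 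After these reductions $\pi$ has become a \PCR-style derivation of $1$ from the axioms of $\PHPk$ in the variables $\{q_{v,w}\}$ --- the pigeon clauses $\prod_{w}q_{v,w}$ and the hole clauses $(1-q_{v,w})(1-q_{v',w})$ --- still with every line of pigeon-degree $<d$, where the pigeon-degree of a $q$-monomial is just the number of distinct pigeons occurring in it. So it suffices to produce $L$ with properties (i)--(iii) for $\PHPk$.

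This last step is the technical heart, and where I expect the difficulty to lie. Following Razborov, I would define $L$ on a multilinear $q$-monomial $M$ mentioning a pigeon set $S$ with $|S|<d$ by averaging the value of $M$ over a distribution on injective partial maps from $S$ (or a slightly larger pigeon set) into the holes $[k-1]$, but the distribution has to be set up recursively --- a filtration over pigeons --- so that $L$ commutes with multiplication by a variable and not merely annihilates the axioms. Property (ii) is then straightforward: any completion of such a partial matching sends pigeon $v$ to a hole, killing $\prod_{w}q_{v,w}$, and is injective, killing $(1-q_{v,w})(1-q_{v',w})$, while the Boolean axioms are dead because $L$ is defined on multilinear representatives. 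The delicate property is (iii), and this is exactly where the threshold $d=\lceil\tfrac{k-1}{2}\rceil$ is forced: as long as strictly fewer than half of the $k-1$ holes are committed, a demand ``send pigeon $v$ to hole $w$'' can be met by re-routing the current partial matching in at least two essentially independent ways, which is what makes the linear system defining $L$ non-degenerate and lets it pass through multiplications. Once (i)--(iii) hold for $\PHPk$, the induction of the first paragraph closes the argument, and this lemma then plugs into the random-restriction argument of \reflem{lmm:StrongRestriction} exactly as \reflem{lmm:HardRestrictedFormula} did for resolution.

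The main obstacle, then, is the construction and verification of $L$ for $\PHPk$ --- in particular checking the multiplication-compatibility (iii) and making Razborov's counting at the threshold $(k-1)/2$ go through in the pigeon-degree filtration rather than the ordinary-degree one. Everything else --- eliminating twins and extension variables, and the downward induction along $\pi$ --- is routine, though it requires carefully tracking pigeon-degrees throughout, just as one tracks pigeon-width in \reflem{lmm:HardRestrictedFormula}. If one prefers to skip the substitution of the $z_{v,w}$, one can instead verify (ii) for the chain axioms of $\TPHP$ directly against $L$, which is equally routine since each $z_{v,w}$ mentions only its own pigeon.
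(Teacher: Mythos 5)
Your proposal takes a genuinely different route from the paper's, and it contains a real gap that you yourself flag but do not close: the construction and verification of the linear functional $L$ on polynomials of \emph{pigeon-degree} $<\lceil(k-1)/2\rceil$ is never carried out. This is not a routine loose end. The Razborov/Impagliazzo--Pudl\'ak--Sgall functional for $\PHPk$ is built on the filtration by \emph{ordinary} degree, whereas a monomial of pigeon-degree $d-1$ can have ordinary degree as large as $(d-1)(k-1)$ (a single pigeon contributes up to $k-1$ variables), so the known functional does not live on your space $V$ and cannot be cited. You would have to reprove the PHP degree lower bound from scratch over the pigeon-degree filtration, and in particular re-establish the multiplication-compatibility property (your (iii)) with respect to this coarser filtration; your heuristic about re-routing a partial matching ``as long as fewer than half the holes are committed'' is phrased in terms of holes, while the filtration is by pigeons, and the argument that these coincide under your reduction is exactly what is missing. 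Whether such a pigeon-degree functional even exists with the required threshold is the entire content of the lemma.

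The paper sidesteps this difficulty by a different and shorter move. Instead of your multiplicative substitution $z_{v,w}\mapsto\prod_{w'>w}q_{v,w'}$ (which preserves pigeon-degree but keeps the ordinary degree high), the paper applies a \emph{linear} substitution into a different encoding $\altPHP$: $q_{v,w}\mapsto 1-x_{v,w}$, $\olnot{q}_{v,w}\mapsto x_{v,w}$, $z_{v,w}\mapsto 1-\sum_{j>w}x_{v,j}$, and $\olnot{z}_{v,w}\mapsto 1-\sum_{j\le w}x_{v,j}$. Because every variable indexed by pigeon $v$ becomes a degree-one polynomial mentioning only pigeon $v$, each monomial of pigeon-degree at most $d$ becomes a polynomial whose monomials mention at most $d$ pigeons in the $x$-variables; modding out by the axioms $x_{v,w}x_{v,w'}$ (no pigeon in two holes) and the Boolean axioms then leaves at most one $x$-variable per pigeon, hence ordinary degree at most $d+1$. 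At that point the paper simply invokes the known degree lower bound for $\altPHP$ from Theorem~3.9 of \cite{IPS99LowerBounds} as a black box. In short, the ``technical heart'' you defer --- a Razborov-style functional for the pigeon-degree filtration --- is precisely the piece of work that the paper's choice of substitution was designed to make unnecessary, and without it your argument is incomplete.
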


\begin{proof}
%
%
We prove the lower bound by studying a different encoding $\altPHP$ of
the pigeonhole principle for $k$ pigeons and $k-1$ holes described
in~\cite{Razborov98LowerBound}. Given any \PCR refutation of $\TPHP$
as defined in
\refeq{eq:QdefinedRestrictedA}--\refeq{eq:QinjectiveRestricted}
in which all monomials mention at most $d$~pigeons, we show how to
transform it into a refutation of degree~$d+1$ of~$\altPHP$.  Since
$\altPHP$ requires degree strictly larger than $\lceil \frac{k-1}{2}
\rceil$ by Theorem~3.9 in~\cite{IPS99LowerBounds}, it follows that 
$ d \geq \lceil \frac{k-1}{2} \rceil $.

The alternative formulation $\altPHP$ is defined on variables
$x_{v,w}$ for $v \in [k]$ and $w \in [k-1]$, where $x_{v,w}=1$ means
that pigeon $v$ sits in hole $w$. We stress that this 
interpretation of the variables is the opposite of the one we use 
for~$\TPHP$. Also, $\altPHP$ is not a (translation of~a) CNF formula
but consists of the following polynomials: 
\begin{subequations}
  \begin{align}
    & 1-\sum_{w \in [k-1]}x_{v,w} 
    && \text{$ v\in [k]$,} 
    \label{eq:altPHPdefinedRestrictedA} 
    \\
    & x_{v,w}x_{v',w}
    && \text{$ w \in [k-1]$, $v,v' \in [k]$, $ v\neq v' $,} 
    \label{eq:altPHPdefinedRestrictedB} 
    \\
    & x_{v,w}x_{v,w'}
    && \text{$ v \in [k]$, $w,w' \in [k-1]$, $ w\neq w' $.} 
    \label{eq:altPHPdefinedRestrictedC} 
  \end{align}
\end{subequations} 
To 
obtain
a degree-$(d+$$1)$ refutation for $ \altPHP $, the first step
is to apply a substitution~$\phpvariableremap$ 
to the variables in the refutation
in  pigeon-degree~$ d $ of~$\TPHP$.
For $q$\nobreakdash-variables
we
define $\phpvariableremap(q_{v,w}) = 1 - x_{v,w}$ and
$\phpvariableremap(\olnot{q}_{v,w}) = x_{v,w}$, and for
$z$\nobreakdash-variables 
we let
$\phpvariableremap(z_{v,w}) = 1 - \sum_{j>w}
x_{v,j}$ and $\phpvariableremap(\olnot{z}_{v,w}) = 1 - \sum_{j \leq
  w} x_{v,j}$.
This substitution transforms the refutation of %
$\TPHP$ into a sequence of polynomials 
over the variables in~$ \altPHP $. 
%
This is not yet a valid refutation, however, and in order to deal with this
we need to show how
to derive each substituted polynomial in the sequence. 
How to do so
depends
on what rule was used to derive the polynomial before the
substitution. 

For inference steps, 
if we derived $ x \polyp $ from $ \polyp $ then 
$\phpvariableremap(x\polyp)=\phpvariableremap(x)\phpvariableremap(\polyp)$ 
can
be derived from $ \phpvariableremap(\polyp) $ 
by 
a sequence of
multiplications and linear combinations, 
and if 
the polynomial was derived
via a linear combination, then the same derivation step is valid for
the substituted polynomials.

If $\polyp$ is an application of the Boolean axiom
$x^2 - x$
to a~$q$\nobreakdash-variable
or  $z$\nobreakdash-variable,
then $\phpvariableremap(\polyp)$
can be derived from Boolean axioms combined with
polynomials~\refeq{eq:altPHPdefinedRestrictedC}.
Applications of complementarity axioms are either vacuous (for
$q$\nobreakdash-variables)
or reduce to 
\refeq{eq:altPHPdefinedRestrictedA}
(for $z$\nobreakdash-variables).

%
%
%


Finally, we need to show how to derive
$\phpvariableremap(\polyp)$
if $\polyp$ is obtained from one of the clauses
\mbox{in~\refeq{eq:QdefinedRestrictedA}--\refeq{eq:QinjectiveRestricted}}.
We describe how to do this for
$\polyp = \olnot{z}_{v,w}q_{v,w+1}z_{v,w+1}$
as in
\refeq{eq:QdefinedRestrictedB};
the other cases 
are very similar. We have
\begin{equation}
  \begin{split}
    \phpvariableremap(\olnot{z}_{v,w}q_{v,w+1}z_{v,w+1}) 
    &= 
    \left( 1-\sum_{j \leq w} x_{v,j} \right) 
    \Big(1-x_{v,w+1} \Big)
    \left( 1-\sum_{j > w+1} x_{v,j} \right)
    \\
    &= 
    1 - \sum_{j \in [k-1]} x_{v,j} 
    + \sum_{\substack{j \neq j', \, j \leq w+1, \\ j' \geq w+1 }} 
    x_{v,j} x_{v,j'} \polyr_{j,j'} 
    \eqcomma
  \end{split}
\end{equation}
where 
$1 - \sum_{j \in [k-1]} x_{v,j}$ 
is~\refeq{eq:altPHPdefinedRestrictedA}
and
all polynomials
$x_{v,j} x_{v,j'} \polyr_{j,j'}$
can be derived by multiplications and linear combinations from~%
\refeq{eq:altPHPdefinedRestrictedC}.
%
%
Thus, $\phpvariableremap(\olnot{z}_{v,w}q_{v,w+1}z_{v,w+1}) $ can be
derived from~$ \altPHP $.
   
This shows how we can apply the substitution~$\phpvariableremap$ to a
refutation of~$\TPHP$ to obtain a refutation of $ \altPHP$. The
substitution exchanges variables indexed by the pigeon $v$ for
\mbox{degree-$1$} polynomials which mention just~$v$, and therefore each
monomial of this refutation mentions at most $ d $ pigeons as well.
We then postprocess the refutation of~$\altPHP$ by removing all
monomials that  mention the same pigeon twice or more and all the monomials that mention  more than one pigeon for the same hole.
This is possible using the axioms 
$x_{v,w}x_{v',w}$
in~\refeq{eq:altPHPdefinedRestrictedB}
and
$x_{v,w}x_{v,w'}$
in~\refeq{eq:altPHPdefinedRestrictedC},
and as a result we obtain a refutation of (total) degree at
most~$d+1$. The lemma follows. 
\end{proof}

\subsection{Size and rank upper bounds for Sherali-Adams refutations}
\label{sec:size-rank-upper-bound-sa}

%
%

Let us next switch focus to upper bounds and show that
\SAR can simulate resolution refutations efficiently
in term of size and rank. We remark that  a similar simulation is given
\cite{DantchevMartinRhodes2009Tight}, but since that paper uses a
slightly different definition of Sherali-Adams we give a
full description of the simulation here for completeness.

%
%

We start by introducing notation for two polynomial forms which we
will use to represent clauses.
For any pair of sets of propositional variables 
$Y, Z$, $Y \intersection Z = \emptyset$,
we let
\begin{align}
  \SumPoly{Y,Z}  &= \sum_{y \in Y} y + \sum_{z \in Z} \olnot{z} \\
\intertext{and}
  \MultPoly{Y,Z} &= \prod_{y \in Y} \olnot{y} \prod_{z \in Z} z 
  \eqperiod
\end{align}
Consider a clause 
$C
=
\Lor_{y \in V_C^{+}} y
\lor
\Lor_{z \in V_C^{-}} \olnot{z}
$
where
$V_C^{+}$ and $V_C^{-}$ 
are the sets of variables appearing positively and negatively in~$C$, 
respectively. 
%
%
Then we define 
\begin{align}
  \SumPolyBig{\clc} &= \SumPoly{V_{\clc}^{+},V_{\clc}^{-}} \\
  \intertext{and}
  \MultPolyBig{\clc} &= \MultPoly{V_{\clc}^{+},V_{\clc}^{-}} 
  \eqperiod
\end{align}
Observe that for any assignment of the variables to 
\mbox{$\TrueValue=1$} or \mbox{$\FalseValue=0$} it holds that
$\SumPoly{\clc} - 1 \geq 0$
and
$-\MultPoly{\clc}\geq 0$ 
if and only if $\clc$ is satisfied.
The former, additive inequality is how clauses are translated to
inequalities as discussed in 
\refsec{sec:preliminaries},
but for our simulation of resolution by \SAR we will need to work with
the latter, multiplicative version.

The following three lemmas show how to efficiently simulate the steps
in a resolution derivation.


\begin{lemma}[Simulation of axiom]
  \label{lem:SARdownload}
  For a clause $\clc$ of width~$w$ the inequality
  $ - \MultPoly{\clc}\geq 0$ has 
  a derivation in \SAR of rank $w+1$ and size $O(w^{2})$ from the
  inequality 
  $\SumPoly{\clc}-1 \geq 0$.
\end{lemma}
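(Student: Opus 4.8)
I need to derive the multiplicative inequality $-\MultPoly{\clc} \geq 0$ from the additive axiom $\SumPoly{\clc} - 1 \geq 0$ in Sherali-Adams resolution, with rank $w+1$ and size $O(w^2)$, where $w$ is the width of $\clc$. Let me write $\clc = \Lor_{y \in V^+} y \lor \Lor_{z \in V^-} \olnot z$ with $|V^+ \cup V^-| = w$. Recall $\SumPoly{\clc} = \sum_{y \in V^+} y + \sum_{z \in V^-} \olnot z$ and $\MultPoly{\clc} = \prod_{y \in V^+} \olnot y \prod_{z \in V^-} z$, so $-\MultPoly{\clc} \geq 0$ literally says "the monomial formed by negating every literal of $\clc$ is at most $0$" — which over $\{0,1\}$ assignments is equivalent to $\clc$ being satisfied, exactly as the excerpt notes.

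**The plan.** The key algebraic identity I would exploit is the telescoping-type factorization: over $0/1$ values, $1 - \MultPoly{\clc}$ equals a nonnegative combination of the literals of $\clc$, and more precisely I want to write $-\MultPoly{\clc}$ as $\alpha \cdot (\SumPoly{\clc} - 1) + (\text{nonnegative SAR terms})$. The cleanest route is induction on $w$. For $w = 1$, if $\clc = (y)$ then $\MultPoly{\clc} = \olnot y$ and $\SumPoly{\clc} - 1 = y - 1 = -\olnot y$ using the complementarity axiom $1 - y - \olnot y = 0$ (available in SAR); so $-\MultPoly{\clc} = -\olnot y = \SumPoly{\clc} - 1$ up to one use of complementarity, giving rank $2$ and constant size. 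For the inductive step, pick a literal $\ell$ of $\clc$ (say $\ell = y$, $y \in V^+$) and let $\clc'$ be the clause on the remaining $w-1$ literals. I would multiply the inductive derivation of $-\MultPoly{\clc'} \geq 0$ by the single variable $\olnot y$ (a legal SAR prefix-multiplication by $\prod_{i \in \nLit} (1 - x_i)$ with a singleton index set, since $\olnot y = 1 - y$), obtaining $-\olnot y \cdot \MultPoly{\clc'} = -\MultPoly{\clc} \geq 0$ — wait, that would need $-\MultPoly{\clc'} \geq 0$ to already hold, which it does not as a SAR-derivable fact for the subclause of the axiom we don't have. So instead the induction must thread through the $\SumPoly$ side.

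**The induction done right.** Here is the corrected inductive skeleton I would carry out. Order the literals of $\clc$ as $\ell_1, \dots, \ell_w$ and for $0 \le i \le w$ let $M_i = \prod_{j \le i}(\text{negation of } \ell_j)$ be the partial product (so $M_0 = 1$, $M_w = \MultPoly{\clc}$). The identity to establish, by downward induction on $i$, is that $-M_i \cdot (\text{remaining sum } \sum_{j > i} \ell_j) \cdot (\text{something}) \ge 0$ is derivable — more usefully, I want at stage $i$ the inequality obtained by taking the axiom $\SumPoly{\clc} - 1 \ge 0$ and multiplying through by $M_i$, which is a legal SAR operation ($M_i$ is a product of "$1 - x$" and "$x$" factors). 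Multiplying $\SumPoly{\clc} - 1$ by $M_i$ and simplifying using Boolean axioms $x^2 = x$, $x(1-x) = 0$: the term $M_i \cdot \ell_j$ vanishes for $j \le i$ (a literal times its own negation), so $M_i(\SumPoly{\clc} - 1) = M_i\big(\sum_{j > i} \ell_j\big) - M_i$. Then multiply by the negation of $\ell_{i+1}$ and add it back appropriately to peel one more literal: $M_{i+1} = M_i \cdot (\text{neg of } \ell_{i+1})$, and $M_i \cdot \ell_{i+1} + M_{i+1} = M_i$, so we can shuttle between $\sum_{j \ge i+1}$ and $\sum_{j \ge i+2}$ at the cost of a bounded number of multiplications and linear combinations. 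Iterating from $i = 0$ to $i = w$ collapses the sum entirely and leaves $-M_w = -\MultPoly{\clc}$, plus a residue that is a nonnegative combination of products $M_j \cdot \ell_{j+1}$ which each equal $0$ over Boolean assignments and are derivable from Boolean axioms — so they can be carried as legitimate SAR terms with coefficient absorbed. Each of the $w$ peeling steps introduces $O(w)$ monomials (the product $M_i \cdot \sum_{j > i} \ell_j$ has $O(w)$ terms), giving total size $O(w^2)$, and every intermediate polynomial has degree at most $w + 1$ (the product $M_i$ contributes $\le w$, times one more literal), giving rank $w + 1$.

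**Main obstacle.** The genuinely delicate part is bookkeeping the use of Boolean axioms $x^2 - x$ and $x - x^2$ and the complementarity axioms so that the "vanishing" terms $M_i \cdot \ell_j$ (for $j \le i$) and the repeated-variable monomials are actually eliminated within the SAR format rather than merely equal to zero semantically — SAR is syntactic, so each such simplification must be realized as adding a multiple of a Boolean or complementarity axiom, and I must verify these additions don't blow up the rank beyond $w+1$ or the size beyond $O(w^2)$. I expect this is routine but fiddly; multiplying a Boolean axiom $x^2 - x$ by the appropriate partial product keeps degree $\le w+1$ since $M_i$ already "contains" $x$ and the multiplication by $x$ is idempotent modulo that same axiom. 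I would handle it by stating a small sublemma: for any monomial $m$ of degree $\le w$ and any literal $\ell$ appearing in $m$, the identity "$m$ equals $m$ with $\ell^2$ replaced by $\ell$" is derivable in SAR in rank $\le w+1$ and $O(1)$ size, then invoke it $O(w)$ times.
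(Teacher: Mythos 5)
Your high-level plan is the same as the paper's: multiply the additive axiom $\SumPoly{\clc} - 1 \geq 0$ by a monomial built from the negations of the literals of $\clc$, and then cancel the resulting cross-terms using Boolean and complementarity axioms. The paper's execution is cleaner than your iterative ``peeling'' sketch and avoids any appearance of dynamic reasoning in a static system: writing $\clc = x \vee \bigvee_{y\in Y}y \vee \bigvee_{z\in Z}\olnot{z}$, it multiplies the axiom \emph{once} by the prefix $\MultPoly{Y,Z} = \prod_{y\in Y}\olnot{y}\prod_{z\in Z} z$, then adds $\lvert Y\rvert + \lvert Z\rvert$ separate nonnegative pieces to cancel the cross-terms $y\MultPoly{Y,Z}$ and $\olnot{z}\MultPoly{Y,Z}$, plus one final piece $(1-x-\olnot{x})\MultPoly{Y,Z}$, which together telescope to $-\olnot{x}\MultPoly{Y,Z} = -\MultPoly{\clc}$.

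The place your account genuinely falls short is the sublemma you propose to handle the bookkeeping. You phrase it only for the idempotency reduction $\ell^{2}\to\ell$, but the cancellation that actually carries the argument is eliminating a literal times its complement, i.e.\ producing $-y\MultPoly{Y,Z} = -y\olnot{y}\MultPoly{Y\setminus\set{y},Z}$ as a legal SAR piece. That is not a Boolean-axiom fact alone; one needs to combine a complementarity axiom with a Boolean axiom. Explicitly, the paper's cancellation piece is
\[
(1-y-\olnot{y})\cdot y\MultPoly{Y\setminus\set{y},Z} \;+\; (y^{2}-y)\cdot\MultPoly{Y\setminus\set{y},Z} \;=\; -y\,\MultPoly{Y,Z}\eqcomma
\]
where $y\MultPoly{Y\setminus\set{y},Z}$ and $\MultPoly{Y\setminus\set{y},Z}$ are monomial prefixes (legal in SAR since $\olnot{y}$ and $z$ are formal variables) and the parenthesized factors are a complementarity axiom and a Boolean axiom, each with positive coefficient. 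You flag this syntactic cancellation yourself as ``the genuinely delicate part,'' and your sublemma as stated does not produce it, so the proof as written does not close. Your $w=1$ base case, the recognition that a naive induction on $w$ alone would not go through, and the degree and size accounting are all fine.
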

\begin{proof}
  If $\clc$ is the empty clause then the claim is obvious since in that
  case $-\MultPoly{\clc} = \SumPoly{\clc} - 1$.
  %
  %
  Let $\clc$ be non-empty and assume for simplicity that
  it has a positive literal $x$.
  Then $\clc$ has the form
  \begin{equation}
    x \lor \Lor_{y \in Y} y \vee \Lor_{z \in Z} \olnot{z}
  \end{equation} with $ |Y|+|Z|<w $.
  By multiplying 
  $\SumPoly{\clc} - 1 \geq 0$ by $\MultPoly{Y,Z}$ 
  we
  obtain
  the polynomial inequality
  \begin{equation}\label{eq:SARaxiomstep1}
    x \MultPoly{Y,Z}
    +
    \sum_{y \in Y} y \MultPoly{Y,Z}
    + 
    \sum_{z \in Z} \olnot{z} \MultPoly{Y,Z}
    -
    M(Y,Z) \geq 0\eqperiod
  \end{equation}
  For each $y \in Y$ we
  can derive
  \begin{multline}
    \label{eq:SARaxiomstep2}
    (1-y-\olnot{y}) \cdot
    y \MultPoly{Y\setminus\set{y},Z} 
    + (y^{2} - y) \cdot 
    \MultPoly{Y\setminus\set{y},Z}
    =  
    \\
    -y \cdot
   \olnot{y}\cdot\MultPoly{Y\setminus\set{y},Z}
    = 
    -y \cdot
    \MultPoly{Y,Z}
    \eqperiod
  \end{multline} 
  In essentially the same way we can derive 
  \begin{equation}
    \label{eq:extra-SAR-inequality}
    -\olnot{z} \MultPoly{Y,Z} \geq 0    
  \end{equation}
  for each $z \in Z$.
  The inequality 
  \begin{equation}
    -\MultPoly{\clc} = -\olnot{x}\MultPoly{Y,Z} \geq 0
  \end{equation}
  is now the sum of
  inequality \eqref{eq:SARaxiomstep1}, all inequalities of the form
  \eqref{eq:SARaxiomstep2}
  and
  \eqref{eq:extra-SAR-inequality}
  for all $y \in Y$ and $z \in Z$, and of the
  inequality
  \begin{equation}
    (1-x-\olnot{x})
    \MultPoly{Y,Z} \geq 0
    \eqperiod
  \end{equation}
  This \SAR derivation has size $O(w^{2})$ and rank $w$.
\end{proof}

\begin{lemma}[Simulation of weakening]
  \label{lem:SARweakening} 
  For clauses $\cla \subseteq \clb$ of width at most~$w$ the  
  inequality
  \begin{equation*}
    \MultPoly{\cla} - \MultPoly{\clb} \geq 0
  \end{equation*} 
  has a derivation in \SAR of rank $w+1$ and size $O(w^2)$.
\end{lemma}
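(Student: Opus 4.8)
The plan is to exploit that, since $\cla \subseteq \clb$, the monomial $\MultPolyBig{\clb}$ is simply $\MultPolyBig{\cla}$ multiplied by one more factor per literal of $\clb \setminus \cla$, so that $\MultPolyBig{\cla} - \MultPolyBig{\clb}$ telescopes into a short sum of manifestly nonnegative \SAR terms. Write $V_\cla^{+}, V_\cla^{-}$ (resp.\ $V_\clb^{+}, V_\clb^{-}$) for the variables occurring positively and negatively in $\cla$ (resp.\ $\clb$); then $V_\cla^{+} \subseteq V_\clb^{+}$ and $V_\cla^{-} \subseteq V_\clb^{-}$. Let $u_1, \dots, u_t$ enumerate the list consisting of $\olnot{y}$ for $y \in V_\clb^{+}\setminus V_\cla^{+}$ followed by $z$ for $z \in V_\clb^{-}\setminus V_\cla^{-}$, so that $t = \cardinality{\clb} - \cardinality{\cla} \le w$, and as formal polynomials
\begin{equation*}
  \MultPolyBig{\clb} = \MultPolyBig{\cla} \cdot \prod_{i=1}^{t} u_i .
\end{equation*}
I assume throughout that $\clb$ is non-trivial (no variable appears both positively and negatively in it), so that $\MultPolyBig{\cla}$ and $\MultPolyBig{\clb}$ are honest multilinear monomials; the degenerate cases $\cla = \clb$ (where the inequality is $0 \ge 0$) and $\cla = \emptycl$ (where $\MultPolyBig{\cla} = 1$) are covered by the formulas below.

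Next I would set $P_j = \MultPolyBig{\cla}\cdot\prod_{i=1}^{j} u_i$ for $0 \le j \le t$, so that $P_0 = \MultPolyBig{\cla}$ and $P_t = \MultPolyBig{\clb}$; since $u_{j+1}$ is a variable distinct from every variable of $P_j$, the equality $P_j\, u_{j+1} = P_{j+1}$ holds in the polynomial ring, and hence
\begin{equation*}
  \MultPolyBig{\cla} - \MultPolyBig{\clb}
  = \sum_{j=0}^{t-1} \bigl( P_j - P_{j+1} \bigr)
  = \sum_{j=0}^{t-1} P_j \cdot (1 - u_{j+1}).
\end{equation*}
The key observation is that each summand $P_j \cdot (1 - u_{j+1})$ is already of the exact syntactic form allowed in~\eqref{eqn:SAdefinition}: writing $P_j$ as the product of the variables and twin variables forming $\MultPolyBig{\cla}$ together with $u_1,\dots,u_j$, the expression $P_j\cdot(1-u_{j+1})$ is a product $\prod_{i\in\pLit}\varx_i\cdot\prod_{i\in\nLit}(1-\varx_i)$ over distinct \SAR variables — the indices are allowed to name twin variables $\dvarx_i$, which is exactly why the factor $1-u_{j+1}$ (of the form $1-\varx_i$ or $1-\dvarx_i$) is admissible — multiplied by the constant polynomial $1$, with coefficient $\alpha=1\ge 0$. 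So each summand is a one-term \SAR certificate of $P_j\cdot(1-u_{j+1})\ge 0$ whose multiplier expands to the two monomials $P_j, P_{j+1}$, of degrees at most $\cardinality{\cla}+j+1 \le \cardinality{\clb} \le w$. Summing (concatenating) these $t \le w$ certificates yields a single \SAR derivation of $\MultPolyBig{\cla}-\MultPolyBig{\clb}\ge 0$ of rank at most $w$, a fortiori at most $w+1$, and of size $O(w)$, which is in particular $O(w^{2})$.

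I do not expect a genuine obstacle here; the only delicate point is the syntactic bookkeeping with twin variables — checking that after removing the factor $u_{j+1}$ the residual $1-u_{j+1}$ is indeed one of the forms $1-\varx_i$, $1-\dvarx_i$ permitted in an \SAR multiplier, and that the variables appearing in each multiplier (those of $\MultPolyBig{\cla}$, then $u_1,\dots,u_j$, then $u_{j+1}$) are pairwise distinct, which is guaranteed by $\clb$ being a clause. In particular, and in contrast to the simulations of axiom download and of resolution inference, no Boolean or complementarity axioms are needed for this lemma.
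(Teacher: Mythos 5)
Your proposal is correct and is essentially identical to the paper's proof: both write $\MultPolyBig{\clb} = \MultPolyBig{\cla}\cdot\prod_{i} u_i$ and obtain $\MultPolyBig{\cla}-\MultPolyBig{\clb}$ as the telescoping sum $\sum_j \bigl(\MultPolyBig{\cla}\prod_{i\le j} u_i\bigr)(1-u_{j+1})$, each summand of which is a one-term \SAR certificate with $\polyp_t=1$. The only cosmetic differences are that you spell out the bookkeeping with twin variables and observe the sharper size bound $O(w)$, whereas the paper simply records $O(w^2)$ to match the statement.
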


\begin{proof}
  Let $Y$ and $Z$ be the set of variables that occur positively
  and negatively, respectively, 
  \mbox{in $\clb \setminus \cla$},
  so that
  $\MultPoly{\clb}=\MultPoly{\cla}\cdot
  \MultPoly{Y,Z}$. 
  Note that 
  $\MultPoly{Y,Z}$
  is the product of the literals 
  \mbox{in $\clb \setminus \cla$},
  which are all variables in \SAR. For ease of notation, 
  let us write this product as
  $\prod^{|Y|+|Z|}_{i=1}v_{i}$.
  Then by using telescoping sums we can derive
  \begin{equation}
    \sum^{|Y|+|Z|}_{i=1} 
    (1-v_{i})
    \MultPoly{\cla} 
    \prod^{i-1}_{j=1} v_{i}
    = (1-\MultPoly{Y,Z})\MultPoly{\cla}
    = \MultPoly{\cla}-\MultPoly{\clb}
  \end{equation}
  which establishes the lemma.
\end{proof}

\begin{lemma}[Simulation of resolution step]
  \label{lem:SARinference}
  Let   $\cla$ and $\clb$ be clauses in which
  the variable~$x$ does not appear and let $w$ be the width of $\cla
  \vee \clb$. 
  Then the  inequality 
  \begin{equation*}
    \MultPoly{\cla \vee \varx} + \MultPoly{\clb \vee \olnot{\varx}} -
    \MultPoly{\cla \vee \clb} \geq 0 
  \end{equation*} 
  has a derivation in \SAR of rank $w+1$ and size $O(w^2)$.
\end{lemma}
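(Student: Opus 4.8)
The goal is to simulate a single resolution inference step, deriving $\MultPoly{\cla \vee \varx} + \MultPoly{\clb \vee \olnot{\varx}} - \MultPoly{\cla \vee \clb} \geq 0$ in \SAR with rank $w+1$ and size $O(w^2)$. The starting point is the algebraic identity that makes resolution work at the level of the multiplicative encodings. Write $\cla = \Lor_{y \in Y} y \vee \Lor_{z \in Z} \olnot{z}$ and $\clb = \Lor_{y' \in Y'} y' \vee \Lor_{z' \in Z'} \olnot{z'}$; then by definition $\MultPoly{\cla} = \MultPoly{Y,Z}$, $\MultPoly{\clb} = \MultPoly{Y',Z'}$, and $\MultPoly{\cla \vee \varx} = \olnot{\varx}\,\MultPoly{Y,Z}$, $\MultPoly{\clb \vee \olnot{\varx}} = \varx\,\MultPoly{Y',Z'}$. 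First I would handle the clean case where $\cla$ and $\clb$ have disjoint variable sets, so that $\MultPoly{\cla \vee \clb} = \MultPoly{Y \cup Y', Z \cup Z'}$. The key observation is that
\[
\olnot{\varx}\,\MultPoly{Y,Z} + \varx\,\MultPoly{Y',Z'} - \MultPoly{Y,Z}\MultPoly{Y',Z'}
\]
should be expressible as a positive combination of complementarity axioms, Boolean axioms, and the already-available multiplicative inequalities $-\MultPoly{\cla} \geq 0$ and $-\MultPoly{\clb} \geq 0$ multiplied by monomials. Concretely, I would multiply the complementarity axiom $1 - \varx - \olnot{\varx}$ by the monomial $\MultPoly{Y,Z}\MultPoly{Y',Z'}$ (after first reducing it so that each variable occurs once, using Boolean axioms) to get $\MultPoly{Y,Z}\MultPoly{Y',Z'} - \varx\,\MultPoly{Y,Z}\MultPoly{Y',Z'} - \olnot{\varx}\,\MultPoly{Y,Z}\MultPoly{Y',Z'}$; then correct $\varx\,\MultPoly{Y,Z}\MultPoly{Y',Z'}$ towards $\varx\,\MultPoly{Y',Z'}$ using the inequality $-\MultPoly{\cla} \geq 0$ times $\varx$ (and symmetrically for the other term), exactly as in the telescoping manipulations of Lemmas~\ref{lem:SARdownload} and~\ref{lem:SARweakening}.

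The second step is to remove the disjointness assumption. Since $\varx$ appears in neither $\cla$ nor $\clb$, the only issue is shared variables between $\cla$ and $\clb$. Here I would appeal to \reflem{lem:SARweakening}: from $\MultPoly{\cla \vee \clb}$ one can first pass to $\MultPoly{\cla' \vee \clb'}$ where $\cla'$, $\clb'$ are relabelled so that literals are not duplicated; more carefully, if a variable $v$ appears (say positively) in both $\cla$ and $\clb$, then $\MultPoly{\cla \vee \clb}$ already contains the single factor $\olnot{v}$, and the Boolean axiom $v^2 - v$ lets us collapse the square $\olnot{v}^2$ that would otherwise arise when we multiply $\MultPoly{\cla}$ by $\MultPoly{\clb}$. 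So the plan is: expand $\MultPoly{Y,Z}\MultPoly{Y',Z'}$ as a polynomial, repeatedly apply Boolean axioms $v^2 - v$ (as positive combinations in the \SAR derivation) to reduce every squared variable, and show the result equals $\MultPoly{\cla \vee \clb}$ up to a \SAR-derivable correction. If instead $v$ appears positively in $\cla$ and negatively in $\clb$ (or vice versa), then $\cla \vee \clb$ is a tautology, $\MultPoly{\cla \vee \clb}$ contains $v\,\olnot{v}$, and the complementarity axiom handles the bookkeeping; but in a legitimate resolution step on variable $\varx$ this case should not arise for variables other than $\varx$, so I would note it can be excluded or dispatched trivially.

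Counting resources: each auxiliary multiplication produces a polynomial with $O(w)$ terms, and we perform $O(w)$ of them (one per variable in $\cla \vee \clb$, plus a constant number of complementarity and Boolean axiom applications), giving total size $O(w^2)$; the largest monomial appearing is a product of at most $w$ clause-variables times the single variable $\varx$, so the rank is at most $w+1$. I expect the main obstacle to be purely bookkeeping: carefully organizing the telescoping/correction sums so that every intermediate inequality is manifestly a nonnegative combination of the allowed generators (original inequalities, Boolean axioms $\varx_i^2 - \varx_i$ and $\varx_i - \varx_i^2$, complementarity axioms, and the constant $1$), and double-checking the case of repeated variables so that the multiplicative encoding $\MultPoly{\cla \vee \clb}$ — which by the ``clauses as sets'' convention has exactly one factor per literal — is matched exactly rather than up to stray square factors. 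Once Lemmas~\ref{lem:SARdownload}, \ref{lem:SARweakening}, and~\ref{lem:SARinference} are in place, a resolution refutation of size $S$ and width $w$ translates line-by-line into an \SAR refutation of size $O(S \cdot w^2)$ and rank $w+1$, which is the size/rank upper bound for \SAR advertised in this section.
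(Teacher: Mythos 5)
Your high-level plan — multiply a complementarity axiom on $\varx$ by a suitable monomial and correct the two cross-terms by weakening-style telescoping — is indeed how the paper proceeds. But the concrete steps you write down contain a genuine error: you appeal to ``the inequality $-\MultPoly{\cla} \geq 0$ times $\varx$'' as a building block. In an \SAR derivation of the form~\refeq{eqn:SAdefinition}, the only polynomials $\polyp_\timet$ you may use are the encodings of the input clauses, Boolean axioms, complementarity axioms, and the constant~$1$; the inequality $-\MultPoly{\cla}\geq 0$ for an arbitrary intermediate resolution clause $\cla$ is none of these, and it is in fact simply false on any assignment falsifying~$\cla$. The lemma's conclusion must hold unconditionally, as a consequence of the Boolean and complementarity axioms alone — that is what lets \reflem{lmm:SARsimulation} later cancel all the $\MultPoly{\clc_i}$ terms at once. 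What actually does the job of your ``correction'' is the inequality $\MultPoly{\clb \vee \olnot{\varx}} - \MultPoly{\cla \vee \clb \vee \olnot{\varx}} \geq 0$ (and the symmetric one for $\cla\vee\varx$), which is precisely what \reflem{lem:SARweakening} provides; you gesture towards that lemma but never actually invoke it as the generator. You have also chosen the wrong direction of the complementarity axiom: it should be $\varx + \olnot{\varx} - 1 \geq 0$, so that $-\MultPoly{\cla\vee\clb}$ comes out with the correct sign.

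The shared-variable case analysis in your second step is also unnecessary, and it is the symptom of a different choice than the paper makes: the paper multiplies the complementarity axiom by $\MultPoly{\cla\vee\clb}$ rather than by the product $\MultPoly{\cla}\MultPoly{\clb}$. Since clauses are sets, $\MultPoly{\cla\vee\clb}$ is already a single multilinear monomial and no squared factors ever arise. Concretely, the paper's proof is three lines: derive $\MultPoly{\cla\vee\varx} - \MultPoly{\cla\vee\clb\vee\varx}\geq 0$ and $\MultPoly{\clb\vee\olnot{\varx}} - \MultPoly{\cla\vee\clb\vee\olnot{\varx}}\geq 0$ by two applications of \reflem{lem:SARweakening}; multiply $\varx+\olnot{\varx}-1\geq 0$ by $\MultPoly{\cla\vee\clb}$ to obtain $\MultPoly{\cla\vee\clb\vee\olnot{\varx}} + \MultPoly{\cla\vee\clb\vee\varx} - \MultPoly{\cla\vee\clb}\geq 0$; and sum the three. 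The size and rank bounds are then inherited directly from \reflem{lem:SARweakening}.
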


\begin{proof}
  Using \reflem{lem:SARweakening} twice we derive the two inequalities
  $\MultPoly{\cla \vee \varx} - \MultPoly{\cla \vee \clb \vee \varx} \geq 0$ 
  and 
  $\MultPoly{\cla \vee \olnot{\varx}} - 
  \MultPoly{\cla \vee \clb \vee \olnot{\varx}} \geq 0$.
  Then we derive 
  $(\varx + \olnot{\varx} - 1)\MultPoly{\cla \vee \clb} \geq 0$ 
  from the axiom $\varx + \olnot{\varx} - 1 \geq 0$. This is the
  same as
  \begin{equation}
    \MultPoly{\cla \vee \clb \vee \olnot{\varx}} + \MultPoly{\cla \vee
      \clb \vee \varx} - \MultPoly{\cla \vee \clb} \geq 0\eqperiod
  \end{equation}
  The inequality that we want to prove is the sum of these three
  inequalities just derived.
  This \SAR derivation has size $O(w^{2})$ and rank $w$.
\end{proof}

\begin{remark}\label{stm:sa-rank-size}
  In Lemmas~\ref{lem:SARdownload},~\ref{lem:SARweakening}
  and~\ref{lem:SARinference} we gave the \SAR simulations of the
  steps of a resolution refutation. 
  To get a simulation in SA it is
  sufficient to substitute 
  $(1-x_{1}), \ldots, (1-x_{n})$
  for  the variables $\olnot{x}_{1}, \ldots,
  \olnot{x}_{n}$.
  %
  After the substitution we obtain a valid SA proof of the
  corresponding inequalities of the same rank, but potentially of
  larger size.
  Notice that the proofs of the inequalites in
  Lemmas~\ref{lem:SARdownload},~\ref{lem:SARweakening}
  and~\ref{lem:SARinference} have the form of 
  Equation~(\ref{eqn:SAdefinition}), with $O(w)$ axioms,  
  each of them multiplied by a degree $w+O(1)$ polynomial.
  Hence the size of each of these proofs is at most $O(w2^{w})$.
\end{remark}

Now we can show how resolution refutations can be efficiently
simulated in the SA and \SAR proof systems.

\begin{lemma}\label{lmm:SARsimulation}
  If a CNF formula $F$ has a resolution refutation of
  width $w$ and length $\lengthstd$, 
  then it has an SA refutation of rank
  $w+1$ and size $\Bigoh{w2^w \lengthstd}$ and an \SAR refutation of
  rank $w+1$ and size $\Bigoh{w^2 \lengthstd}$.
%
\end{lemma}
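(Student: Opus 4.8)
The plan is to flatten the dynamic resolution refutation into a single static Sherali--Adams resolution derivation by combining the local simulations from \reflem{lem:SARdownload}, \reflem{lem:SARweakening} and \reflem{lem:SARinference}. Let $\proofstd = (\clc_1,\dots,\clc_\lengthstd)$ be a resolution refutation of $F$ of width~$w$, so $\clc_\lengthstd = \emptycl$; we may assume without loss of generality that $\proofstd$ contains no tautological clauses, that every line has a path to $\clc_\lengthstd$ in the proof DAG $G_\proofstd$, that $\clc_\lengthstd$ is its unique empty-clause line, and that every resolution step is non-degenerate (the pivot variable does not occur in the rest of either premise). For each line $t$ I will attach a ``local'' \SAR derivation $I_t$, of rank $w+1$ and size $\bigoh{w^2}$, whose value (the polynomial it expands to in the sense of \eqref{eqn:SAdefinition}) is $\bigl(\sum_{t' \text{ premise of } t}\MultPoly{\clc_{t'}}\bigr) - \MultPoly{\clc_t}$: for an axiom step, $I_t$ is the download derivation of $-\MultPoly{\clc_t}\geq 0$ from $\SumPoly{\clc_t}-1\geq 0$ (\reflem{lem:SARdownload}); for a weakening step with premise $\clc_i \subseteq \clc_t$, it is the derivation of $\MultPoly{\clc_i}-\MultPoly{\clc_t}\geq 0$ (\reflem{lem:SARweakening}); and for a resolution step with premises $\clc_i = \cla \lor \varx$ and $\clc_j = \clb \lor \olnot{\varx}$ and resolvent $\clc_t = \cla\lor\clb$, it is the derivation of $\MultPoly{\clc_i}+\MultPoly{\clc_j}-\MultPoly{\clc_t}\geq 0$ (\reflem{lem:SARinference} applied to $\cla,\clb$). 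Since every clause of $\proofstd$ has width at most $w$, each lemma applies with the claimed parameters.

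Next I would combine these locally correct inequalities into one refutation. Assign weights $c_t \geq 0$ to the lines in reverse topological order by $c_\lengthstd := 1$ and $c_s := \sum\{\,c_t : \text{line }s\text{ is a premise of line }t\,\}$ for $s < \lengthstd$, and set $\Pi := \sum_{t=1}^{\lengthstd} c_t I_t$. In $\Pi$ the polynomial $\MultPoly{\clc_s}$ occurs with coefficient $-c_s$ coming from $I_s$ together with $+c_t$ for every line $t$ using $s$ as a premise, so by the defining equation for $c_s$ this coefficient vanishes for every $s < \lengthstd$; the last line has no successors, so the coefficient of $\MultPoly{\clc_\lengthstd} = \MultPoly{\emptycl} = 1$ is just $-c_\lengthstd = -1$. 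Hence $\Pi$ expands to the constant $-1$, i.e., $\Pi$ is an \SAR derivation of $-1 \geq 0$ from the clause encodings of $F$, which is exactly an \SAR refutation. Because $\Pi$ is built from the $I_t$ only by scaling with the positive reals $c_t \in \Rplus$ and adding, operations that change neither the degrees nor the number of monomials occurring in the polynomial parts of a form-\eqref{eqn:SAdefinition} expression, the rank of $\Pi$ is still $w+1$ and its size is $\sum_{t=1}^{\lengthstd}\bigoh{w^2} = \Bigoh{w^2 \lengthstd}$. (The weights $c_t$ may be exponentially large in $\lengthstd$, but this is harmless, since the size measure counts monomials, not coefficients.)

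For the Sherali--Adams bound, I would apply the substitution $\olnot{x}_i \mapsto 1-x_i$ for all $i$ to $\Pi$, as in \refrem{stm:sa-rank-size}. This turns every complementarity axiom into $0$ and every clause polynomial $\SumPoly{\clc}-1$ into the twin-variable-free additive encoding of $\clc$ from \refsec{sec:preliminaries}, so the result is a genuine \SA derivation; its value is still the constant $-1$, because constants are untouched and the images of the $\MultPoly{\clc_t}$ telescope away exactly as before (the cancellation depended only on the weights, not on the particular form of $\MultPoly{\cdot}$). The substitution preserves degrees, so the rank remains $w+1$, and by \refrem{stm:sa-rank-size} each substituted local simulation $I_t$ has size $\bigoh{w 2^{w}}$, so $\Pi$ becomes an \SA refutation of size $\Bigoh{w 2^{w} \lengthstd}$, as desired.

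I expect the one genuinely non-routine point to be the treatment of the DAG structure: one must resist the temptation to re-derive $-\MultPoly{\clc_i}\geq 0$ afresh each time line $i$ is used as a premise, since a clause may be reused many times and this would cause an exponential blow-up. The way around it is precisely the global weighted sum above, which pushes all the ``reuse'' into the scalar multipliers $c_t$ — and those are free of charge in the size measure. The verification that the download/weakening/inference lemmas indeed yield the stated rank and $\bigoh{w^2}$ size for each step, and that the telescoping identity holds, are then straightforward bookkeeping.
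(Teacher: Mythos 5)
Your proposal is correct and follows essentially the same approach as the paper: the same three local simulations (\reflem{lem:SARdownload}, \reflem{lem:SARweakening}, \reflem{lem:SARinference}), the same DAG-based assignment of positive weights by summing over immediate successors starting from weight~$1$ at the empty clause, and the same telescoping cancellation, with the \SA bound obtained by the substitution $\olnot{x}_i \mapsto 1-x_i$ as in \refrem{stm:sa-rank-size}. Your explicit WLOG cleanup of the refutation (no tautological lines, no dead lines, non-degenerate resolution steps) makes precise some hygiene the paper leaves implicit, but the argument is otherwise the same.
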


\begin{proof}
  Let $\proofstd=(\clc_{1}, \clc_{2}, \ldots, \clc_{\lengthstd})$ be a
  resolution refutation of $F$ where all clauses have width at most~$w$.
  Let us focus first on the \SAR simulation.
  For each clause~$\clc_i$ in the refutation we derive an inequality
  as follows:
  \begin{enumerate} \itemsep=0pt
    \item If $\clc_{i}$ is an axiom clause, then we derive
      $-\MultPoly{\clc_{i}} \geq 0$.

    \item
      If $\clc_{i}$ is obtained by weakening from $\clc_{j}$, 
      then we derive 
      $\MultPoly{\clc_{j}} -\MultPoly{\clc_{i}} \geq 0$.

    \item
      If $\clc_{i}$ is obtained by resolving $\clc_{j}$ and
      $\clc_{k}$, then we derive
      $
      \MultPoly{\clc_{j}} +\MultPoly{\clc_{k}} - \MultPoly{\clc_{i}}
      \geq 0$.
  \end{enumerate}
  All of these inequalities
  have \SAR derivations of rank $w+1$ and size $\Bigoh{w^{2}}$ by
  Lemmas~\ref{lem:SARdownload},~\ref{lem:SARweakening}
  and~\ref{lem:SARinference}
  (where we recall that the encoding of an axiom
  clause $\clc$ in the \SAR proof  system is $\SumPoly{\clc}-1 \geq
  0$, as required by \reflem{lem:SARdownload}).
  
  Now we have a sequence of inequalities 
  $Q_{1}\geq 0, Q_{2}\geq 0,
  \ldots, Q_{\lengthstd}\geq 0$, 
  where the inequality \mbox{$Q_{i} \geq 0$}
  corresponds to the clause~$\clc_{i}$ as explained above.
  Observe that any positive combination $\sum^{\lengthstd}_{i=1}
  \alpha_{i}Q_{i} \geq 0$ has a \SAR derivation of rank $w+1$ and size
  $O(\lengthstd \cdot k^{2})$.  
  In order to conclude the proof of the lemma, we just need to argue
  that there are positive weights $\alpha_{i}$ such that
  $\sum_{i}\alpha_{i}Q_{i} = -1$.

  The intuition is that 
  if $\clc_{i}$ is obtained by weakening from $\clc_{j}$ 
  then adding
  $Q_i =
  \MultPoly{\clc_{j}} -\MultPoly{\clc_{i}}$
  will cancel the term
  $-\MultPoly{\clc_{j}}$ in~$Q_j$
  representing~$\clc_j$,
  and if
  $\clc_{i}$ is inferred by resolution from $\clc_{j}$ and~$\clc_{k}$,
  then adding 
  $Q_i = \MultPoly{\clc_{j}} +\MultPoly{\clc_{k}} - \MultPoly{\clc_{i}}$ 
  will cancel the terms
  $-\MultPoly{\clc_{j}}$ and  $-\MultPoly{\clc_{k}}$
  representing
  $\clc_{j}$ and~$\clc_{k}$
  in $Q_j$ and $Q_k$, respectively.
  In the end, all monomials representing clauses are cancelled and the
  only term remaining is~$-1$.
  However, if a clause is used in several different applications of
  the resolution or weakening rules we need to set the weights so that
  it is cancelled the correct number of times.

  To do so,
  consider the DAG of the resolution refutation oriented from the
  initial clauses towards the empty clause. We assign a weight to each
  clause $\clc_{i}$ in this DAG inductively:  the empty clause
  $\clc_{\lengthstd}$ gets weight $1$, and if all immediate successors
  of a clause have already been assigned weights, then the clause gets
  the sum of the 
  weights of its immediate successors as the weight for itself.
  The value of $\alpha_{i}$ is then the weight assigned to the
  clause~$\clc_{i}$ in this way. 
  To verify that $\sum_{i}\alpha_{i}Q_{i} = -1$, notice that every
  polynomial $\MultPoly{\clc_{i}}$ has negative coefficient in the
  inequality $Q_{i}\geq 0$ and positive one in every $Q_{j}\geq 0$
  where $\clc_{i}$ appears as a premise in the derivation of
  $\clc_{j}$.
  By construction the coefficient of each $\MultPoly{\clc_{i}}$ in the
  final   sum is zero unless $i=\lengthstd$. Since
  $\alpha_{\lengthstd}=1$, the final sum is 
  equal to $-\MultPoly{\emptyset,\emptyset} $ which is~$-1$.

  We can obtain a simulation in the SA  proof system instead by
  substituting  $(1-x_{i})$ for every negative variable 
  $\olnot{x}$ in the \SAR simulation described above.
  Then we can reason as in \refrem{stm:sa-rank-size} to see that the
  the size and rank bounds claimed for SA hold. The lemma follows.
\end{proof}

\subsection{Lower bound on rank for Sherali-Adams resolution}
\label{sec:lower-bound-rank}

%
%
%

The \emph{pigeon-rank} of 
a Sherali-Adams resolution
refutation of $\TPHP$ 
of the form described in
Equation~\eqref{eqn:SAdefinition}  
is the maximum pigeon-degree of the polynomials 
to which the formulas
$\prod_{i \in
  \pLit_t} x_i \cdot \prod_{j \in \nLit_t} (1-x_j) \cdot \polyp_t$
expand.

In order to prove a lower bound on pigeon-rank it is useful to
generalize 
this concept
to a more abstract notion of rank for \SA proofs.
Let $V$ be a set of variables and let $H$ be a downward-closed family
of subsets of~$V$, i.e., such that if $Y$ belongs to $H$ and $X
\subseteq Y$, then $X$ also belongs to~$H$.
%
We say that a polynomial (or polynomial inequality) is
\introduceterm{\hbounded{}},
or has
\introduceterm{\hrank{}},
if $H$ contains the variable set of every
monomial 
in it.
We say that an \SA~derivation as in~\eqref{eqn:SAdefinition} has
\hrank if the polynomial to which each formula
$\prod_{i \in \pLit_t} x_i \cdot \prod_{j \in \nLit_t} (1-x_j) \cdot
\polyp_t$ expands is \hbounded.
Observe that if an \SA derivation has rank $r$, then it has \hrank
where $H$ is the family of all subsets of at most $r$
variables. Similarly, if an \SA refutation of $\TPHP$ has
pigeon-rank~$r$, then it has \hrank where $H$ is the family of all
subsets of variables that mention at most $r$~pigeons.

Let $\polyset{P}$ be a set of polynomial inequalities over the variable
set~$V$. We say that $\polyset{P}$ admits an
\introduceterm{\hconsistent family of distributions} 
if there exists a collection of probability distributions
$\set{\hdistr{X}}_{X \in H}$ 
over assignments $\set{0,1}^X$ 
as $X$ ranges over~$H$ that satisfy the following properties:
%
%
\begin{enumerate}[label=H\arabic*.,ref=H\arabic*]\itemsep=0pt
\item 
  \label{item:h-consistent-i}
  For every variable set $X \in H$ and every polynomial inequality
  $\polyq \geq 0$ in $\polyset{P}$ that has all its variables in~$X$,
  it holds that all assignments in the support of $\hdistr{X}$ satisfy
  $\polyq \geq 0$. 

\item 
  \label{item:h-consistent-ii}
  For every pair of variable sets $X,Y \in H $ such that $X \subseteq Y$ and
  for every assignment $\smallassmntSAR \in \{0,1\}^X$ it holds that
  \begin{equation}
    \label{eq:h-consistent-ii}
    \hdistr{X}(\smallassmntSAR) 
    = \sum_{\substack{\largeassmntSAR\in\{0,1\}^Y \\
        \largeassmntSAR \supseteq  \smallassmntSAR}} 
    \hdistr{Y}(\largeassmntSAR)
  \eqcomma
  \end{equation}
  where $\largeassmntSAR$ ranges over all assigments to $Y$ that are
  consistent with~$\smallassmntSAR$. 
\end{enumerate}
In the definition above and elsewhere, $\hdistr{X}(\smallassmntSAR)$
denotes the probability assigned to $\smallassmntSAR$ by the
distribution $\hdistr{X}$.  We will use such \hconsistent families of
distributions to establish the Sherali-Adams rank lower bound that we
need.  Before stating the formal lemma that we will appeal to, let us
try to provide some intuition.

%
%

If the set of polynomial inequalities $\polyset{P}$ were satisfiable
it would not be hard to come up with a family of probability
distributions with properties~\ref{item:h-consistent-i}
and~\ref{item:h-consistent-ii}: we could just fix a global probability
distribution over all 
satisfying assignments, and then let
$\hdistr{X}$ be the corresponding marginal distribution on any set of
variables~$X$.  For an unsatisfiable set $\polyset{P}$ there is no
such globally consistent family, but if we can find an \hconsistent
family of distributions for $\polyset{P}$, then~$\polyset{P}$ will
still ``look satisfiable'' to any derivation that does not go
``outside of~$H$.'' Whenever we look at a specific inequality $\polyq
\geq 0$ in~$\polyset{P}$, property~\ref{item:h-consistent-i} yields a
``marginal distribution'' that satisfies the inequality. Furthermore,
property~\ref{item:h-consistent-ii} ensures that such ``marginal
distributions'' over different sets look locally consistent.  The
following lemma makes this precise.

\begin{lemma} \label{lem:sufficient}
  Let $H$ be a downward-closed family of sets of variables and 
  let $\polyset{P}$ be a set of \hbounded polynomial inequalities.
  If $\polyset{P}$ has an \SA refutation of
  \hrank, then $\polyset{P}$ does not admit an \hconsistent family of
  distributions.
\end{lemma}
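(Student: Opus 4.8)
The plan is to prove the contrapositive in spirit: assume an $H$-consistent family of distributions $\{\hdistr{X}\}_{X \in H}$ exists, and show that no Sherali-Adams refutation of $H$-rank can exist. The natural tool is a \emph{pseudo-expectation} functional. First I would use the family of distributions to define a linear operator $\YAExpectationNotation$ on the space of $H$-bounded polynomials: for a monomial $m$ with variable set $X \in H$, set $\YAExpectation{m} = \sum_{\largeassmntSAR \in \{0,1\}^X} \hdistr{X}(\largeassmntSAR) \cdot m(\largeassmntSAR)$, and extend linearly. The key point is that this is well-defined precisely because of property~\ref{item:h-consistent-ii}: if a monomial's variable set $X$ is contained in a larger set $Y \in H$, then evaluating via $\hdistr{X}$ or by first passing to the marginal of $\hdistr{Y}$ gives the same answer, so the consistency condition guarantees the value does not depend on which admissible superset we chose.

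Next I would record the properties of $\YAExpectationNotation$ that matter. It is linear by construction. It is normalized: $\YAExpectation{1} = 1$, since each $\hdistr{X}$ is a probability distribution. It respects the Boolean axioms in the sense that $\YAExpectation{(x_i^2 - x_i) \cdot m} = 0$ whenever the product stays $H$-bounded, because on $0/1$-valued assignments $x_i^2 = x_i$ pointwise. Most importantly, it is \emph{nonnegative on the terms that appear in a Sherali-Adams refutation}: if $\prod_{i \in \pLit_t} x_i \cdot \prod_{j \in \nLit_t}(1 - x_j) \cdot \polyp_t$ expands to an $H$-bounded polynomial, where $\polyp_t$ is one of the $\polyq_j \in \polyset{P}$ (or an axiom $x^2 - x$ or $x - x^2$, or the constant $1$), then its pseudo-expectation is $\geq 0$. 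The reason is that the product $\prod_{i \in \pLit_t} x_i \cdot \prod_{j \in \nLit_t}(1-x_j)$ is an indicator that is $0/1$-valued, so after fixing a variable set $X \in H$ containing all relevant variables, $\YAExpectation{\cdot}$ of the whole product is a weighted sum over assignments in the support of $\hdistr{X}$, and for each such assignment either the indicator vanishes or the assignment satisfies $\polyq_t \geq 0$ by property~\ref{item:h-consistent-i} (the axioms $x^2 - x$ and $x - x^2$ evaluate to $0 \geq 0$, and $1 \geq 0$ trivially).

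With these properties in hand, the contradiction is immediate. A Sherali-Adams refutation of $H$-rank is an identity of the form~\eqref{eqn:SAdefinition} in which the left-hand side expands to the polynomial $\polyr = -1$ and every summand $\alpha_t \cdot \prod_{i \in \pLit_t} x_i \cdot \prod_{i \in \nLit_t}(1 - x_i) \cdot \polyp_t$ is $H$-bounded with $\alpha_t \in \Rplus$. Applying $\YAExpectationNotation$ to both sides: the left side is $\sum_t \alpha_t \cdot \YAExpectation{\prod_{i \in \pLit_t} x_i \cdot \prod_{i \in \nLit_t}(1 - x_i) \cdot \polyp_t} \geq 0$ by nonnegativity of the coefficients and of each pseudo-expectation, while the right side is $\YAExpectation{-1} = -1 < 0$. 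This contradiction shows no such refutation exists, which is exactly the statement of the lemma.

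I expect the main obstacle — really the only subtle point — to be verifying that $\YAExpectationNotation$ is well-defined, i.e.\ that the value assigned to a monomial (or more generally an $H$-bounded polynomial) is independent of the choice of $X \in H$ containing its variable set. This needs property~\ref{item:h-consistent-ii} applied carefully: given $X \subseteq X'$ both in $H$ and a monomial $m$ supported on $X$, one writes $m(\largeassmntSAR') = m(\largeassmntSAR'\!\restriction_X)$ and groups the sum over $\largeassmntSAR' \in \{0,1\}^{X'}$ by the restriction to $X$, using~\eqref{eq:h-consistent-ii} to collapse the inner sum to $\hdistr{X}$. For two incomparable admissible supersets one passes through their union (which lies in $H$ by downward-closure applied appropriately, or more directly one only ever needs comparability since the monomial's own variable set is the canonical smallest choice in $H$). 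A second minor point is making sure linearity plus this well-definedness genuinely yields a single linear functional on all $H$-bounded polynomials; this is routine once the monomial case is settled. Everything else is a short computation, so the write-up should be brief.
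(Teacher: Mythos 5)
Your proposal is correct and follows essentially the same route as the paper: you define a pseudo-expectation functional on $H$-bounded polynomials from the $H$-consistent family (the paper phrases this via the linear form $a(X)=\hdistr{X}(1_X)$ extended to multilinearizations $\linform{\polyr}_t$, which agrees with your $\YAExpectation{\monom}$ on monomials), show it is nonnegative on each term $\alpha_t \prod_{i}x_i\prod_j(1-x_j)\polyp_t$ using property~\ref{item:h-consistent-i} and the fact that the indicator product is $0/1$-valued on the support, and derive a contradiction from $\YAExpectation{-1}=-1<0$. The well-definedness subtlety you flag is handled in the paper by always evaluating through the canonical variable set $X_{\monom}$ and invoking property~\ref{item:h-consistent-ii}, which matches your own resolution of it.
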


\begin{proof}
  Let us 
  think of each $X$ in $H$ as a new formal variable. For each monomial
  $\monom$, let $X_{\monom}$ denote the set of variables in~$\monom$. 
  If $\polyr$ is an \hbounded polynomial, let us write
  $\linform{\polyr}$ to denote the linear form on the variables $H$
  obtained from~$\polyr$ by replacing each term $c\cdot\monom$ by
  $c\cdot X_{\monom}$ and collecting all terms of the same variable
  into a single term by adding their coefficients (which could result
  in cancellations of terms).
%
%
%
%
Note that 
$\linform{\polyr}$ 
can also be thought of as 
the multilinearization of $\polyr$,
namely the polynomial obtained
from~$\polyr$ by removing all higher powers in the  monomials
to get $\linform{\monom} = X_{\monom}$ instead of~$\monom$.
We write $1_Y$ to denote the assignment $\set{ x \mapsto 1 : x \in Y
}$ to a set of variables~$Y$, and for a monomial~$\monom$ (multilinear
or 
not) we define $1_{\monom} = 1_{X_{\monom}}$.    

%
%

  Let   
  $\polyset{P} = \set{\polyq_1 \geq 0, \ldots, \polyq_m \geq 0 }$
  be a set of polynomial inequalities and suppose that there exists an
  \SA refutation of $\polyset{P}$ of the form~\eqref{eqn:SAdefinition} 
  that has \hrank. Let us write $\polyr_t$ for the polynomial to
  which the formula $\prod_{i \in \pLit_t} x_i \cdot \prod_{j \in \nLit_t}(1-x_j) \cdot
    \polyp_t$ expands
  for $1\leq t  \leq \tau$.
The assumption that the
refutation has \hrank means that every monomial in
the polynomial~$\polyr_t$ is  \hbounded. 

%
%

Assume for contradiction that $\polyset{P}$ admits an \hconsistent family
$\set{\hdistr{X}}_{X \in H}$. 
Let 
$\funcdescr{a}{H}{\reals}$ 
be the real-valued assignment defined by
\begin{equation}
  \label{eq:def-a}
  a(X) = \hdistr{X}(1_X)    
  \eqcomma
\end{equation}
\ie the probability of the all-ones assignment
to the  variables in~$X$ 
according to the distribution~$\hdistr{X}$,
and extend $a$ to all linear forms on the variables $X \in H$ linearly;
i.e., if $L = \sum_i c_i X_i$ is such a linear form with
coefficients $c_i$ and variables $X_i$, then 
$a(L) = \sum_i c_i \cdot a(X_i)$.
%

%
%
%
  We claim that $a$ 
  satisfies
  $a(\linform{\polyr}_t) \geq 0$ 
  for every 
  $1 \leq t \leq \tau$.
  By linearity it then further follows that
  $
  a\bigl(\sum_{t=1}^\tau \alpha_t \linform{\polyr}_t\bigr) 
  =
  \sum_{t=1}^\tau \alpha_t \cdot a \bigl( \linform{\polyr}_t\bigr) 
  \geq 0
  $, 
  which is a
  contradiction since $\sum_{t=1}^\tau \alpha_t \polyr_t = -1$ and hence
  also $\sum_{t=1}^\tau \alpha_t \linform{\polyr}_t = -1$.

%
%

  Let us prove that the assignment $a$ as defined in~\refeq{eq:def-a}
  satisfies every inequality
  $\linform{\polyr}_t \geq 0$
  for
  $1 \leq t \leq \tau$. We do so by establishing a stronger claim:
  if $X_t$ is the set of
  variables in $\polyr_t$ and $\YAExpectationNotation[X_{t}]$ denotes
  expectation under the   distribution $\hdistr{X_t}$, then 
  the following holds:
  \begin{enumerate}[label=A\arabic*.,ref=A\arabic*]\itemsep=0pt
  \item
      \label{item:a-property-i}
      The assignment
      $\funcdescr{a}{H}{\reals}$ 
      satisifies
      $a(\linform{\polyr}_t) = \YAExpectation[X_t]{\polyr_{t}}$.
    \item
      \label{item:a-property-ii}
      Every assignment in the support of $\hdistr{X_t}$ satisfies the
      inequality $\polyr_t \geq 0$.
  \end{enumerate}
  To see that 
  \ref{item:a-property-i}
  holds, we evaluate each monomial $\monom$ in $\polyr_t$ separately
  to get
  \begin{equation}
    \label{eq:multilinear-evalutation}
    a(\linform{{\monom}}) 
    = 
    a(X_{\monom}) 
    = \Pi_{X_{\monom}}(1_{{\monom}}) 
    = \sum_{\substack{\largeassmntSAR \in \set{0,1}^{X_t} \\
        \largeassmntSAR \supseteq 1_{{\monom}}}} 
    \hdistr{X_t}(\largeassmntSAR)     
    = \YAExpectation[X_t]{\monom}
    \eqperiod
  \end{equation}
  The first and second equalities
  in~(\ref{eq:multilinear-evalutation}) hold by definition; 
  the  third one follows from 
  property~\ref{item:h-consistent-ii}  of \hconsistent families of
  distributions;  and the
  final equality is true since a monomial~$\monom$ evaluates to~$1$
  under an assignment
  $\largeassmntSAR \in\set{0,1}^{X_t}$
  if and only if 
  $\largeassmntSAR$ 
  is compatible with~$ 1_{\monom} $. 
  Adding over all 
  terms
  we get
  $a(\linform{\polyr}_t) = \YAExpectation[X_t]{\polyr_t}$ by applying
  linearity of $a$ on the left and linearity of expectation on the right. 

  The verification of the claim in
  \ref{item:a-property-ii} is straightforward. Let 
  $\largeassmntSAR$ be an assignment in the support of $\hdistr{X_t}$. 
  Substituting the
  values assigned by $\largeassmntSAR$ to the variables of~$\polyr_t$,
  we deduce that
  \begin{equation}
    \label{eq:Pi-eval}
    \largeassmntSAR(\polyr_t) 
    =
    \largeassmntSAR \left( 
    \prod_{i \in \pLit_t} x_i \cdot \prod_{j \in \nLit_t}(1-x_j)
    \cdot \polyp_t
    \right)
    = 
    \prod_{i \in \pLit_t} \largeassmntSAR ( x_i ) \cdot
    \prod_{j \in \nLit_t} (1-\largeassmntSAR ( x_j )) \cdot
    \largeassmntSAR ( \polyp_t )
    \geq 0
    \eqperiod
  \end{equation}
  To see this, it suffices to 
  observe
  that all factors in the final
  expression in \refeq{eq:Pi-eval} are non-negative.  First,
  regardless of what the assignment $\largeassmntSAR$ is, we clearly
  have $ 0 \leq \largeassmntSAR ( x ) \leq 1$ for any variable $x$ in
  its domain and hence $\largeassmntSAR ( x_i ) \geq 0$ and
  $1-\largeassmntSAR ( x_j ) \geq 0$. Second, from
  property~\ref{item:h-consistent-i} we know that if $\polyp_t$ is one
  of the polynomials $\polyq_i$ in $\polyset{P}$ then $\largeassmntSAR
  ( \polyp_t ) \geq 0$ since $\largeassmntSAR$ is in the support of
  $\hdistr{X_t}$. And third, if $\polyp_t$ is one of the axioms $x_i^2
  - x_i$ or $x_i - x_i^2$ then $\largeassmntSAR ( \polyp_t ) = 0$
  since the range of $\largeassmntSAR$ is $\{0,1\}$, and if $\polyp_t$
  is the axiom $1$ then of course $\largeassmntSAR( \polyp_t ) = 1
  \geq 0$.  This concludes the proof of the lemma.
\end{proof}

Dantchev \etal~\cite{DantchevMartinRhodes2009Tight}
proved a rank lower bound on \SAR refutations of $\PHPk$. 
Let us show how this result can be extended to a 
pigeon-rank lower bound for~$\TPHP$.

\begin{lemma}
  \label{lmm:sar-pigeon-rank}
  Every \SAR refutation of $\TPHP$ has pigeon-rank at least $k$.
\end{lemma}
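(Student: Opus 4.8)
The plan is to apply \reflem{lem:sufficient}. First, cast \SAR as \SA over the enlarged variable set by adding to the list of axiom inequalities the two complementarity inequalities $1-u-\dualvar{u}\geq 0$ and $u+\dualvar{u}-1\geq 0$ for every variable $u$ of $\TPHP$ and its twin~$\dualvar{u}$. Let $\polyset{P}$ denote the resulting set of inequalities, namely the standard encodings of the clauses \refeq{eq:QdefinedRestrictedA}--\refeq{eq:QinjectiveRestricted} together with these complementarity inequalities. Let $H$ be the family of all sets of variables of $\TPHP$ (twins included) that mention at most $k-1$ pigeons. Then $H$ is downward closed, and every inequality in $\polyset{P}$ is \hbounded since each clause of $\TPHP$ mentions at most two pigeons and each complementarity inequality mentions one (we assume $k\geq 3$; the remaining cases are trivial). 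Crucially, a \SAR refutation of $\TPHP$ of pigeon-rank at most $k-1$ is literally an \SA refutation of $\polyset{P}$ of \hrank, so by \reflem{lem:sufficient} it suffices to produce an \hconsistent family of distributions for $\polyset{P}$; the existence of such a family contradicts the existence of the refutation, giving pigeon-rank at least~$k$.

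To build the family, recall that an intended model of $\TPHP$ places each pigeon $v\in[k]$ into a hole, and the $\TPHP$-variables indexed by $v$ are then pinned down by the destination hole via a thermometer reading: if $v$ goes to hole~$h$ then $q_{v,w}=[w=h]$ and $z_{v,w}=[w<h]$, with the twins taking the complementary values. Given $X\in H$ mentioning the pigeon set $P\subseteq[k]$, so that $\lvert P\rvert\leq k-1$, let $\hdistr{X}$ be the distribution obtained by drawing a uniformly random injection $\sigma\colon P\to[k-1]$ — which exists exactly because $\lvert P\rvert\leq k-1$ — and then assigning each variable in $X$ the value read off from $\sigma(v)$ as above, where $v$ is the pigeon it mentions. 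Note that the construction is forced to place \emph{every} mentioned pigeon somewhere, which is precisely why $k-1$ is the correct threshold and where the hypothesis of $k$ pigeons and $k-1$ holes enters; this is the same mechanism underlying the rank lower bound for $\PHPk$ of Dantchev \etal, and the construction above simply extends it to accommodate the extension variables~$z_{v,w}$.

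It remains to check properties \ref{item:h-consistent-i} and \ref{item:h-consistent-ii}. For \ref{item:h-consistent-i}, consider an inequality of $\polyset{P}$ all of whose variables lie in $X$. A complementarity inequality holds automatically because $q_{v,w}+\dualvar{q}_{v,w}=1$ and $z_{v,w}+\dualvar{z}_{v,w}=1$ identically in $\sigma(v)$. For a definitional clause among \refeq{eq:QdefinedRestrictedA}--\refeq{eq:QdefinedRestrictedC} on a pigeon $v$, we have $v\in P$, and a short case analysis on $\sigma(v)\in[k-1]$ shows that the thermometer reading satisfies the corresponding inequality, in fact with equality. For an injectivity clause \refeq{eq:QinjectiveRestricted} on pigeons $v\neq v'$ and a hole $w$, both $v$ and $v'$ lie in $P$ and injectivity of $\sigma$ gives $\sigma(v)\neq\sigma(v')$, so at least one of $q_{v,w},q_{v',w}$ is $0$ and the clause is satisfied. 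For \ref{item:h-consistent-ii}, if $X\subseteq Y$ in $H$ mention pigeon sets $P\subseteq P'$, then the values $\hdistr{Y}$ assigns to the variables of $X$ depend only on $\sigma|_{P}$; and since restricting a uniformly random injection $P'\to[k-1]$ to $P$ yields a uniformly random injection $P\to[k-1]$ (the number of ways to extend a fixed injection on $P$ to an injection on $P'$ does not depend on the injection chosen), the marginal of $\hdistr{Y}$ on $X$ equals $\hdistr{X}$, which is exactly the identity~\refeq{eq:h-consistent-ii}.

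I expect the main obstacle to be the bookkeeping rather than any single calculation: one must verify that $\TPHP$ genuinely is \hbounded for this $H$ so that \reflem{lem:sufficient} applies, that the passage from \SAR to \SA-with-complementarity-inequalities turns pigeon-rank faithfully into \hrank, and — the only genuinely formula-specific point — that the thermometer encoding simultaneously satisfies \emph{all} of the $3$-CNF definitional clauses \refeq{eq:QdefinedRestrictedA}--\refeq{eq:QdefinedRestrictedC}, not merely the wide pigeon clause they replace. A possible alternative, parallel to the proof of \reflem{lmm:pcr-pigeon-degree}, would apply a variable substitution analogous to $\phpvariableremap$ to a pigeon-rank-$d$ refutation of $\TPHP$ to obtain a rank-$(d+1)$ \SAR refutation of $\PHPk$ and then invoke the bound of Dantchev \etal\ directly; there the obstacle shifts to re-expressing the substituted products $\prod_i x_i\prod_j(1-x_j)\cdot\polyp_t$ as legitimate Sherali--Adams terms over the $\PHPk$-variables, which is messier than the distributional argument above.
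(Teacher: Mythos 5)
Your proof is correct and takes essentially the same approach as the paper: reduce \SAR to \SA, apply \reflem{lem:sufficient} with $H$ the family of variable sets mentioning at most $k-1$~pigeons, and construct the \hconsistent family by drawing a uniformly random injection of the mentioned pigeons into $[k-1]$ and reading off the $q$- and $z$-variables via the thermometer encoding. The only differences are cosmetic---the paper reduces \SAR to \SA by substituting $\dvarx\mapsto 1-\varx$ whereas you keep the twins and add the complementarity inequalities to $\polyset{P}$, and your check of property~\ref{item:h-consistent-ii} (restricting a uniformly random injection on $P'$ to $P\subseteq P'$ is again uniform, so the marginal of $\hdistr{Y}$ on $X$ is $\hdistr{X}$) is, if anything, a slightly cleaner route to the same conclusion as the paper's explicit counting.
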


\begin{proof}
  First note that by replacing each variable $\dvarx$ by $1-\varx$ we
  transform an \SAR~proof into an \SA~proof of the same
  pigeon-rank. Thus, by Lemma~\ref{lem:sufficient} it will suffice to
  build an \hconsistent family of distributions where $H$ is the
  family of sets of variables that mention up to $k-1$ pigeons. 

  Intuitively, it is clear what the distributions should be: since
  there is room for up to \mbox{$k-1$~pigeons} in the pigeonholes, we can
  just choose any one-to-one mapping uniformly at random and set the
  Boolean variables accordingly. Formally, for every set $X$ of
  variables than mention at most $k-1$ pigeons we define the 
  distribution $\hdistr{X}$ as follows:
  \begin{enumerate} \itemsep=0pt
      \item 
        Let $A$ be the set of at most $k-1$ pigeons that are mentioned 
        by the variables in $X$.
      \item 
        Let $\smallmapSAR$ be a uniformly chosen one-to-one map
        $\smallmapSAR : A \rightarrow [k-1]$. 
  \item
    For $q_{v,w}\in X$ set $ q_{v,w}=1 $ if $\smallmapSAR(v) = w$,
    and $q_{v,w}=0$ otherwise. 
  \item 
    For $z_{v,w}\in X$ set $ z_{v,w}=1 $ if $\smallmapSAR(v) > w$,
    and $z_{v,w}=0$ otherwise. 
  \end{enumerate}
%
%

%
  Let us verify that a family of distributions 
  defined in this way
  satisfy  properties~\ref{item:h-consistent-i}  
  and~\ref{item:h-consistent-ii}.

  That property~\ref{item:h-consistent-i} is satisfied is immediate by
  construction. If $C$ is a clause in~$\TPHP$ with all variables
  contained in~$X$, then all assignments in the support of~$\hdistr{X}$
  satisfy~$C$ since they encode one-to-one mappings (with the
  extension variables~$z_{v,w}$ set appropriately).

  Property~\ref{item:h-consistent-ii} is also straightforward to verify.
  Fix any sets $X$ and $Y$ such that  $X \subseteq Y$ and that
  mention up to $k-1$ pigeons and any assignment
  $\smallassmntSAR \in \set{0,1}^X$. 
  Let $A$ and $B$ be the sets of at most $k-1$ pigeons that are mentioned 
  in $X$ and $Y$, respectively, and note that $A \subseteq B$.
  Let us write 
  $a=\setsize{A}$ 
  and 
  $b=\setsize{B}$. 
  By construction, the assignments $\largeassmntSAR \in \{0,1\}^Y$ in
  the support of~$\hdistr{Y}$ are in 
  bijective correspondence with the one-to-one mappings 
  $\funcdescr{\largemapSAR}{B}{[k-1]}$ 
  and the same holds for 
  $\smallassmntSAR$ in the support of $\hdistr{X}$ 
  \mbox{vis-a-vis} 
  $\funcdescr{\smallmapSAR}{A}{[k-1]}$. 
  Moreover, 
  each one-to-one mapping 
  $\funcdescr{\smallmapSAR}{A}{[k-1]}$ 
  can be chosen in
  $
  (k - 1)
  (k - 2)
  \cdots
  (k - a) = \binom{k-1}{a}a!
  $
  ways, and for a fixed~$\smallmapSAR$   
  the number of one-to-one mappings 
  $\funcdescr{\largemapSAR}{B}{[k-1]}$ 
  that extend~$\smallmapSAR$ is   
%
  $
  (k - a - 1)
  (k - a - 2)
  \cdots
  (k - b) = \binom{k-1-a}{b-a}(b-a)!
  $.
  Since all involved distributions are uniform over their support,  for
  $\smallassmntSAR \in \set{0,1}^X$ 
  in the support of~$\hdistr{X}$ we have
      \begin{equation}\label{eq:verifying-H-ii}
      \sum_{\substack{\largeassmntSAR \in \{0,1\}^Y \\ \largeassmntSAR \supseteq \smallassmntSAR}}
      \hdistr{Y}(\largeassmntSAR) = \sum_{\substack{\largeassmntSAR \, : \, \Pi_Y(\largeassmntSAR) > 0 \\ \largeassmntSAR \supseteq \smallassmntSAR}}
      \frac{1}{\binom{k-1}{b} {b!}} = \frac{\binom{k-1-a}{b-a} 
        (b-a)!}{\binom{k-1}{b} {b!}} = \frac{1}{\binom{k-1}{a} {a!}} = 
      \hdistr{X}(\smallassmntSAR) 
      \end{equation}
      and for $\smallassmntSAR$ outside the support of~$\hdistr{X}$ the
      whole summation in \eqref{eq:verifying-H-ii} is zero.  
      This finishes the proof of the lemma.
\end{proof}

\subsection{Size bounds for \PCR and \SAR refutations}
\label{sec:size-lower-bound}



Given the lower bounds on pigeon-degree and pigeon-rank for
refuting~$\TPHP$ in
\reftwolems{lmm:pcr-pigeon-degree}{lmm:sar-pigeon-rank}, respectively,
the size lower bounds on refutations of~$\erphpknk$ 
in polynomial calculus resolution and Sherali-Adams resolution are
straightforward adaptions of the lower bound for resolution in
\refth{thm:largenarrow}. We write down the details here for
completeness, starting with the \PCR bounds.

\begin{theorem}\label{thm:pcr-size-lowerbound}
  Let $k = k(n)$ be any integer-valued function such that 
  $k(n) \leq n/4\log n$. 
  Then  $\erphpknk$ can be refuted in \PCR in size
  $\Bigoh{k^{k+1} n^{k}}$,
  and any \PCR refutation requires size
  $\Bigomega{ \InverseRestrictionBoundInline{\lceil(k-1)/2\rceil} } $.
\end{theorem}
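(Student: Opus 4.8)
The plan is to mirror the resolution argument from \refsec{sec:proof-lower-bound} almost verbatim, replacing pigeon-width by pigeon-degree and \reflem{lmm:HardRestrictedFormula} by \reflem{lmm:pcr-pigeon-degree}. For the upper bound, I would appeal to the fact that \PCR efficiently simulates resolution with respect to size, width, and degree simultaneously: take the tree-like resolution refutation of $\erphpknk$ from part~1 of \refth{thm:largenarrow}, which has size $\Bigoh{k^k n^k}$ and width $2k+1$, and simulate each resolution step (axiom download, weakening, resolution inference) by a constant number of \PCR lines of degree $\bigoh{k}$. Translating a clause of width $w$ into its one-term monomial and simulating a resolution step costs only $\bigoh{1}$ lines each of constant size, except that the ``linear combination'' bookkeeping may multiply the number of terms; a careful accounting gives size $\Bigoh{k^{k+1}n^k}$ (the extra factor of $k$ coming from the degree/width of the intermediate polynomials). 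I would write this out as a short explicit simulation rather than invoking a black box, since we want the sharp constant in the exponent.

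For the lower bound, fix a putative \PCR refutation $\proofstd$ of $\erphpknk$ of size $S$ and hit it with a random restriction $\rstd$ drawn from the distribution $\PADistribution$ of \refsec{sec:proof-lower-bound}. Restrictions preserve \PCR refutations (as noted in the preliminaries, a restricted term either vanishes or loses its set variables, and the restricted polynomial is the sum of the restricted terms), so $\restrict{\proofstd}{\rstd}$ is a \PCR refutation of $\restrict{\erphpknk}{\rstd}$, which after renaming the $k$ chosen pigeons to $1,\dots,k$ is exactly $\TPHP$. By \reflem{lmm:pcr-pigeon-degree}, every such refutation has pigeon-degree at least $\lceil (k-1)/2\rceil$, so with probability $1$ some polynomial of $\restrict{\proofstd}{\rstd}$ contains a monomial mentioning at least $\lceil (k-1)/2\rceil$ pigeons. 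On the other hand, apply \reflem{lmm:StrongRestriction} with $\ell = \lceil (k-1)/2\rceil$ to each monomial appearing in $\proofstd$ (the lemma is stated for ``a clause or term'', and the proof explicitly notes the term case is identical): the probability that a fixed monomial restricted by $\rstd$ still mentions $\ell$ or more pigeons is at most $\RestrictionBoundInline{\ell}$. Taking a union bound over all monomials in $\proofstd$—there are at most $S$ of them, since size counts total number of terms—the probability that \emph{some} monomial survives with pigeon-degree $\ge \ell$ is at most $S\cdot\RestrictionBoundInline{\lceil(k-1)/2\rceil}$. Combining the two observations forces $S \geq \InverseRestrictionBoundInline{\lceil(k-1)/2\rceil}$, which is the claimed bound, and straightforward calculation turns this into $n^{\bigomega{k}}$ when $k=\bigoh{n^c}$, $c<1$.

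The main obstacle, as in the resolution case, is not in this top-level assembly—which is routine once the pieces are in place—but in the two supporting lemmas, and in particular in making sure the union bound is applied to the right objects. The subtlety specific to \PCR is that ``size'' is the number of terms summed over all polynomials in the derivation, so the union bound is over monomial-occurrences, not over polynomials; one must check that \reflem{lmm:StrongRestriction} bounds exactly the survival probability of a single monomial's pigeon-set, which it does. A secondary point needing care is that the restriction must be applied to the \emph{multilinear} representatives: $\rstd$ may collapse a monomial of $\proofstd$ onto a lower-degree monomial, but since $\TPHP$'s variables are Boolean and \PCR derivations work modulo $x^2-x$ only implicitly, one should note that the pigeon-degree of a restricted polynomial is at most that of the original, so the degree lower bound from \reflem{lmm:pcr-pigeon-degree} transfers in the required direction. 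With these observations the proof goes through exactly as sketched.
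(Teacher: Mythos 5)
Your proposal is correct and follows essentially the same route as the paper: a random restriction from $\PADistribution$, a union bound over the monomials of the refutation using \reflem{lmm:StrongRestriction} with $\ell = \lceil(k-1)/2\rceil$, and the pigeon-degree lower bound \reflem{lmm:pcr-pigeon-degree} applied to the restricted refutation of $\TPHP$. The only small inaccuracy is in your account of the upper bound's extra factor of~$k$: the paper attributes it not to ``linear combination bookkeeping'' but specifically to the fact that \PCR multiplication is by a single variable per step, so simulating a weakening that adds up to $w$ literals costs $w$ lines, whence size $\Bigoh{wS} = \Bigoh{k^{k+1}n^k}$.
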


\begin{proof}
  %
  %
  Fix any \PCR refutation of   $\erphpknk$ and
  let $\monoset{M}$ be the set of monomials appearing in it.
  We hit the refutation with a random restriction~$ \rstd $
  distributed according to~$\PADistribution $.  
  Since restrictions preserve \PCR derivations
  we obtain a
  refutation of $ \restrict{\erphpknk}{\rstd} $, which as before is
  \TPHP
  after renaming of variables.

  Assume that $ \cardinality{\monoset{M}} <
  \InverseRestrictionBoundInline{\lceil(k-1)/2\rceil}$.
  Applying
  \reflem{lmm:StrongRestriction} with 
  $\ell = \left\lceil \frac{k-1}{2}\right\rceil$ 
  and taking a union bound over 
  the monomials in~%
  $ \monoset{M}$,
  we conclude that
  there must be at least one restriction~$ \rstd $ in the support of
  $\PADistribution $ such that the pigeon-degree of
  $\restrict{\proofstd}{\rstd}$ is at most $\left\lceil \frac{k-1}{2}
  \right\rceil -1$
  if $ n $ is large enough.
  This contradicts \reflem{lmm:pcr-pigeon-degree}, and hence
  $\cardinality{\monoset{M}}$ 
  must be at least 
  $\InverseRestrictionBoundInline{\lceil(k-1)/2\rceil}$.

  To obtain the upper bound we start with the resolution refutation in
  Theorem~\ref{thm:largenarrow}. It is not hard to see that any
  resolution refutation of size $S$ and width $w$ translates
  into a \PCR refutation of size $wS$ and degree $w+1$. The
  additional factor $w$ in the size is due to the fact that while
  resolution can arbitrary weaken a clause in one step, 
  the way multiplication is defined in \PCR  means that we need one
  multiplication step per literal to simulate the same weakening.
\end{proof}

The proof of the bounds for Sherali-Adams is very similar.

%
%

\begin{theorem}\label{thm:sar-size-lowerbound}
  Let $k = k(n)$ be any integer-valued function such that $k(n) \leq
  n/4\log n$. 
  Then
  $\erphpknk$
  can be refuted in \SAR in size
  $\Bigoh{k^{k+2} n^{k}}$,
  and any \SAR refutation requires size
  $\Omega \bigl( \InverseRestrictionBoundInline{k} \bigr) $.
\end{theorem}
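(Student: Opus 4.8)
The plan is to follow the same two-part template used for resolution in \refth{thm:largenarrow} and for polynomial calculus resolution in \refth{thm:pcr-size-lowerbound}, now inserting the two \SAR-specific ingredients that have already been established above: the simulation of resolution by \SAR in \reflem{lmm:SARsimulation} for the upper bound, and the pigeon-rank lower bound for $\TPHP$ in \reflem{lmm:sar-pigeon-rank} for the lower bound, together with the random restriction machinery of \reflem{lmm:StrongRestriction}.

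For the upper bound, I would take the tree-like resolution refutation of $\erphpknk$ guaranteed by \refth{thm:largenarrow}, which has length $\Bigoh{k^{k}n^{k}}$ and width $2k+1$, and feed it into \reflem{lmm:SARsimulation}. This yields an \SAR refutation of rank $(2k+1)+1 = \Bigoh{k}$ and size $\Bigoh{(2k+1)^{2}\cdot k^{k}n^{k}} = \Bigoh{k^{k+2}n^{k}}$, which is exactly the claimed bound. (The weaker generic bound $n^{\Bigoh{k}}$ would already follow from the rank bound alone, but going through the explicit simulation gives the sharper constant in the exponent.)

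For the lower bound, fix any \SAR refutation $\proofstd$ of $\erphpknk$ written in the form~\eqref{eqn:SAdefinition}, and let $\monoset{M}$ be the set of monomials occurring in the polynomials $\polyr_{t}$ to which the formulas $\prod_{i\in\pLit_{t}}\varx_{i}\cdot\prod_{i\in\nLit_{t}}(1-\varx_{i})\cdot\polyp_{t}$ expand; by definition the size of $\proofstd$ is at least $\cardinality{\monoset{M}}$. I would hit $\proofstd$ with a random restriction $\rstd$ drawn from $\PADistribution$. Since restrictions preserve \SAR refutations and $\restrict{\erphpknk}{\rstd}$ is $\TPHP$ up to renaming of the $k$ chosen pigeons, $\restrict{\proofstd}{\rstd}$ is an \SAR refutation of $\TPHP$; moreover its expanded polynomials are the restrictions of the $\polyr_{t}$'s, so every monomial appearing in it is of the form $\restrict{m}{\rstd}$ for some $m\in\monoset{M}$. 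Now assume for contradiction that $\cardinality{\monoset{M}} < \InverseRestrictionBoundInline{k}$. Applying \reflem{lmm:StrongRestriction} with $\ell = k$ to each $m\in\monoset{M}$ viewed as a term and taking a union bound, we obtain that for $n$ large enough there is a restriction $\rstd$ in the support of $\PADistribution$ under which every restricted monomial mentions fewer than $k$ pigeons, i.e., $\restrict{\proofstd}{\rstd}$ is an \SAR refutation of $\TPHP$ of pigeon-rank at most $k-1$. This contradicts \reflem{lmm:sar-pigeon-rank}, so $\cardinality{\monoset{M}} \geq \InverseRestrictionBoundInline{k}$ and hence so is the size of $\proofstd$.

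Since the genuinely substantial components --- the \SAR simulation of resolution steps and the pigeon-rank bound for $\TPHP$ obtained via an \hconsistent family of distributions in \reflem{lmm:sar-pigeon-rank} (through \reflem{lem:sufficient}) --- are already available, the only real work left is bookkeeping. The points to be careful about are: that ``size $\geq$ number of distinct monomials in the expansions $\polyr_{t}$'' is the correct accounting for \SAR size; that restriction commutes with forming those expansions, so that the pigeon-rank of $\restrict{\proofstd}{\rstd}$ really is controlled by the restrictions of the monomials in $\monoset{M}$; and that reversing the roles of $0$ and $1$ in \SA and \SAR does not affect any of the probabilistic estimates in \reflem{lmm:StrongRestriction} --- which is precisely why that lemma was proved for terms as well as for clauses. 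I do not expect any of these to present a genuine obstacle, as the difficulty of the result is concentrated entirely in the rank lower bound for $\TPHP$, which we take as given.
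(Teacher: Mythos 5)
Your proposal is correct and follows essentially the same approach as the paper's proof: the upper bound by feeding the tree-like resolution refutation from Theorem~\ref{thm:largenarrow} into the simulation of Lemma~\ref{lmm:SARsimulation}, and the lower bound by the random-restriction argument with Lemma~\ref{lmm:StrongRestriction} (applied to terms with $\ell = k$) together with the pigeon-rank bound of Lemma~\ref{lmm:sar-pigeon-rank}. The extra care you take about the accounting of \SAR size in terms of the monomials appearing in the expanded polynomials $\polyr_t$, and about restrictions commuting with those expansions, is in the spirit of (and a slight elaboration on) the paper's more terse treatment.
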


 \begin{proof}
   %
   Fix any \SAR refutation of   $\erphpknk$ and 
   let $\monoset{M}$ be the set of monomials appearing in it.
   Hit the refutation with a random restriction
   $ \rstd $ distributed according to $ \PADistribution $.  
   Since restrictions preserve soundness of \SAR proofs, 
   this yields a refutation of $ \restrict{\erphpknk}{\rstd} $, which
   is \TPHP. 
 
 Suppose now that
   $ \cardinality{\monoset{M}} <\InverseRestrictionBoundInline{k}  $.
   Using \reflem{lmm:StrongRestriction} with $\ell = k$ and a union
   bound argument for~$\monoset{M}$, 
   we conclude that
   there 
   exists
   at least one restriction~$ \rstd $
   in the support of $\PADistribution $ such that the pigeon-rank of
   $\restrict{\proofstd}{\rstd}$ is at most $k-1$, 
   assuming 
   that
   $ n $ large enough.
   But
   this contradicts \reflem{lmm:sar-pigeon-rank},
   and hence the lower bound in the theorem follows.

   We obtain the upper bound by using the simulation in
   Lemma~\ref{lmm:SARsimulation} on the resolution refutation in
   Theorem~\ref{thm:largenarrow}.
 \end{proof}

\section{An upper bound for relativized PHP formulas in Lasserre}
\label{sec:upper-bound-lasserre}
\label{sec:lasserre}

%
%

In this section, we show that our lower bound 
\refth{th:main-thm-informal}
does not generalize to Lasserre
but that the formulas~$\erphpknk$ (and also~$\rphpknk$) have Lasserre
refutations in constant rank.
To establish this we we will use the easily verified identity
\begin{equation} 
  \label{eqn:identity}
  \sum_{\substack{i,j \in [n] \\ i \neq j}}
  \Big(1 - z_i - z_j\Big) z_j
  + (n-2)\sum_{j \in [n]} \Big(z_j^2 - z_j\Big) + \Big({1-\sum_{i \in [n]}
    z_i}\Big)^2 = 1 - \sum_{i \in [n]} z_i
\end{equation} 
a couple of times. A direct application 
of~\refeq{eqn:identity}
shows that the inequality 
$1 - \sum_{i \in [n]} z_i \geq 0$ 
has a \mbox{rank-$2$} Lasserre derivation from the set of all
inequalities of the form 
$1 - z_i - z_j \geq 0$ 
for 
$i,j \in [n]$, $i \neq j$.
We remark that this fact is a direct consequence of Lemma~1.5
in~\cite{LovaszSchrijver1991Cones}. 
Let us first use this
to get a \mbox{rank-$2$} Lasserre refutation of the standard
pigeonhole principle~$\PHPk$ encoded as the set of clauses
\begin{subequations}
  \begin{align}
    \label{eq:pigeon-ax}
    &
    x_{u,1} \lor x_{u,2} \lor \cdots \lor x_{u,k-1} 
    &&
    \text{$u \in [k]$,} 
    \\
    \label{eq:hole-ax}
    &
    \olnot{x}_{u,w} \lor \olnot{x}_{v,w}
    &&
    \text{$u,v \in [k]$, $u \neq v$, $w \in [k-1]$.}
  \end{align}
\end{subequations}
%
%
The proof we give next is essentially due to Grigoriev et al.~\cite{GHP02Complexity}.

\begin{lemma}[\cite{GHP02Complexity}]\label{lem:PHP}
  The formulas $\PHPk$ have Lasserre refutations of rank $2$.
\end{lemma}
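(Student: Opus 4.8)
The goal is to give a rank-$2$ Lasserre refutation of $\PHPk$ encoded by clauses \eqref{eq:pigeon-ax}--\eqref{eq:hole-ax}. The strategy is to use the identity \eqref{eqn:identity} twice, once "column-wise" and once "row-wise", exactly as foreshadowed in the paragraph before the lemma. First I would record the stated consequence of \eqref{eqn:identity}: for any set of indices and variables $z_i$, the inequality $1 - \sum_i z_i \geq 0$ has a rank-$2$ Lasserre derivation from the inequalities $1 - z_i - z_j \geq 0$, since the left-hand side of \eqref{eqn:identity} is a nonnegative combination of those inequalities (each multiplied by a variable $z_j \geq 0$, which is itself derivable as $z_j = z_j^2 + (z_j - z_j^2)$ of rank $2$), Boolean axioms $z_j^2 - z_j$, and a single square.

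\textbf{Step 1 (holes are not overfull).} For each fixed hole $w \in [k-1]$, apply the observation to the variables $z_u := x_{u,w}$ for $u \in [k]$. The hypotheses $1 - x_{u,w} - x_{v,w} \geq 0$ for $u \neq v$ are precisely the clause encodings of \eqref{eq:hole-ax}. Hence we obtain, in rank $2$, the inequality $1 - \sum_{u \in [k]} x_{u,w} \geq 0$ for every hole $w$.

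\textbf{Step 2 (pigeons are placed, summed).} Sum the pigeon axioms \eqref{eq:pigeon-ax}, whose clause encodings give $\sum_{w \in [k-1]} x_{u,w} - 1 \geq 0$ for each $u \in [k]$. Adding over all $u \in [k]$ yields $\sum_{u,w} x_{u,w} - k \geq 0$, i.e. $\sum_{w \in [k-1]} \bigl(\sum_{u \in [k]} x_{u,w}\bigr) \geq k$. On the other hand, summing the $k-1$ inequalities from Step 1 over all holes $w$ gives $\sum_{w \in [k-1]} \bigl(\sum_{u \in [k]} x_{u,w}\bigr) \leq k-1$. Adding these two derived inequalities (a nonnegative combination of rank-$\leq 2$ derivations is still rank $\leq 2$ in Lasserre) produces $k \leq k - 1$, that is, the contradiction $-1 \geq 0$. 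This is the desired rank-$2$ Lasserre refutation of $\PHPk$.

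\textbf{Main obstacle.} The only subtle point is bookkeeping about what "rank $2$" means under the paper's definition: each axiom in a Lasserre derivation is multiplied by a term $\prod_{i \in \pLit_t} x_i \prod_{j \in \nLit_t}(1-x_j) \cdot p_t$, and the rank is the degree to which this product expands. In Step 1, each hole axiom $1 - x_{u,w} - x_{v,w}$ gets multiplied by a single variable $x_{u',w}$, giving degree $2$; the square term $\bigl(1 - \sum_u x_{u,w}\bigr)^2$ is a $p_t = q^2$ factor with the empty monomial prefix, also expanding to degree $2$; the Boolean axioms $x_{u,w}^2 - x_{u,w}$ carry degree $2$. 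So everything stays at rank $2$. In Steps 1--2 the final combination only adds previously derived inequalities with positive coefficients and the pigeon axioms with the empty prefix (degree $1$), so no rank is gained. I would just make sure to phrase the derivation so that it literally matches the form \eqref{eqn:identity} scaled appropriately (the factor $n-2$ there becomes $k-2$ here for each hole), and cite Lemma~1.5 of \cite{LovaszSchrijver1991Cones} and \cite{GHP02Complexity} for the underlying identity as already indicated in the text. No genuine difficulty is expected beyond this verification.
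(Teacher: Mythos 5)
Your proposal is correct and follows essentially the same approach as the paper: apply identity \eqref{eqn:identity} per hole to derive $1 - \sum_{u \in [k]} x_{u,w} \geq 0$, sum over holes, sum the pigeon axioms, and add to obtain $-1 \geq 0$. The additional rank-$2$ bookkeeping you spell out is correct and consistent with the paper's (implicit) accounting.
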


\begin{proof}
  Combining all hole axioms 
  $1 - x_{u,w} - x_{v,w} \geq  0$ 
  in~\refeq{eq:hole-ax}
  for a fixed hole $w \in [k-1]$
  and  using~\eqref{eqn:identity}
  we can get the inequality
  $1-\sum_{u\in [k]} x_{u,w} \geq 0$.
  Adding these inequality over all holes $w \in [k-1]$ we  obtain
  \begin{equation}
    \label{eq:php-proof-i}
    k-1 - 
    \sum_{u \in [k]}
    \sum_{w \in [k-1]} 
    x_{u,w} \geq 0
    \eqperiod
  \end{equation}
  Adding together instead all the pigeon axioms
  $\sum_{w \in [k-1]} x_{u,w} - 1 \geq 0$ 
  in~\refeq{eq:pigeon-ax}
  we get
  \begin{equation}
    \label{eq:php-proof-ii}
    \sum_{u \in [k]} 
    \sum_{w \in [k-1]} 
    x_{u,w} - k \geq 0
    \eqperiod
  \end{equation}
  Summing
  \refeq{eq:php-proof-i}  and~\refeq{eq:php-proof-ii}
  yields
  $-1 \geq 0$.
%
\end{proof}

Two more 
applications
of~\eqref{eqn:identity} will help us get 
\mbox{rank-$9$} Lasserre refutations of $\rphpknk$ (and~$\erphpknk$)
by reduction to~$\PHPk$. The main idea of the proof is to substitute 
variables in the derivation
in Lemma~\ref{lem:PHP} with polynomials defined
over
the variables of~$\rphpknk$.

%
%

\begin{lemma}\label{lem:RPHP}
  The formulas
  $\rphpknk$ and $\erphpknk$ have Lasserre
  refutations of rank $9$.
\end{lemma}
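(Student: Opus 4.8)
The plan is to reduce a Lasserre refutation of $\rphpknk$ (and of $\erphpknk$) to the rank-$2$ refutation of $\PHPk$ from \reflem{lem:PHP}, exactly as hinted: substitute the pigeonhole variables $x_{u,w}$ by suitable low-degree polynomials in the variables of $\rphpknk$ and check that every inequality used in the $\PHPk$ refutation becomes derivable in low rank from the axioms of $\rphpknk$. Recall that $\rphpknk$ has variables $p_{u,v}$ ($u\in[k]$, $v\in[n]$), $r_v$ ($v\in[n]$), and $q_{v,w}$ ($v\in[n]$, $w\in[k-1]$). The natural substitution is $x_{u,w} \mapsto \sum_{v\in[n]} p_{u,v}\, q_{v,w}$, which intuitively says ``pigeon $u$ goes to hole $w$ via some intermediate element $v$''. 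This is a degree-$2$ substitution, so it will blow up rank by roughly a factor of two, and a couple of extra applications of identity~\eqref{eqn:identity} on the $r_v$-variables will push the final rank up to the claimed constant~$9$.

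First I would instantiate \reflem{lem:PHP}: write out its refutation of $\PHPk$ explicitly as a Lasserre derivation (a positive combination of products of axioms and squares), apply the substitution $\phpvariableremap(x_{u,w}) = \sum_{v} p_{u,v} q_{v,w}$ to every line, and then argue line by line that the substituted object is still a valid rank-$9$ Lasserre derivation over the variables of $\rphpknk$. Concretely I need to (i) show that the substituted pigeon axiom, $\sum_{w\in[k-1]} \phpvariableremap(x_{u,w}) - 1 \ge 0$, is derivable: this follows by combining $p_{u,1}\lor\cdots\lor p_{u,n}$ from \eqref{eq:PdefinedWide}, the clauses $\olnot{p}_{u,v}\lor r_v$ from \eqref{eq:PimageWide}, and $\olnot r_v\lor q_{v,1}\lor\cdots\lor q_{v,k-1}$ from \eqref{eq:QdefinedWide}, using identity~\eqref{eqn:identity} once to collapse the $r_v$'s; (ii) show that the substituted hole axiom, $1 - \phpvariableremap(x_{u,w}) - \phpvariameremap(x_{u',w}) \ge 0$, is derivable from $\olnot r_v\lor\olnot r_{v'}\lor\olnot q_{v,w}\lor\olnot q_{v',w}$ from \eqref{eq:QinjectiveWide} together with the functionality clauses $\olnot p_{u,v}\lor\olnot p_{u',v}$ from \eqref{eq:PinjectiveWide} and again \eqref{eqn:identity} on the $r$-variables (the term where $v = v'$ is handled by $p$-injectivity, the terms where $v\ne v'$ by $q$-injectivity on the shared hole); and (iii) check that squares map to squares and Boolean axioms $z^2 - z$ map to something derivable from the Boolean axioms of $\rphpknk$ (here one uses that $p$ is functional so that the products $p_{u,v}p_{u,v'}$ for $v\ne v'$ vanish modulo the axioms). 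For $\erphpknk$ one further unfolds the extension variables $y_{u,v}, z_{v,w}, r_{v,v'}$: each defining clause lets one eliminate the corresponding extension variable in constant rank, so the $\erphpknk$ refutation is obtained from the $\rphpknk$ one at the cost of a bounded additive increase in rank, still a constant.

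The main obstacle I expect is bookkeeping the rank carefully. The substitution is quadratic, so products of two substituted axioms already reach degree~$4$ before multilinearization, and the extra \eqref{eqn:identity}-applications on the $r$-variables and the extension-variable eliminations each add a little more; getting the total down to precisely $9$ (rather than some larger but still constant bound) requires being economical about which derivation steps are chained together and exploiting that all the relevant polynomials, after using the Boolean and complementarity axioms, are multilinear of bounded degree. A secondary subtlety is making sure the substituted objects literally have the syntactic shape~\eqref{eqn:SAdefinition} required of a Lasserre derivation — i.e. a single positive combination of (axiom or square) times nonnegative product-of-variables factors — which may force one to re-derive a few of the substituted inequalities from scratch rather than blindly pushing the substitution through, much as in the proof of \reflem{lmm:pcr-pigeon-degree}. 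Modulo these routine but fiddly checks, the argument is a direct reduction and the rank stays constant, giving the lemma.
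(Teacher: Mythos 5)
Your high-level strategy is correct and matches the paper: substitute the pigeonhole variables $x_{u,w}$ by low-degree polynomials in the $p,r,q$ variables, and push the rank-$2$ Lasserre refutation of $\PHPk$ from \reflem{lem:PHP} through the substitution while checking that the substituted Boolean axioms, pigeon axioms, and hole axioms are all derivable in constant rank from the axioms of $\rphpknk$. However, the specific substitution you choose is not the right one, and that choice is where the rank bookkeeping would go wrong.

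You propose $\phpvariableremap(x_{u,w}) = \sum_{v} p_{u,v}\, q_{v,w}$, a degree-$2$ substitution with no $r$-variable. The paper instead uses $x_{u,w} = \sum_{\ell} p_{u,\ell}\, r_\ell\, q_{\ell,w}$, a degree-$3$ substitution in which $r_\ell$ is an explicit factor. That factor is essential: the only axiom of $\rphpknk$ that enforces hole-injectivity is \eqref{eq:QinjectiveWide}, whose polynomial translation $3 - r_v - r_{v'} - q_{v,w} - q_{v',w} \ge 0$ mentions both $r_v$ and $r_{v'}$. To derive the substituted hole axiom $1 - x_{u,w} - x_{u',w} \ge 0$ via \eqref{eqn:identity}, each summand $p_{u,\ell} r_\ell q_{\ell,w}$ must already carry the factor $r_\ell$ so that the $q$-injectivity terms fall out of multiplying \eqref{eq:QinjectiveWide} by products of these summands. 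With your degree-$2$ substitution those $r$-factors are absent, and you would first need to bootstrap each $p_{u,v} q_{v,w}$ up to $p_{u,v} r_v q_{v,w}$ using \eqref{eq:PimageWide} and Boolean axioms inside every term of every product; this is possible, but it inflates the rank well past your estimate of ``roughly a factor of two plus a couple of identity applications,'' and it is by no means clear you land at $9$. Getting the substitution right from the start is the point at which the paper's proof diverges from your plan, and it is not a cosmetic difference. Your treatment of $\erphpknk$ is also more laborious than necessary: rather than eliminating extension variables one at a time, the paper simply observes that the inequality encoding a wide clause equals the sum of the inequalities encoding its $3$-CNF pieces (the extension variables occur once positively and once negatively and cancel), so any Lasserre refutation of $\rphpknk$ immediately becomes one of $\erphpknk$ of the same rank by substituting sums of short axioms for long ones.
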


\begin{proof}
  Let us first observe that we only need to present the Lasserre
  refutation of~$\rphpknk$. Once we have a refutation of the original
  formula $\rphpknk$ 
  we immediately obtain a refutation of the $3$\nobreakdash-CNF version
  $\erphpknk$  by  using the observation that
  the 
  encoding of a wide clause
  $\clc=\lita_{1} \lor \ldots \lita_{w}$ 
  is the the sum of the
  encodings of the  corresponding \mbox{$3$-clauses}
  $
  \lita_{1} \lor \lita_{2} \lor z_{2}, \,
  \mbox{$\olnot{z}_{2} \lor \lita_{3} \lor z_{3},$} \,
  \ldots, \,
  \olnot{z}_{w-2} \lor \lita_{w-1} \lor \lita_{w}
  $.
  This is so since all extension variables appear exactly once
  positively and exactly once negatively and so will simply cancel.
%
  Thus, 
  once we have a refutation of $\rphpknk$ we can get a valid
  refutation of $\erphpknk$ of the same rank by substituting the sum
  of the corresponding short axioms in $\erphpknk$ for any long axiom
  in~$\rphpknk$.

  For the rest of the proof we therefore focus on $\rphpknk$. Let 
  $\polyset{P}$ 
  be the set of polynomial inequalities that encode it and 
  let us define the shorthand
  \begin{equation}
    \label{eq:x-shorthand}
    x_{u,w}
    =
    \sum_{\ell \in [n]}
    p_{u,\ell} \, r_{\ell} \, q_{\ell,w}
    \eqperiod
  \end{equation}
  We  want to use the proof of the pigeonhole principle in
  \reflem{lem:PHP}
  together with the substitution~\refeq{eq:x-shorthand}  for~$x_{u,w}$. In
  order to do so, we
  need to show how to derive the substituted axioms used in that
  proof.  
  The inequalities $ x^{2}_{u,w} -x_{u,w} \geq 0$ 
  can be obtained by summing
  \begin{align}
    & 
    \sum_{\substack{\ell,m \in [n] \\ \ell \neq m}}
    (3 - q_{\ell,w} - q_{m,w} - r_{\ell} - r_{m}) 
    \, q_{\ell,w} \, q_{m,w} \, p_{u,\ell} \, p_{u,m}
    \, r_\ell \, r_m \geq 0 
    \eqcomma
    \\
    & 
    \sum_{\substack{\ell,m \in [n] \\ \ell \neq m}}
    (r^{2}_{\ell} - r_{\ell}) r_{m}  
    \, q_{\ell,w} \, q_{m,w} \, p_{u,\ell} \, p_{u,m}
    +
    (r^{2}_{m} - r_{m}) r_{\ell}  
    \, q_{\ell,w} \, q_{m,w} \, p_{u,\ell} \, p_{u,m}
    \geq 0 
    \eqcomma
    \\
    & 
    \sum_{\substack{\ell,m \in [n] \\ \ell \neq m}}
    q_{\ell,w}^{2} \, q_{m,w} \, p_{u,\ell} \, p_{u,m} \, r_\ell \, r_m 
    +
    q_{\ell,w} \, q_{m,w} ^{2}\, p_{u,\ell} \, p_{u,m} \, r_\ell \, r_m 
    \geq 0 
    \eqcomma
    \\
    \intertext{and}
    & \sum_{\ell \in [n]} 
    ( p_{u,\ell}^2 - p_{u,\ell} ) \, r_\ell^2 \, q_{\ell,w}^2 +
    ( r_\ell^2 - r_{\ell}) \, p_{u,\ell} \,  q_{\ell,w}^2 +
    ( q_{\ell,w}^2 - q_{\ell,w} ) \, p_{u,\ell} \, r_\ell \geq 0
    \eqcomma
  \end{align}
  and the latter inequalities all have direct \mbox{rank-$7$}
  derivations from~$\polyset{P}$. 
%
%
  To derive the inequalities $\sum_{w \in [k-1]} x_{u,w} - 1 \geq 0$
  for
  $u \in [k]$ we can sum up
  \begin{align}
    \label{eq:longineq2} 
    & 
    \sum_{\ell \in [n]} 
    \Big(\sum_{w \in [k-1]} q_{\ell,w} - r_\ell \Big) 
    p_{u,\ell} \, r_{\ell} \geq 0 
    \eqcomma
    \\ 
    & 
    \sum_{\ell \in [n]} 
    \Big(r^{2}_\ell-r_{\ell}\Big) p_{u,\ell} +
    \sum_{\ell \in [n]} \Big(r_\ell-p_{u,\ell}\Big) p_{u,\ell} +
    \sum_{\ell \in [n]} \Big(p^{2}_{u,\ell}-p_{u,\ell}\Big) \geq 0 
    \eqcomma
    \\
    \intertext{and}
    \label{eq:longineq}
    & 
    \sum_{\ell \in [n]} p_{u,\ell} - 1 \geq 0
    \eqcomma
  \end{align}
which can all be derived directly from~$\polyset{P}$ in rank~$3$.
%
%
%
%
The inequality $1 - x_{u,w} - x_{v,w}
\geq 0$ is the sum of
\begin{align}
  \label{eq:rqvariables1} 
  & 
  \sum_{\ell \in [n]} \Big(1 - p_{u,\ell} - p_{v,\ell}\Big)
  \, r_{\ell} \,   q_{\ell,w} \geq 0 
  \\
  \intertext{and}
  \label{eq:rqvariables2}
  &  1-\sum_{\ell \in [n]} r_{\ell} q_{\ell,w} \geq 0
  \eqcomma
\end{align}
where \eqref{eq:rqvariables1} has a direct rank-$3$ derivation
from~$\polyset{P}$.  For~\eqref{eq:rqvariables2} we need to do some
more work. Fix 
indices $\ell,m \in [n]$ with $\ell \neq m$ and observe that
%
%
\begin{multline}
  \bigl(
  1 - r_{\ell}q_{\ell,w} - r_{m}q_{m,w}
  \bigr)
  \, r_{\ell}q_{\ell,w} =  
  \\ 
  \bigl(
  3 - r_\ell - r_m - q_{\ell,w} - q_{m,w}
  \bigr) 
  \, q_{\ell,w} \, r_{m} \, q_{m,w} + 
  \bigl(
  q^{2}_{\ell,w} - q_{\ell,w}
  \bigr)
  \, r_{m} \, q_{m,w} 
  + 
  \bigl(
  q^{2}_{m,w} - q_{m,w}
  \bigr)
  \, r_{m} \, q_{\ell,w} 
  \\
  +  
  \bigl(
  r^{2}_{m} - r_{m}
  \bigr)
  \, q_{\ell,w} \, q_{m,w} +  
  \bigl(
  r_{\ell} - r^{2}_{\ell}
  \bigr)
  \, q_{\ell,w} + 
  \bigl(
  q_{\ell,w} - q^{2}_{\ell,w}
  \bigr)
  \, r^{2}_{\ell}
  \eqperiod
\end{multline}
%
Note that the first term 
on
the right-hand side of this equation is
the polynomial translation of axiom~\eqref{eq:QinjectiveWide}.
Writing 
$z_\ell$ for $r_{\ell} \, q_{\ell,w}$, this shows that the inequality
$(1-z_\ell-z_m) z_\ell \geq 0$ has a \mbox{rank-$4$} derivation from
$\polyset{P}$. Combined with the fact that $z_\ell^2 - z_\ell = (r^2_\ell -
r_\ell) q^2_\ell + (q^2_\ell - q_\ell) r_\ell$,
equation~\eqref{eqn:identity} gives a \mbox{rank-$4$} derivation of 
$1 - \sum_{\ell \in [n]} z_\ell \geq 0$. This is
precisely~\eqref{eq:rqvariables2}.

Now we mimic the refutation of $\PHPk$ in Lemma~\ref{lem:PHP}.  For
a fixed $w \in [k-1]$ we can use the derivations of 
$1 - x_{u,w} - x_{v,w} \geq 0$ 
and 
$x_{v,w}^2 - x_{v,w} \geq 0$ 
in combination with~\eqref{eqn:identity} 
to obtain the inequality 
$1 - \sum_{u \in [k]} x_{u,w} \geq 0$ 
by a rank-$9$ derivation. Adding all such inequalities for
$w \in [k-1]$ 
gives 
\begin{equation}
  k - 1 - 
  \sum_{w \in [k-1]} 
  \sum_{u \in [k]} 
  x_{u,w} \geq 0
\eqperiod  
\end{equation}
On the other hand, adding $\sum_{w \in [k-1]} x_{u,w} - 1 \geq 0$
over all $u \in [k]$
yields
\begin{equation}
  \sum_{u \in [k]} \sum_{w \in [k-1]}
  x_{u,w} - k \geq 0
\end{equation}
in rank~$3$
(the rank of the
derivation of $\sum_{w \in [k-1]} x_{u,w} - 1 \geq 0$), and a final
addition allows us to derive
$-1 \geq 0$, 
never going above rank~$9$.
%
%
%
\end{proof}

\section{Concluding remarks}
\label{sec:concluding-remarks}

In this paper, we exhibit
a family of $3$-CNF formulas over $n$~variables that
can be refuted in resolution in width~$w$ but require refutations 
of size $n^{\bigomega{w}}$. Furthermore, this lower bound can be extended to 
polynomial calculus resolution (PCR) and Sherali-Adams.
This shows that the seemingly naive counting upper bounds on proof
size in terms of width for resolution, degree for PCR, and rank
for Sherali-Adams are actually all tight up to small constant factors in the
exponent. Furthermore, 
our lower bound for resolution 
also implies that the result
in~\cite{AFT11ClauseLearning}  
that CNF formulas  refutable in width~$w$ can be decided by CDCL
solvers in time~$n^{\bigoh{w}}$ is 
optimal
(again up to constant factors
in the exponent), since any resolution refutation the solver finds 
might have to be that large in the worst case.

Regarding open problems, perhaps the most obvious one concerns the
tightness of our result. 
Our formulas have roughly
$N = n^2$ variables and are refutable in width roughly
$w = 2k$, and our size lower bound is on the order of 
$n^{k} = N^{w/4}$. 
However, the direct counting argument for width $w$
gives an upper bound of about $N^{w}$ clauses. Could this gap in the
exponent be closed?  
If so, this would have to be for a different formula family since
ours has an upper bound of
roughly
\mbox{$n^k = N^{w/4}$.}
One point worth noting is that one can
shave a factor~$2$ off the gap in the exponent by considering the
$4$\nobreakdash-CNF formulas obtained if the $4$\nobreakdash-clauses
in~\eqref{eq:QinjectiveWide} 
are \emph{not} converted to \mbox{$3$-CNF}.
In this case, the same upper and lower bounds still hold, but the
number of variables is on the order of~$N = kn$,
which means that we get a lower bound
of the form~$N^{w/2}$
if we focus on width~$w$ upper-bounded by a constant.

%

A more fundamental question is whether we can find a formula family
that exhibits the same kind of hardness for Lasserre. 
As shown in this paper, the formulas we used for resolution,
\PCR and Sherali-Adams will not work.
For tree-like Lov{\'a}sz-Schrijver (LS), however, we believe that our formulas
should be hard (and that the method of proof should be similar, with long
\emph{paths} in the refutation tree playing the role of long
monomials). In view of the Lasserre upper bound, for tree-like~$\LSPLUS$
we do not know what to believe.
The main problem with our formulas is that after restriction we obtain
a pigeonhole principle which is hard for resolution, \PCR, and Sherali-Adams
(in term of rank) but easy for LS$^{+}$.
A way to get a similar lower bound for Lasserre might be to find a formula
that is hard for Lasserre rank and 
becomes hard for Lasserre
size after relativization.

A natural formula for which it would be interesting to prove
similar size lower bounds as in this paper is the so-called
\introduceterm{clique formula} claiming that there is a 
\mbox{$k$-clique} in some fixed $n$-vertex graph chosen so that this
claim is false. It has been conjectured (e.g., in~%
\cite{BeyersdoffEtAl2012Parameterized})
that such formulas require resolution refutation size
$n^{\bigomega{k}}$ 
for the right kind of graphs, and this has been proven for
the restricted case of tree-like
resolution~\cite{BeyersdorffGalesiLauria2013parameterized}.
If such a lower bound could be established for general resolution, it
would have interesting consequences for parameterized proof complexity.

Finally, while the relations between size, width, and space in
resolution are now fairly well-understood, one big open question
remains. Namely, it was shown in~\cite{BW01ShortProofs}
that if a formula has a short resolution refutation then it can also
be refuted in small width, but this narrow refutation is obtained at
the price of an exponential blow-up in size. Is this inherent, or is
it just an artifact of the proof in~\cite{BW01ShortProofs}? 
That is, can size and width be
optimized simultaneously in resolution, or are there formulas for
which optimizing one of the measures must always cause a stiff penalty
for the other? 
For size vs.~space and space vs.~width dramatic trade-offs are known~%
\cite{BBI12TimeSpace,Ben-Sasson09SizeSpaceTradeoffs,BN11UnderstandingSpace},
and these results extend also to \PCR~\cite{BNT12SomeTradeoffs},
but it remains 
open whether there are similiar trade-offs between
size and width in resolution or between size and degree in \PCR.

\section*{Acknowledgments}

The authors would like to thank Mladen Mik\v{s}a and Marc~Vinyals for 
interesting discussions related to the topics of this work.

Part of the work of \theauthorAA was done while visiting KTH
Royal Institute of Technology.
The second and third authors were funded by the
European Research Council under the European Union's Seventh Framework
Programme \mbox{(FP7/2007--2013) /} ERC grant agreement no.~279611.
\TheauthorJN 
was also supported by
Swedish Research Council grants 
\mbox{621-2010-4797}
and
\mbox{621-2012-5645}.

%
%

\bibliography{refArticlesUTF8,refBooksUTF8,refOtherUTF8,refLocalUTF8}

\newcommand{\etalchar}[1]{$^{#1}$}
\newcommand{\noopsort}[1]{}
\begin{thebibliography}{MMZ{\etalchar{+}}01}

\bibitem[ABRW02]{ABRW02SpaceComplexity}
Michael Alekhnovich, Eli {Ben-Sasson}, Alexander~A. Razborov, and Avi
  Wigderson.
\newblock Space complexity in propositional calculus.
\newblock {\em SIAM Journal on Computing}, 31(4):1184\nobreakdash--1211, 2002.
\newblock Preliminary version appeared in \emph{STOC~'00}.

\bibitem[AD08]{AD08CombinatoricalCharacterization}
Albert Atserias and Víctor Dalmau.
\newblock A combinatorial characterization of resolution width.
\newblock {\em Journal of Computer and System Sciences},
  74(3):323\nobreakdash--334, May 2008.
\newblock Preliminary version appeared in \emph{CCC~'03}.

\bibitem[AFT11]{AFT11ClauseLearning}
Albert Atserias, Johannes~Klaus Fichte, and Marc Thurley.
\newblock Clause-learning algorithms with many restarts and bounded-width
  resolution.
\newblock {\em Journal of Artificial Intelligence Research},
  40:353\nobreakdash--373, January 2011.
\newblock Preliminary version appeared in \emph{SAT~'09}.

\bibitem[ALN14]{ALN14NarrowProofs}
Albert Atserias, Massimo Lauria, and Jakob Nordström.
\newblock Narrow proofs may be maximally long.
\newblock In {\em Proceedings of the 29th Annual {IEEE} Conference on
  Computational Complexity ({CCC}~'14)}, pages 286\nobreakdash--297, June 2014.

\bibitem[AMO13]{AMO13LowerBounds}
Albert Atserias, Moritz Müller, and Sergi Oliva.
\newblock Lower bounds for {DNF}-refutations of a relativized weak pigeonhole
  principle.
\newblock In {\em Proc.\ 28th Annual {IEEE} Conference on Computational
  Complexity ({CCC}~'13)}, pages 109\nobreakdash--120, June 2013.

\bibitem[AR03]{AR03LowerBounds}
Michael Alekhnovich and Alexander~A. Razborov.
\newblock Lower bounds for polynomial calculus: {N}on-binomial case.
\newblock {\em Proc.\ Steklov Institute of Mathematics},
  242:18\nobreakdash--35, 2003.
\newblock Available at
  \url{http://people.cs.uchicago.edu/~razborov/files/misha.pdf}. Preliminary
  version appeared in \emph{FOCS~'01}.

\bibitem[BBH{\etalchar{+}}12]{BBHKSZ12Hypercontractivity}
Boaz Barak, Fernando G. S.~L. Brand{\~a}o, Aram~Wettroth Harrow, Jonathan~A.
  Kelner, David Steurer, and Yuan Zhou.
\newblock Hypercontractivity, sum-of-squares proofs, and their applications.
\newblock In {\em Proc.\ 44th Annual ACM Symposium on Theory of Computing
  ({STOC}~'12)}, pages 307\nobreakdash--326, May 2012.

\bibitem[BBI12]{BBI12TimeSpace}
Paul Beame, Chris Beck, and Russell Impagliazzo.
\newblock Time-space tradeoffs in resolution: Superpolynomial lower bounds for
  superlinear space.
\newblock In {\em Proc.\ 44th Annual ACM Symposium on Theory of Computing
  ({STOC}~'12)}, pages 213\nobreakdash--232, May 2012.

\bibitem[BD09]{BD09Polybori}
Michael Brickenstein and Alexander Dreyer.
\newblock {PolyBoRi}: A framework for {G}röbner-basis computations with
  {B}oolean polynomials.
\newblock {\em Journal of Symbolic Computation}, 44(9):1326\nobreakdash--1345,
  September 2009.

\bibitem[BDG{\etalchar{+}}09]{BDGWW09NewDevelopments}
Michael Brickenstein, Alexander Dreyer, Gert-Martin Greuel, Markus Wedler, and
  Oliver Wienand.
\newblock New developments in the theory of {G}röbner bases and applications
  to formal verification.
\newblock {\em Journal of Pure and Applied Algebra},
  213(8):1612\nobreakdash--1635, August 2009.

\bibitem[{Ben}09]{Ben-Sasson09SizeSpaceTradeoffs}
Eli {Ben-Sasson}.
\newblock Size space tradeoffs for resolution.
\newblock {\em SIAM Journal on Computing}, 38(6):2511\nobreakdash--2525, May
  2009.
\newblock Preliminary version appeared in \emph{STOC~'02}.

\bibitem[Ber12]{Berkholz12ComplexityNarrowProofs}
Christoph Berkholz.
\newblock On the complexity of finding narrow proofs.
\newblock In {\em Proc.\ 53rd Annual {IEEE} Symposium on Foundations of
  Computer Science ({FOCS}~'12)}, pages 351\nobreakdash--360, October 2012.

\bibitem[BG03]{BG03SpaceComplexity}
Eli {Ben-Sasson} and Nicola Galesi.
\newblock Space complexity of random formulae in resolution.
\newblock {\em Random Structures and Algorithms}, 23(1):92\nobreakdash--109,
  August 2003.
\newblock Preliminary version appeared in \emph{CCC~'01}.

\bibitem[BG13]{BG13Pseudopartitions}
Ilario Bonacina and Nicola Galesi.
\newblock Pseudo-partitions, transversality and locality: A combinatorial
  characterization for the space measure in algebraic proof systems.
\newblock In {\em Proc.\ 4th Innovations in Theoretical Computer Science
  Conference ({ITCS}~'13)}, January 2013.

\bibitem[BGL13]{BeyersdorffGalesiLauria2013parameterized}
Olaf Beyersdorff, Nicola Galesi, and Massimo Lauria.
\newblock Parameterized complexity of {DPLL} search procedures.
\newblock {\em ACM Transactions on Computational Logic}, 14(3):20, August 2013.
\newblock Preliminary version appeared in \emph{SAT~'11}.

\bibitem[BGLR12]{BeyersdoffEtAl2012Parameterized}
Olaf Beyersdorff, Nicola Galesi, Massimo Lauria, and Alexander~A. Razborov.
\newblock Parameterized bounded-depth {F}rege is not optimal.
\newblock {\em ACM Transactions on Computation Theory},
  4:7:1\nobreakdash--7:16, September 2012.
\newblock Preliminary version appeared in \emph{ICALP~'11}.

\bibitem[Bla37]{B37Canonical}
Archie Blake.
\newblock {\em Canonical Expressions in {B}oolean Algebra}.
\newblock PhD thesis, University of Chicago, 1937.

\bibitem[BN08]{BN08ShortProofs}
Eli {Ben-Sasson} and Jakob Nordström.
\newblock Short proofs may be spacious: An optimal separation of space and
  length in resolution.
\newblock In {\em Proc.\ 49th Annual {IEEE} Symposium on Foundations of
  Computer Science ({FOCS}~'08)}, pages 709\nobreakdash--718, October 2008.

\bibitem[BN11]{BN11UnderstandingSpace}
Eli {Ben-Sasson} and Jakob Nordström.
\newblock Understanding space in proof complexity: Separations and trade-offs
  via substitutions.
\newblock In {\em Proc.\ 2nd Symposium on Innovations in Computer Science
  ({ICS}~'11)}, pages 401\nobreakdash--416, January 2011.

\bibitem[BNT13]{BNT12SomeTradeoffs}
Chris Beck, Jakob Nordström, and Bangsheng Tang.
\newblock Some trade-off results for polynomial calculus.
\newblock In {\em Proc.\ 45th Annual ACM Symposium on Theory of Computing
  ({STOC}~'13)}, pages 813\nobreakdash--822, May 2013.

\bibitem[BPS07]{BPS07LS}
Paul Beame, Toniann Pitassi, and Nathan Segerlind.
\newblock Lower bounds for {L}ovász--{S}chrijver systems and beyond follow
  from multiparty communication complexity.
\newblock {\em SIAM Journal on Computing}, 37(3):845\nobreakdash--869, 2007.
\newblock Preliminary version appeared in \emph{ICALP~'05}.

\bibitem[BS97]{BS97UsingCSP}
Roberto~J. {Bayardo~Jr.} and Robert Schrag.
\newblock Using {CSP} look-back techniques to solve real-world {SAT} instances.
\newblock In {\em Proc.\ 14th National Conference on Artificial Intelligence
  ({AAAI~'97})}, pages 203\nobreakdash--208, July 1997.

\bibitem[BW01]{BW01ShortProofs}
Eli {Ben-Sasson} and Avi Wigderson.
\newblock Short proofs are narrow---resolution made simple.
\newblock {\em Journal of the ACM}, 48(2):149\nobreakdash--169, March 2001.
\newblock Preliminary version appeared in \emph{STOC~'99}.

\bibitem[CCT87]{CCT87ComplexityCP}
William Cook, Collette~Rene Coullard, and Gyorgy Turán.
\newblock On the complexity of cutting-plane proofs.
\newblock {\em Discrete Applied Mathematics}, 18(1):25\nobreakdash--38,
  November 1987.

\bibitem[CEI96]{CEI96Groebner}
Matthew Clegg, Jeffery Edmonds, and Russell Impagliazzo.
\newblock Using the {Groebner} basis algorithm to find proofs of
  unsatisfiability.
\newblock In {\em Proc.\ 28th Annual {ACM} Symposium on Theory of Computing
  ({STOC}~'96)}, pages 174\nobreakdash--183, May 1996.

\bibitem[Chv73]{Chvatal73EdmondPolytopes}
Va{\v{s}}ek Chv{\'a}tal.
\newblock Edmond polytopes and a hierarchy of combinatorial problems.
\newblock {\em Discrete Mathematics}, 4(1):305\nobreakdash--337, 1973.

\bibitem[CR79]{CR79Relative}
Stephen~A. Cook and Robert Reckhow.
\newblock The relative efficiency of propositional proof systems.
\newblock {\em Journal of Symbolic Logic}, 44(1):36\nobreakdash--50, March
  1979.

\bibitem[CS88]{CS88ManyHard}
Va{\v{s}}ek Chv{\'a}tal and Endre Szemer{\'e}di.
\newblock Many hard examples for resolution.
\newblock {\em Journal of the ACM}, 35(4):759\nobreakdash--768, October 1988.

\bibitem[CT12]{ChlamtacTulsiani2012Convex}
Eden Chlamt{\'a}{\v{c}} and Madhur Tulsiani.
\newblock Convex relaxations and integrality gaps.
\newblock In Miguel~F. Anjos and Jean~B. Lasserre, editors, {\em Handbook on
  Semidefinite, Conic and Polynomial Optimization}, pages 139\nobreakdash--169.
  Springer, 2012.

\bibitem[DMR09]{DantchevMartinRhodes2009Tight}
Stefan~S. Dantchev, Barnaby Martin, and Martin Rhodes.
\newblock Tight rank lower bounds for the {S}herali-{A}dams proof system.
\newblock {\em Theoretical Computer Science},
  410(21\nobreakdash--23):2054\nobreakdash--2063, May 2009.

\bibitem[ET01]{ET01SpaceBounds}
Juan~Luis Esteban and Jacobo Torán.
\newblock Space bounds for resolution.
\newblock {\em Information and Computation}, 171(1):84\nobreakdash--97, 2001.
\newblock Preliminary versions of these results appeared in \emph{STACS~'99}
  and \emph{CSL~'99}.

\bibitem[FLM{\etalchar{+}}13]{FLMNV13TowardsUnderstandingPC}
Yuval Filmus, Massimo Lauria, Mladen Mik\v{s}a, Jakob Nordström, and Marc
  Vinyals.
\newblock Towards an understanding of polynomial calculus: New separations and
  lower bounds (extended abstract).
\newblock In {\em Proc.\ 40th International Colloquium on Automata, Languages
  and Programming ({ICALP}~'13)}, volume 7965 of {\em Lecture Notes in Computer
  Science}, pages 437\nobreakdash--448. Springer, July 2013.

\bibitem[FLN{\etalchar{+}}12]{FLNTZ12SpaceCplx}
Yuval Filmus, Massimo Lauria, Jakob Nordström, Neil Thapen, and Noga Ron-Zewi.
\newblock Space complexity in polynomial calculus.
\newblock In {\em Proc.\ 27th Annual {IEEE} Conference on Computational
  Complexity ({CCC}~'12)}, pages 334\nobreakdash--344, June 2012.

\bibitem[FSS84]{Furst1984ParityCircuits}
Merrick Furst, James~B Saxe, and Michael Sipser.
\newblock Parity, circuits, and the polynomial-time hierarchy.
\newblock {\em Mathematical Systems Theory}, 17(1):13\nobreakdash--27, 1984.

\bibitem[Gal77]{Galil77Resolution}
Zvi Galil.
\newblock On resolution with clauses of bounded size.
\newblock {\em SIAM Journal on Computing}, 6(3):444\nobreakdash--459, 1977.

\bibitem[GHP02]{GHP02Complexity}
Dima Grigoriev, Edward~A. Hirsch, and Dmitrii~V. Pasechnik.
\newblock Complexity of semialgebraic proofs.
\newblock {\em Moscow Mathematical Journal}, 2(4):647\nobreakdash--679, 2002.

\bibitem[Gom63]{Gomory63AlgorithmIntegerSolutions}
Ralph~E. Gomory.
\newblock An algorithm for integer solutions of linear programs.
\newblock In R.L. Graves and P.~Wolfe, editors, {\em Recent Advances in
  Mathematical Programming}, pages 269\nobreakdash--302. McGraw-Hill, New York,
  1963.

\bibitem[GP14]{GP13CommunicationLowerBounds}
Mika Göös and Toniann Pitassi.
\newblock Communication lower bounds via critical block sensitivity.
\newblock In {\em Proc.\ 46th Annual ACM Symposium on Theory of Computing
  ({STOC}~'14)}, pages 847\nobreakdash--856, May 2014.

\bibitem[Gri01]{Gri01LinearLowerBound}
Dima Grigoriev.
\newblock Linear lower bound on degrees of {P}ositivstellensatz calculus proofs
  for the parity.
\newblock {\em Theoretical Computer Science}, 259(1--2):613\nobreakdash--622,
  May 2001.

\bibitem[GV01]{GV01Complexity}
Dima Grigoriev and Nicolai Vorobjov.
\newblock Complexity of null- and positivstellensatz proofs.
\newblock {\em Annals of Pure and Applied Logic},
  113(1\nobreakdash--3):153\nobreakdash--160, December 2001.

\bibitem[Hak85]{H85Intractability}
Armin Haken.
\newblock The intractability of resolution.
\newblock {\em Theoretical Computer Science},
  39(2\nobreakdash-3):297\nobreakdash--308, August 1985.

\bibitem[H{\aa}s87]{Hastad87Thesis}
Johan H{\aa}stad.
\newblock {\em Computational Limitations of Small-depth Circuits}.
\newblock PhD thesis, Massachussetts Institute of Technology, 1987.

\bibitem[IPS99]{IPS99LowerBounds}
Russell Impagliazzo, Pavel Pudl{\'a}k, and Jiri Sgall.
\newblock Lower bounds for the polynomial calculus and the {G}r{\"o}bner basis
  algorithm.
\newblock {\em Computational Complexity}, 8(2):127\nobreakdash--144, 1999.

\bibitem[Las01]{Lasserre2001Explicit}
Jean~B. Lasserre.
\newblock An explicit exact {SDP} relaxation for nonlinear 0-1 programs.
\newblock In {\em Proc.\ 8th International Conference on Integer Programming
  and Combinatorial Optimization}, volume 2081 of {\em Lecture Notes in
  Computer Science}, pages 293\nobreakdash--303. Springer, June 2001.

\bibitem[Lau01]{Laurent2001Comparison}
Monique Laurent.
\newblock A comparison of the {S}herali-{A}dams, {L}ov{\'a}sz-{S}chrijver and
  {L}asserre relaxations for 0-1 programming.
\newblock {\em Mathematics of Operations Research}, 28:470\nobreakdash--496,
  2001.

\bibitem[LS91]{LovaszSchrijver1991Cones}
L{\'a}szl{\'o} Lov{\'a}sz and Alexander Schrijver.
\newblock Cones of matrices and set-functions and 0-1 optimization.
\newblock {\em SIAM Journal on Optimization}, 1(2):166--190, 1991.

\bibitem[MMZ{\etalchar{+}}01]{MMZZM01Engineering}
Matthew~W. Moskewicz, Conor~F. Madigan, Ying Zhao, Lintao Zhang, and Sharad
  Malik.
\newblock Chaff: {E}ngineering an efficient {SAT} solver.
\newblock In {\em Proc.\ 38th Design Automation Conference (DAC~'01)}, pages
  530\nobreakdash--535, June 2001.

\bibitem[MS99]{SS99Grasp}
Jo{\~a}o~P. {Marques-Silva} and Karem~A. Sakallah.
\newblock {GRASP}: A search algorithm for propositional satisfiability.
\newblock {\em IEEE Transactions on Computers}, 48(5):506\nobreakdash--521, May
  1999.
\newblock Preliminary version appeared in \emph{ICCAD~'96}.

\bibitem[OZ13]{OZ13Approximability}
Ryan O'Donnell and Yuan Zhou.
\newblock Approximability and proof complexity.
\newblock In {\em Proc.\ 24th Annual ACM-SIAM Symposium on Discrete Algorithms
  (SODA~'13)}, pages 1537\nobreakdash--1556, January 2013.

\bibitem[Par00]{Parrilo00Thesis}
Pablo~A. Parrilo.
\newblock {\em Structured Semidefinite Programs and Semialgebraic Geometry
  Methods in Robustness and Optimization}.
\newblock PhD thesis, California Institute of Technology, May 2000.

\bibitem[Pud99]{Pud99Complexity}
Pavel Pudl\'{a}k.
\newblock On the complexity of propositional calculus.
\newblock In S.~Barry Cooper and John~K. Truss, editors, {\em Sets and Proofs},
  volume 258 of {\em London Mathematical Society Lecture Note Series}, pages
  197\nobreakdash--218. Cambridge University Press, 1999.

\bibitem[Pud00]{Pudlak00ProofsAsGames}
Pavel Pudl\'{a}k.
\newblock Proofs as games.
\newblock {\em American Mathematical Monthly}, pages 541\nobreakdash--550,
  2000.

\bibitem[Raz98]{Razborov98LowerBound}
Alexander~A. Razborov.
\newblock Lower bounds for the polynomial calculus.
\newblock {\em Computational Complexity}, 7(4):291\nobreakdash--324, December
  1998.

\bibitem[SA90]{SheraliAdams1990Hierarchy}
Hanif~D. Sherali and Warren~P. Adams.
\newblock A hierarchy of relaxations between the continuous and convex hull
  representations for zero-one programming problems.
\newblock {\em SIAM Journal on Discrete Mathematics}, 3:411\nobreakdash--430,
  1990.

\bibitem[Sch08]{Sch08LinearLevel}
Grant Schoenebeck.
\newblock Linear level {L}asserre lower bounds for certain $k$-{CSPs}.
\newblock In {\em Proc.\ 49th Annual {IEEE} Symposium on Foundations of
  Computer Science ({FOCS}~'08)}, pages 593\nobreakdash--602, October 2008.

\bibitem[Urq87]{U87HardExamples}
Alasdair Urquhart.
\newblock Hard examples for resolution.
\newblock {\em Journal of the ACM}, 34(1):209\nobreakdash--219, January 1987.

\end{thebibliography}
\bibliographystyle{alpha}

\end{document}